\documentclass{CSML}
\pdfoutput=1

\usepackage{lastpage}

\lmcsheading{}{1--\pageref{LastPage}}{}{}%
{May~31,~2016}{Dec.~27,~2017}{}

\usepackage{hyperref}
\hypersetup{hidelinks}

\usepackage{amssymb}
\usepackage[english]{babel}
\usepackage{amsmath}
\usepackage{amsfonts}
\usepackage[latin1]{inputenc}
\usepackage{latexsym}
\usepackage{graphics}
\usepackage{epic}
\usepackage{graphicx}

\begin{document}

\newtheorem{defn}{Definition}
\newtheorem{definition}[defn]{Definition}
\newtheorem{exmp}{Example}
\newtheorem{example}[exmp]{Example}
\newcommand{\tuple}[1]{\ensuremath{\langle #1 \rangle}}
\newcommand{\cpt}{\ensuremath{\operatorname{cpt}}}
\newcommand{\ignore}[1]{}

\title[The Power of AC for  Partially-Ordered Forbidden Patterns]{The Power of Arc Consistency for CSPs  
Defined by Partially-Ordered Forbidden Patterns}

\thanks{An extended abstract of this article appeared in Proc. of LICS'16~\cite{cz16:lics}.}

\author[Cooper and \v{Z}ivn\'y]{Martin C. Cooper$^1$}
\address{$^1$IRIT, University of Toulouse III, France}
\email{cooper@irit.fr}
\thanks{$^1$Supported by EPSRC grant EP/L021226/1}

\author[]{Stanislav \v{Z}ivn\'y$^2$}
\address{$^2$Dept. of Computer Science, University of Oxford, UK}
\email{standa.zivny@cs.ox.ac.uk}
\thanks{$^2$Supported by EPSRC grant EP/L021226/1 and a Royal Society University
Research Fellowship. Part of this work was done while the second author was
visiting the Simons Institute for the Theory of Computing at UC Berkeley. This
project has received funding from the European Research Council (ERC) under the
European Union's Horizon 2020 research and innovation programme (grant agreement
No 714532). The paper reflects only the authors' views and not the views of the
ERC or the European Commission. The European Union is not liable for any use
that may be made of the information contained therein} 

\keywords{arc consistency, constraint satisfaction problem, forbidden pattern, tractability}
\subjclass{Logic and constraint programming} 

\begin{abstract}
Characterising tractable fragments of the constraint satisfaction
problem (CSP) is an important challenge in theoretical computer
science and artificial intelligence. Forbidding patterns (generic
sub-instances) provides a means of defining CSP fragments which are
neither exclusively language-based nor exclusively structure-based.
It is known that the class of binary CSP instances in which the
broken-triangle pattern (BTP) does not occur, a class which includes
all tree-structured instances, are decided by arc consistency (AC), a
ubiquitous reduction operation in constraint solvers. We provide a
characterisation of simple partially-ordered forbidden patterns which
have this AC-solvability property. It turns out that BTP is just one
of five such AC-solvable patterns. The four other patterns allow us
to exhibit new tractable classes.
\end{abstract}

\maketitle

\thicklines \setlength{\unitlength}{1pt}
\newsavebox{\varthree}
\savebox{\varthree}(20,60){
\begin{picture}(20,60)(0,0)
\put(10,30){\oval(18,58)} \put(10,10){\makebox(0,0){$\bullet$}}
\put(10,30){\makebox(0,0){$\bullet$}} \put(10,50){\makebox(0,0){$\bullet$}}
\end{picture}
}
\newsavebox{\vartwo}
\savebox{\vartwo}(20,40){
\begin{picture}(20,40)(0,0)
\put(10,20){\oval(18,38)} \put(10,10){\makebox(0,0){$\bullet$}}
\put(10,30){\makebox(0,0){$\bullet$}}
\end{picture}
}
\newsavebox{\varone}
\savebox{\varone}(20,40){
\begin{picture}(20,40)(0,0)
\put(10,20){\oval(18,28)} \put(10,20){\makebox(0,0){$\bullet$}}
\end{picture}
} \setlength{\unitlength}{1pt}

\section{Introduction}

The \emph{constraint satisfaction problem} (CSP) provides a common framework for
many theoretical problems in computer science as well as for many real-life
applications. A CSP instance consists of a number of variables, a domain, and
constraints imposed on the variables with the goal to determine whether the
instance is satisfiable, that is, whether there is an assignment of domain
values to all the variables in such a way that all the constraints are
satisfied.

The general CSP is NP-complete and thus a major research direction is to
identify restrictions on the CSP that render the problem \emph{tractable}, that
is, solvable in polynomial time.

A substantial body of work exists from the past two decades on applications of
universal algebra in the computational complexity of and the applicability of
algorithmic paradigms to CSPs. Moreover, a number of celebrated results have
been obtained through this method; see~\cite{Barto14:survey} for a recent
survey. However, the algebraic approach to CSPs is only applicable to
\emph{language-based} CSPs, that is, classes of CSPs defined by the set of
allowed constraint relations but with arbitrary interactions of the constraint
scopes. For instance, the well-known 2-SAT problem is a class of language-based
CSPs on the Boolean domain $\{0,1\}$ with all constraint relations being
\emph{binary}, that is, of arity at most two.

On the other side of the spectrum are \emph{structure-based} CSPs, that is,
classes of CSPs defined by the allowed interactions of the constraint scopes but
with arbitrary constraint relations.
Here the methods that have been successfully used to establish complete
complexity classifications come from graph
theory~\cite{Grohe07:jacm,Marx13:jacm}.

The complexity of CSPs that are neither language-based nor structure-based, and
thus are often called \emph{hybrid} CSPs, is much less understood;
see~\cite{Carbonnel15:constraints,cz17:survey} for recent surveys.
One approach to hybrid CSPs that has been rather successful studies the
classes of CSPs defined by \emph{forbidden patterns}; that is, by forbidding
certain generic subinstances. The focus of this paper is on such CSPs.
We remark that we deal with \emph{binary} CSPs but, unlike in most papers on (the
algebraic approach to) language-based CSPs, the domain is \emph{not} fixed and
is part of the input.

An example of a pattern is given in Figure~\ref{fig:btpmc}(a) on
page~\pageref{fig:btpmc}. This is the
so-called \emph{broken triangle} pattern (BTP)~\cite{cjs10:aij-btp} (a formal
definition is given in Section~\ref{sec:prelim}). BTP is an example of a \emph{tractable} pattern,
which means that the class of all binary CSP instances in which BTP does not occur is
solvable in polynomial time. The class of CSP instances defined by forbidding
BTP includes, for instance, 
all tree-structured binary
CSPs~\cite{cjs10:aij-btp}. There are several generalisations of BTP, for
instance, to quantified CSPs~\cite{Gao11:aaai}, to existential
patterns~\cite{ccez15:jcss}, to patterns on non-binary constraints~\cite{cooper14:cp-broken}, and
other classes~\cite{Naanaa13:jtai,Cooper15:k-btp}.

The framework of forbidden patterns is general enough to capture language-based
CSPs in terms of their polymorphisms. 
If $\Gamma$ is a finite set of (binary) constraint relations, then CSP($\Gamma$)
is the set of instances whose constraint relations all belong to $\Gamma$. It is
well known that a necessary condition for CSP($\Gamma$) to be tractable
(assuming P $\neq$ NP) is that $\Gamma$ has a non-trivial
polymorphism~\cite{Bulatov05:classifying}. For any given a polymorphism $f$, the
framework of forbidden patterns is general enough to capture the class of
(binary) CSP instances CSP($\Gamma_f$), where $\Gamma_f$ is the set of (binary)
constraint relations closed under $f$, provided patterns can be enriched by the
function $f$ (or an equivalent relation) on domain elements.
For instance, the pattern in Figure~\ref{fig:btpmc}(b)  on
page~\pageref{fig:btpmc} captures the notion of
binary relations that are max-closed~\cite{Jeavons95:maxclosed}. 

Surprisingly, there are essentially only two classes of algorithms (and their
combinations) known for establishing tractability of CSPs. These are, firstly, a
generalisation of Gaussian elimination~\cite{Bulatov06:maltsev,Dalmau06:gen},
whose applicability for language-based CSPs is known~\cite{Idziak10:siam}, and,
secondly, problems solvable by \emph{local consistency methods}, which
originated in artificial intelligence; see references
in~\cite{Rossi06:handbook}. The latter can be defined in many equivalent ways
including pebble games, Datalog, treewidth, and proof
complexity~\cite{Feder98:monotone}.
Intuitively, a class of CSP instances is solvable by $k$-consistency if
unsatisfiable instances can always be refuted while only keeping partial
solutions of size $k$ ``in memory''. For instance, the 2-SAT problem is solvable
by local consistency methods.

For structure-based CSPs, the power of consistency methods is well
understood: a class of structures can be solved by $k$-consistency if and only if the treewidth
(modulo homomorphic equivalence) is at most $k$~\cite{Atserias07:power}.
Consequently, consistency  methods solve all tractable cases of
structurally-restricted bounded-arity CSPs~\cite{Grohe07:jacm}. For
language-restricted CSPs, the power of consistency methods has only recently
been characterised~\cite{Barto14:jacm,Bulatov09:width}.

\subsection*{Contributions}

Our ultimate goal is to understand the power of local consistency
methods for hybrid CSPs. On this quest, we focus in this article on
the power of the first level of local consistency, known as \emph{arc
consistency} (AC), for classes of binary hybrid CSPs defined by
forbidden (partially-ordered) patterns.

The class of CSPs defined by forbidding BTP from
Figure~\ref{fig:btpmc}(a) on page~\pageref{fig:btpmc} is in fact solvable by AC. But as it turns out,
BTP is not the only pattern with this property.

As our main contribution, we give, in Theorem~\ref{thm:order}, a
\emph{complete characterisation} of so-called simple
partially-ordered forbidden patterns which have this AC-solvability
property. Here the partial orders are on variables and domain values.
It turns out that BTP is just one of five such AC-solvable patterns.
The four other patterns allow us to exhibit new tractable classes,
one of which in particular we expect to lead to new applications
since it defines a strict generalisation of binary max-closed
constraints which have already found applications in computer
vision~\cite{Cooper99:lines} and temporal
reasoning~\cite{Dechter91:temporal}. We also provide results on the
associated meta problem of deciding whether a CSP instance falls into
one of these new tractable classes.

Given that AC is the first level of local consistency
methods\footnote{In some AI literature AC is the second level, the
first being \emph{node consistency}~\cite{Rossi06:handbook}. AC is
also the first level for \emph{relational
width}~\cite{Bulatov06:ja}.} and is implemented in \emph{all}
constraint solvers, an understanding of the power of AC is paramount.
We note that focusing on classes of CSPs defined by forbidden
patterns is very natural as AC \emph{cannot} introduce forbidden
patterns because pattern occurrence is defined
by the presence of some compatibility and incompatibility pairs
between values, and by definition, such pairs cannot be added when
values are removed. While simple patterns do not cover all partially-ordered
patterns it is a natural, interesting, and broad enough concept that
covers BTP and four other novel and non-trivial tractable classes. We
expect our results and techniques to be used in future work on the
power of AC.

\subsection*{Related work}

Computational complexity classifications have been obtained for binary CSPs
defined by forbidden negative patterns (i.e., only pairwise incompatible
assignments are specified)~\cite{cccms12:jair} and for binary CSPs defined by
patterns on 2 constraints~\cite{Cooper15:dam}. Moreover, (generalisations of)
forbidden patterns have been studied in the context of variable and 
 value elimination rules~\cite{ccez15:jcss}. Finally, the idea of forbidding patterns
as topological minors has recently been investigated~\cite{ccjz15:ijcai}.

\cite{Kolmogorov15:arxiv,Takhanov15:arxiv} recently considered the possible
extensions of the algebraic approach from the language to the hybrid setting.

The power of the valued version of AC~\cite{Cooper10:osac} has recently been
characterised~\cite{ktz15:sicomp}. Moreover, the valued version of AC
is known to solve all tractable finite-valued language-based
CSPs~\cite{tz16:jacm}.

\section{Preliminaries}
\label{sec:prelim}

\subsection{CSPs and patterns}
\label{sec:patt}

A pattern can be seen as a generalisation of the concept of a binary CSP
instance that leaves the consistency of some assignments to pairs of variables
undefined.

\begin{definition}
A \emph{pattern} is a four-tuple \tuple{X, D, A, \cpt} where:
\begin{itemize}
\item $X$ is a finite set of \emph{variables};
\item $D$ is a finite set of \emph{values};
\item $A \subseteq X \times D$ is the set of possible variable-value assignments
called \emph{points};
the \emph{domain} of $x \in X$ is its non-empty set $D(x)$
of possible  values: $D(x) = \{a \in D \mid \tuple{x,a} \in
A\}$;
\item $\cpt$ is a partial \emph{compatibility function} from the set
of unordered pairs of points \[\{\{\tuple{x,a},\tuple{y,b}\} \mid x
\neq y\}\] to $\{{\tt TRUE},{\tt FALSE}\}$. If
$\cpt(\tuple{x,a},\tuple{y,b})$ $=$ {\tt TRUE} (resp., {\tt FALSE})
we say that \tuple{x,a} and \tuple{y,b} are \emph{compatible} (resp.,
\emph{incompatible}). For simplicity, we write $\cpt(p,q)$ for
$\cpt(\{p,q\})$.
\end{itemize}
\end{definition}

We will use a simple figurative drawing for patterns. Each variable will be
drawn as an oval containing dots for each of its possible points. Pairs in
the domain of the function $\cpt$ will be represented by lines between points:
solid lines (called \emph{positive}) for compatibility and dashed lines (called \emph{negative}) for
incompatibility.

\begin{example}
The pattern in Figure~\ref{fig:lx}  on page~\pageref{fig:lx} is called LX. It consists of
three variables, five points, six positive edges, and two negative edges.
\end{example}

We refine patterns to give a definition of a CSP instance.

\begin{definition}
A \emph{binary CSP instance} $P$ is a pattern \tuple{X, D, A, \cpt}
where \cpt\ is a total function, {\it i.e.} the domain of \cpt\ is
precisely $\{\{\tuple{x,a},\tuple{y,b}\} \mid x \neq y$, $a \in D(x)$, $b \in D(y)\}$.
\begin{itemize}
\item The \emph{relation} $R_{x,y} \subseteq D(x) \times D(y)$
on $\tuple{x,y}$ is $\{\tuple{a,b}\mid \cpt(\tuple{x,a},\tuple{y,b})
= {\tt TRUE}\}$.

\item A \emph{partial solution} to $P$ on $Y \subseteq X$ is a
mapping $s: Y \to D$ where, for all $x \neq y \in Y$ we have
$\tuple{s(x), s(y)} \in R_{x,y}$.

\item A \emph{solution} to $P$ is a partial solution on $X$.

\end{itemize}
\end{definition}

For notational simplicity we have assumed that there is \emph{exactly one}
binary constraint between each pair of variables. In particular, this means that
the absence of a constraint between variables $x,y$ is modelled by a complete
relation $R_{x,y} = D(x) \times D(y)$ allowing every possible pair of
assignments to $x$ and $y$.
We say that there is a \emph{non-trivial} constraint on variables $x,y$ if
$R_{x,y} \neq D(x) \times D(y)$. We also use the simpler notation $R_{ij}$ for $R_{x_i,x_j}$.

The main focus of this paper is on ordered patterns, which additionally allow
for variable and value orders.

\begin{definition}
An \emph{ordered pattern} is a six-tuple \tuple{X, D, A, \cpt, <_X, <_D} where:
\begin{itemize}
\item \tuple{X,D,A,\cpt} is a pattern;
\item $<_X$ is a (possibly partial) strict order on $X$; and
\item $<_D$ is a (possibly partial) strict order on $D$.
\end{itemize}
\end{definition}

A pattern  $\tuple{X,D,A,\cpt}$ can be seen as an ordered pattern with empty variable and value orders, i.e.
$\tuple{X,D,A,\cpt,\emptyset,\emptyset}$.

Throughout the paper when we say ``pattern'' we implicitly mean
``ordered pattern'' and use the word ``unordered'' to emphasize, if
needed, that the pattern in question is not ordered.

We do not consider patterns with structure (such as equality or order) between elements in the domains
of {\em distinct} variables.

\begin{definition}
A pattern $P=\tuple{X,D,A,\cpt,<_X,<_D}$  is called \emph{basic} if
(1) $D(x)$ and $D(y)$ do not intersect for distinct
$x,y\in X$, and
(2) $<_D$ only contains pairs of elements
$\tuple{a,b}$ from the domain of the same variable, i.e., $a,b\in D(x)$ for some
$x\in X$.
\end{definition}

\begin{example}
The pattern shown in Figure~\ref{fig:btpmc}(a) 
is known as the \emph{broken triangle}
pattern (BTP)~\cite{cjs10:aij-btp}. BTP consists of three variables, four
points, three positive edges, two negative edges, $<_X=\{x<z,y<z\}$, and
$<_D=\emptyset$. Given a basic pattern, we can refer to a point
$\tuple{x,a}$ in the pattern as simply $a$ when the variable is clear from the context or a figure.
For instance, the point $\tuple{z,\gamma}$ in Figure~\ref{fig:btpmc}(a) 
can be referred to as $\gamma$.
\end{example}

\begin{example}
The pattern in Figure~\ref{fig:btpmc}(b) 
is the (binary) \emph{max-closed}
pattern (MC). The pattern MC consists of two variables, four points, two
positive edges, one negative edge, $<_X=\emptyset$, and
$<_D=\{\beta<\alpha,\delta<\gamma\}$.
MC (Figure~\ref{fig:btpmc}(b)) 
together with the extra structure $\alpha>\gamma$ is an example of a pattern that is not basic.
\end{example}

\begin{figure}   
\centering
\begin{picture}(400,100)(0,0)

\put(0,0){
\begin{picture}(170,100)(0,0)
\put(10,60){\usebox{\varone}} \put(50,10){\usebox{\varone}}
\put(90,50){\usebox{\vartwo}}
\dashline{5}(20,80)(100,80) \dashline{5}(60,30)(100,60)
\put(20,80){\line(4,-1){80}}
\put(20,80){\line(4,-5){40}} \put(60,30){\line(4,5){40}}
\put(60,10){\makebox(0,0){$x$}}
  \put(20,58){\makebox(0,0){$y$}}  \put(100,42){\makebox(0,0){$z$}}
  \put(140,60){\makebox{$x,y<z$}}
  \put(115,80){\makebox(0,0){$\gamma$}}
  \put(115,60){\makebox(0,0){$\delta$}}
  \put(15,10){\makebox(0,0){(a)}}
\end{picture}}

\put(220,0){
\begin{picture}(180,100)(0,0)
\put(10,50){\usebox{\vartwo}} \put(90,50){\usebox{\vartwo}}
\dashline{5}(20,80)(100,80)
\put(20,80){\line(4,-1){80}} \put(20,60){\line(4,1){80}}
  \put(20,42){\makebox(0,0){$x$}}  \put(100,42){\makebox(0,0){$y$}}
  \put(140,60){\makebox{\shortstack{$\alpha>\beta$ \\ $\gamma>\delta$}}}
  \put(5,80){\makebox(0,0){$\alpha$}}
  \put(5,60){\makebox(0,0){$\beta$}}
  \put(115,80){\makebox(0,0){$\gamma$}}
  \put(115,60){\makebox(0,0){$\delta$}}
  \put(15,10){\makebox(0,0){(b)}}
\end{picture}}

\end{picture}
\caption{Two AC-solvable patterns: (a) BTP (b) MC.}
\label{fig:btpmc}
\end{figure}
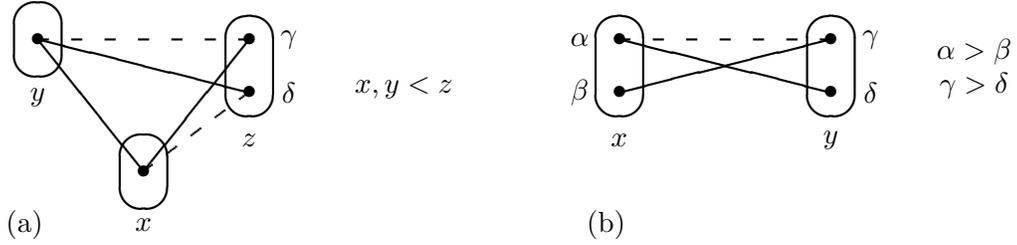  

For some of the proofs we will require patterns with additional structure,
namely, the ability to enforce certain points to be distinct.

\begin{definition}
A pattern with a disequality structure is a seven-tuple
\tuple{X, D, A, \cpt, <_X, <_D,\neq_D} where:
\begin{itemize}
\item \tuple{X,D,A,\cpt,<_X,<_D} is a pattern; and
\item $\neq_D\subseteq  D \times D$ 
is a set of pairs of domain values that are distinct. 
\end{itemize}
\end{definition}

\noindent An example of such a pattern is given in
Figure~\ref{fig:badpatterns}(b) on page~\pageref{fig:badpatterns}.

\subsection{Pattern occurrence}

Some points in a pattern are indistinguishable with respect to the rest of the
pattern.

\begin{definition} \label{def:mergeablepts}
Two points $a,b \in D(x)$ are \emph{mergeable} in a pattern
\mbox{\tuple{X,D,A,\cpt,<_X,<_D}} 
if there is no point $p \in A$ for which
$\cpt(\tuple{x,a},p)$, $\cpt(\tuple{x,b},p)$ are both defined and
$\cpt(\tuple{x,a},p) \neq \cpt(\tuple{x,b},p)$.
\end{definition}

\begin{definition}
A pattern is called \emph{unmergeable} if it does not contain any mergeable points.
\end{definition}

\begin{example}
The points $\gamma$ and $\delta$ in BTP (Figure~\ref{fig:btpmc}(a) on page~\pageref{fig:btpmc}) are not
mergeable since they have different compatibility with, for instance, the point
in variable $x$. The pattern LX (Figure~\ref{fig:lx} on page~\pageref{fig:lx}) is unmergeable.
\end{example}


Some points in a pattern (known as dangling points)
are redundant in arc-consistent CSP instances and hence can be removed.

\begin{definition}
Let $P=\tuple{X,D,A,\cpt,<_X,<_D}$ be a pattern. A point $p\in A$ is
called \emph{dangling} if it is not ordered by $<_D$ and if there is
at most one point $q\in A$ for which $\cpt(p,q)$ is defined, and
furthermore (if defined) $\cpt(p,q)= {\tt TRUE}$.
\end{definition}

\begin{example}
The point $\beta$ in the pattern MC (Figure~\ref{fig:btpmc}(b) on page~\pageref{fig:btpmc}) is not dangling since it is ordered.
\end{example}

In order to use (the absence of) patterns for AC-solvability we need
to define what we mean when we say that a pattern \emph{occurs} in a
CSP instance. We define the slightly more general notion of
occurrence of a pattern in another pattern, thus extending the
definitions for unordered patterns~\cite{Cooper15:dam}. Recall that a
CSP instance corresponds to the special case of a pattern whose
compatibility function is total. Essentially pattern $P$ occurs in
pattern $Q$ if $P$ is homomorphic to a subpattern of $Q$ via an
injective renaming of variables and a (possibly non-injective)
renaming of points~\cite{cccms12:jair}. We first make the observation
that dangling points in a pattern provide no useful information since
we assume that all CSP instances are arc consistent, which explains
why dangling points can be eliminated from patterns.

\begin{definition}  \label{def:simple}
A pattern is \emph{simple} if it is (i) basic, (ii) has no mergeable
points, and (iii) has no dangling points.
\end{definition}

From a given pattern it is possible to create an infinite number of
equivalent patterns by adding dangling points or by duplicating
points. By restricting our attention to simple patterns we avoid
having to consider such patterns.
We also discount non-basic patterns and mergeable patterns partly because of the
sheer number of cases to consider and partly because most of these patterns are
not very natural.

\begin{definition} \label{def:hom}
Let $P' = \tuple{X',D',A',\cpt',<_{X'},<_{D'}}$ and $P =
\tuple{X,D,A,\cpt,<_X,<_D}$ be two patterns. A \emph{homomorphism}
from $P'$ to $P$ is a mapping $f : A' \rightarrow A$ which satisfies:
\begin{itemize}
\item If $\cpt'(p,q)$ is defined, then $\cpt(f(p),f(q)) = \cpt'(p,q)$.
\item
The mapping $f_{var}: X' \rightarrow X$, given by $f_{var}(x') = x$
if $\exists a',a$ such that $f(\tuple{x',a'}) = \tuple{x,a}$, is
well-defined and injective.
\item If $x' <_{X'} y'$ then $f_{var}(x') <_{X} f_{var}(y')$.
\item If $a',b' \in D'(x')$, $a' <_{D'} b'$, $f(\tuple{x',a'})=\tuple{x,a}$
and  $f(\tuple{x',b'})=\tuple{x,b}$ then $a <_{D} b$.
\end{itemize}
\end{definition}

A \emph{consistent linear extension} of a pattern $P = \tuple{X,D,A,\cpt,<_X,<_D}$ is a  pattern
$P^t$ obtained from $P$
by first identifying any number of pairs of points $p,q$ which are both mergeable and incomparable 
(according to the transitive closure of $<_D$)
and then extending the orders on the variables and the domain values to total orders.

\begin{definition} \label{def:occ}
Let $P' = \tuple{X',D',A',\cpt',<_{X'},<_{D'}}$ and $P =
\tuple{X,D,A,\cpt,<_X,<_D}$ be two patterns.
$P'$ \emph{occurs} in $P$
if for all consistent linear extensions
$P^t$ of $P$, there is a homomorphism from $P'$ to $P^t$.
We use the notation CSP$_{\rm\overline{SP}}(P)$ to represent the set of
binary CSP instances in which the pattern $P$ does {\em not} occur.
\end{definition}


Thus, $I \in$ CSP$_{\rm\overline{SP}}(P)$ (the set of instances in which pattern $P$
does not occur as a subpattern) if one can find 
a total ordering of the variable-set and the domain of $I$ so that the pattern $P$ 
is not homomorphic to $I$ equipped with these orders.

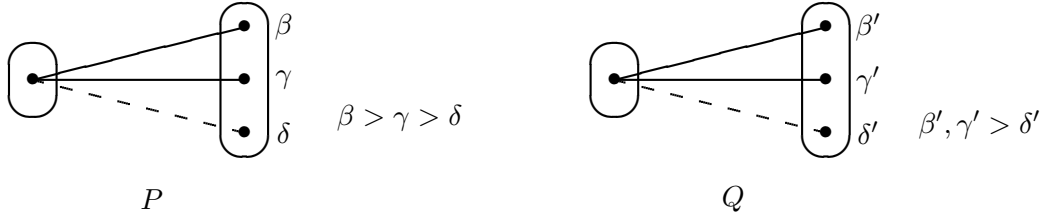
\begin{figure}  
\centering
\begin{picture}(400,80)(0,0)

\put(0,0){
\begin{picture}(170,85)(0,30)
\put(10,60){\usebox{\varone}} \put(90,50){\usebox{\varthree}}
\dashline{5}(20,80)(100,60) \put(20,80){\line(1,0){80}} \put(20,80){\line(4,1){80}}
  \put(135,63){\makebox{$\beta>\gamma>\delta$}}
  \put(115,100){\makebox(0,0){$\beta$}}
  \put(115,80){\makebox(0,0){$\gamma$}}
  \put(115,60){\makebox(0,0){$\delta$}}
  \put(65,35){\makebox(0,0){$P$}}
\end{picture}}

\put(220,0){
\begin{picture}(180,85)(0,30)
\put(10,60){\usebox{\varone}} \put(90,50){\usebox{\varthree}}
\dashline{5}(20,80)(100,60) \put(20,80){\line(1,0){80}}  \put(20,80){\line(4,1){80}}
  \put(135,60){\makebox{$\beta', \gamma' > \delta'$}}
  \put(116,100){\makebox(0,0){$\beta'$}}
  \put(116,80){\makebox(0,0){$\gamma'$}}
  \put(116,60){\makebox(0,0){$\delta'$}}
  \put(65,35){\makebox(0,0){$Q$}}
\end{picture}}

\end{picture}
\caption{Two patterns: $Q$ occurs in $P$, but $P$ does not occur in $Q$.}
\label{fig:occurrence}
\end{figure}  

The definition of occurrence extends in a natural way to patterns with a disequality structure.

\begin{rem}
We can add $a \neq b$ to a pattern, without changing its semantics,
when $b<_D a$ or
$\tuple{x,a}$ and $\tuple{x,b}$ are joined by negative and positive
edges to some point $\tuple{y,c}$. Furthermore, all domain values $a,b$ 
in an {\em instance} are distinct so there is an implicit $a \neq b$.
\end{rem}

\begin{example}
To illustrate the notion of consistent linear extension used in the definition of occurrence,
consider the two patterns $P,Q$ shown in Figure~\ref{fig:occurrence}.
For $P$ to occur in $Q$ we would require that there is a homomorphism from $P$ to $Q$ after any number of mergings
in the domain of $Q$ and any extension of the domain ordering of $Q$ to a total ordering: this effectively corresponds to 
the three cases $\beta'<\gamma'$, $\beta'=\gamma'$ and $\beta'>\gamma'$. Since there is not a homomorphism
from $P$ to the version of $Q$ in which $\beta'$ and $\gamma'$ have been merged, we can deduce that
$P$ does not occur in $Q$. On the other hand, $Q$ occurs in $P$ and, indeed, $Q$ occurs in
the pattern $V^>$ shown in Figure~\ref{fig:VV}(a) since $\beta'$ and $\gamma'$ can both map to $\gamma$.

As another example, the pattern MC (Figure~\ref{fig:btpmc}(b) on page~\pageref{fig:btpmc}) 
occurs in pattern EMC (Figure~\ref{fig:emc} on
page~\pageref{fig:emc})
but not in patterns BTP (Figure~\ref{fig:btpmc}(a) on
page~\pageref{fig:btpmc}) or BTX (Figure~\ref{fig:btx} on
page~\pageref{fig:btx}).
\end{example}

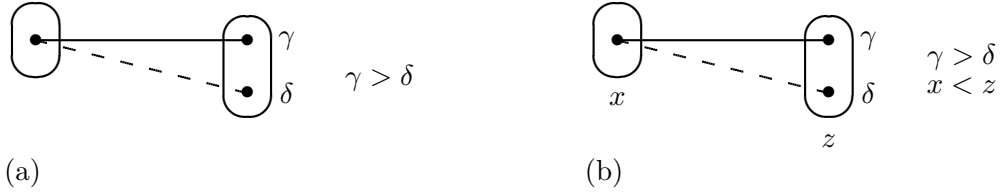
\begin{figure}  
\centering
\begin{picture}(400,85)(0,0)

\put(0,0){
\begin{picture}(170,80)(0,20)
\put(10,60){\usebox{\varone}} \put(90,50){\usebox{\vartwo}}
\dashline{5}(20,80)(100,60) \put(20,80){\line(1,0){80}}
  \put(137,63){\makebox{$\gamma>\delta$}}
  \put(115,80){\makebox(0,0){$\gamma$}}
  \put(115,60){\makebox(0,0){$\delta$}}
  \put(15,30){\makebox(0,0){(a)}}
\end{picture}}

\put(220,0){
\begin{picture}(180,80)(0,20)
\put(10,60){\usebox{\varone}} \put(90,50){\usebox{\vartwo}}
\dashline{5}(20,80)(100,60) \put(20,80){\line(1,0){80}}
  \put(20,58){\makebox(0,0){$x$}}  \put(100,42){\makebox(0,0){$z$}}
  \put(137,60){\makebox{\shortstack{$\gamma>\delta$ \\ $x<z$}}}
  \put(115,80){\makebox(0,0){$\gamma$}}
  \put(115,60){\makebox(0,0){$\delta$}}
  \put(15,30){\makebox(0,0){(b)}}
\end{picture}}

\end{picture}
\caption{Two subpatterns of EMC: (a) V$^>$ (b) V$^>_<$}
\label{fig:VV}
\end{figure}  

The main positive result in this paper is that CSP$_{\rm\overline{SP}}({\rm EMC})$ is solved by
arc consistency. In the following example we give generic examples of
instances in CSP$_{\rm\overline{SP}}({\rm EMC})$.

\begin{example}
CSP$_{\rm\overline{SP}}({\rm MC})$ is exactly the set of
binary CSP instances in which all constraints are max-closed~\cite{Jeavons95:maxclosed}.
Examples of such constraints are binary constraints of the form $y \geq F(x)$ for any function $F$
or constraints of the form $x \geq G(y)$ for any function $G$.

CSP$_{\rm\overline{SP}}({\rm EMC})$ includes all instances in CSP$_{\rm\overline{SP}}({\rm MC})$
but also other instances.
For example, if $I \in$ CSP$_{\rm\overline{SP}}({\rm MC})$, then we can create 
an instance $I' \in$ CSP$_{\rm\overline{SP}}({\rm EMC})$ which extends $I$ by adding 
some constraints which are not max-closed: in fact for each variable $y$ of $I$ we can add
new variables $z_y^1,\ldots,z_y^r$ in $I'$ with arbitrary constraints on $(y,z_y^i)$ ($i=1,\ldots,r$) provided that 
each $z_y^i$ ($i=1,\ldots,r$) is placed after $y$ in the variable ordering of $I'$ and each $z_y^i$
is constrained by no other variable. 

As another example of an instance $I''$  in CSP$_{\rm\overline{SP}}({\rm EMC})$,
again let $I \in$ CSP$_{\rm\overline{SP}}({\rm MC})$ and let $J \in$ CSP$_{\rm\overline{SP}}({\rm V}^>)$
where V$^>$ is the pattern shown in Figure~\ref{fig:VV}(a). Constraints in $J$ are of the form 
$x \leq F(y)$ where $F$ is an antitone function, i.e. $y_1 < y_2 \Rightarrow F(y_1) \geq F(y_2)$.
We combine $I$ and $J$ to create the instance $I''$. 
We first order the variables of $I,J$ so that all the variables of $I$ lie before
all the variables of $J$. To complete the description of $I''$, we add constraints between variables in $I$ and variables in $J$
so that the pattern $V^>_<$ (shown in Figure~\ref{fig:VV}(b)) does not occur with $x$ mapping
to a variable of $I$ and $z$ mapping to a variable of $J$. Such constraints are 
of the form $z \leq G(x)$ where $G$ is any function. The resulting instance $I''$
belongs to CSP$_{\rm\overline{SP}}({\rm EMC})$.
\end{example}

For a pattern $P$, we denote by unordered($P$) the underlying
unordered pattern, that is,
\[\mbox{unordered}(\tuple{X,D,A,\cpt,<_X,<_D})\ =\
\tuple{X,D,A,\cpt}.\]
For instance, the pattern unordered(BTP) is the
pattern from Figure~\ref{fig:btpmc}(a) on
page~\pageref{fig:btpmc} \emph{without} the structure
$x,y<z$.

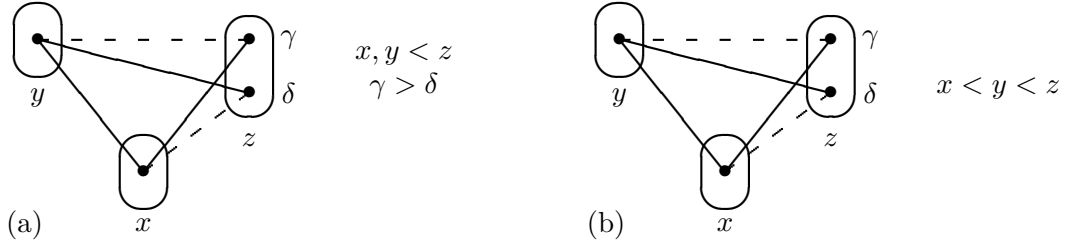
\begin{figure}   
\centering
\begin{picture}(400,110)(0,0)

\put(0,0){
\begin{picture}(170,100)(0,0)
\put(10,60){\usebox{\varone}} \put(50,10){\usebox{\varone}}
\put(90,50){\usebox{\vartwo}}
\dashline{5}(20,80)(100,80) \dashline{5}(60,30)(100,60)
\put(20,80){\line(4,-1){80}}
\put(20,80){\line(4,-5){40}} \put(60,30){\line(4,5){40}}
\put(60,10){\makebox(0,0){$x$}}
  \put(20,58){\makebox(0,0){$y$}}  \put(100,42){\makebox(0,0){$z$}}
  \put(140,60){\makebox{\shortstack{$x,y<z$ \\  $\gamma>\delta$}}}
  \put(115,80){\makebox(0,0){$\gamma$}}
  \put(115,60){\makebox(0,0){$\delta$}}
  \put(15,10){\makebox(0,0){(a)}}
\end{picture}}

\put(220,0){
\begin{picture}(170,100)(0,0)
\put(10,60){\usebox{\varone}} \put(50,10){\usebox{\varone}}
\put(90,50){\usebox{\vartwo}}
\dashline{5}(20,80)(100,80) \dashline{5}(60,30)(100,60)
\put(20,80){\line(4,-1){80}}
\put(20,80){\line(4,-5){40}} \put(60,30){\line(4,5){40}}
\put(60,10){\makebox(0,0){$x$}}
  \put(20,58){\makebox(0,0){$y$}}  \put(100,42){\makebox(0,0){$z$}}
  \put(140,60){\makebox{$x < y<z$}}
  \put(115,80){\makebox(0,0){$\gamma$}}
  \put(115,60){\makebox(0,0){$\delta$}}
  \put(15,10){\makebox(0,0){(b)}}
\end{picture}}

\end{picture}
\caption{Two equivalent versions of the broken triangle property:
forbidding the pattern (a) BTP$^{do}$ or forbidding the pattern (b) BTP$^{vo}$
defines the same class of instances.}
\label{fig:btp}
\end{figure}  

The occurrence relation between patterns is transitive.

\begin{lem} \label{lem:occ-transitive}
If $P$ occurs in $Q$ and $Q$ occurs in $R$, then $P$ occurs in $R$.
\end{lem}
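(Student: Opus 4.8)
The plan is to prove transitivity directly by composing the witnessing homomorphisms, taking care that the definition of occurrence quantifies universally over consistent linear extensions. First I would unpack what must be shown: given that $P$ occurs in $Q$ and $Q$ occurs in $R$, I must show that for every consistent linear extension $R^t$ of $R$ there is a homomorphism from $P$ to $R^t$. So fix an arbitrary consistent linear extension $R^t$ of $R$. Since $Q$ occurs in $R$, there is a homomorphism $g : A_Q \to A_{R^t}$ satisfying the four conditions of Definition~\ref{def:hom} (compatibility preservation, injectivity of the induced variable map, and preservation of the variable and value orders).

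The key step is then to produce, from this single homomorphism $g : Q \to R^t$, some consistent linear extension $Q^t$ of $Q$ together with a homomorphism $Q^t \to R^t$; for then applying the hypothesis that $P$ occurs in $Q$ to this particular $Q^t$ yields a homomorphism $h : P \to Q^t$, and the composition $g' \circ h : P \to R^t$ (where $g'$ is the extension of $g$ to $Q^t$) finishes the argument. The natural choice is to let $Q^t$ be obtained by merging exactly those pairs of points $p, q$ in $D_Q$ that $g$ identifies (these are necessarily mergeable in $Q$, since $g$ preserves all defined compatibilities, and we may assume incomparable by restricting to pairs not already ordered — or handle the ordered case by noting $g$ being order-preserving forces consistency), and then pulling back the total orders on variables and values of $R^t$ along $g$ to get total orders on $Q^t$. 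Because $g_{var}$ is injective and order-preserving into a total order, the pullback order on $X_Q$ is a total order extending $<_{X_Q}$; similarly on each $D_Q(x)$, after the merges the pullback of $<_{D_{R^t}}$ is a well-defined total order extending $<_{D_Q}$. Thus $Q^t$ is a genuine consistent linear extension of $Q$, and $g$ descends to an injective-on-variables, order-and-compatibility-preserving map $g' : Q^t \to R^t$, i.e.\ a homomorphism.

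The main obstacle I anticipate is the bookkeeping around consistent linear extensions rather than any deep idea: one must check that the set of point-pairs identified by $g$ can legitimately be used as the ``merging'' step in forming $Q^t$ (that these pairs are mergeable and that choosing them does not conflict with the transitive closure of $<_{D_Q}$), and that after these merges the pulled-back value order is genuinely a strict total order on each residual domain (antisymmetry could fail only if $g$ identified two $<_{D_Q}$-comparable points, which is impossible since $g$ is order-preserving and $<_{D_{R^t}}$ is strict). Once $Q^t$ and $g' : Q^t \to R^t$ are in hand, I would invoke the hypothesis ``$P$ occurs in $Q$'' at the extension $Q^t$ to obtain $h : P \to Q^t$, and then verify the routine fact that the composition of two homomorphisms is a homomorphism: compatibility is preserved since both maps preserve it, the composite variable map $g'_{var} \circ h_{var}$ is injective as a composition of injections, and the two order conditions follow by chaining the corresponding conditions for $h$ and $g'$. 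This gives a homomorphism $P \to R^t$ for the arbitrary $R^t$ we fixed, which is exactly the statement that $P$ occurs in $R$.
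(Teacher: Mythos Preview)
Your proposal is correct and follows essentially the same route as the paper's proof: fix a consistent linear extension $R^t$ of $R$, use the homomorphism $g:Q\to R^t$ to pull back total orders (and merge exactly the point-pairs $g$ identifies) so as to obtain a consistent linear extension $Q^t$ of $Q$ with an induced homomorphism $Q^t\to R^t$, then apply the hypothesis that $P$ occurs in $Q$ at $Q^t$ and compose. Your discussion of why the identified pairs are mergeable and incomparable, and why the pulled-back orders are total, is exactly the bookkeeping the paper carries out.
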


\begin{proof}
Suppose that $P$ occurs in $Q$ and $Q$ occurs in $R$. Since $Q$ occurs in $R$, for all consistent
linear extensions $R^c$ of $R$, there exists a homomorphism $h: Q \rightarrow R^c$.
We will first construct a consistent linear extension $Q^c$ of $Q$ based on $R^c$ and $h$.
The set of variables of $R^c$ is totally ordered (by definition of a consistent linear extension).
Since, by Definition~\ref{def:hom}, the mapping $h_{var}$ induced by $h$ on the variables of $Q$ to the variables of $R^c$
is injective, there exists a total ordering $<_{X_{Q^c}}$ of the variables $X_Q$ of $Q$
corresponding to the total ordering $<_{X_{R^c}}$ in $R^c$ of the variables $h_{var}(X_Q)$
(i.e. $x <_{X_{Q^c}} y$ if and only if $h_{var}(x) <_{X_{R^c}} h_{var}(y)$). 

Now, consider any variable
$x$ in the pattern $Q$ and any pair of points $a,b$ in the domain $D_Q(x)$ of $x$ in $Q$.
First, consider the case when $a,b$ are non-mergeable.
By Definition~\ref{def:mergeablepts} and Definition~\ref{def:hom}, the points $h(\langle x,a \rangle), h(\langle x,b \rangle)$ are
also non-mergeable. For each such pair $a,b$, define $a <_{D_{Q^c}} b$ if and only if 
$a' <_{D_{R^c}} b'$ where $h(\langle x,a \rangle) =\langle x',a' \rangle$, $h(\langle x,b \rangle) =\langle x',b' \rangle$
and $<_{D_{R^c}}$ is the order on the domain values of $R^c$. This order is defined for each pair of
non-mergeable values $a,b$ since $R_c$ is a consistent linear extension.
Secondly, consider the case when $a,b$ are mergeable. If $h(\langle x,a \rangle) = h(\langle x,b \rangle)$
or if $h(\langle x,a \rangle), h(\langle x,b \rangle)$ are merged in the consistent linear extension $R^c$,
then we merge $a, b$ in $Q^c$; otherwise we define $a <_{D_{Q^c}} b$ if and only if 
$a' <_{D_{R^c}} b'$ where $h(\langle x,a \rangle) =\langle x',a' \rangle$, $h(\langle x,b \rangle) =\langle x',b' \rangle$.

Let $Q_c$ be identical to the pattern $Q$ but with the variable and domain orders $<_{X_{Q^c}}$ and $<_{D_{Q^c}}$
as defined above. By construction, $Q_c$ is a consistent linear extension of $Q$
and there exists a homomorphism $h': Q_c \rightarrow R_c$.
Since $P$ occurs in $Q$, we know that there exists a homomorphism $g: P \rightarrow Q^c$.
The composition of $g$ and $h'$ is then a homomorphism from $P$ to $R^c$. Since $R^c$
was an arbitrary linear extension of $R$, by Definition~\ref{def:occ}, $P$ occurs in $R$.
\end{proof}

In the following two simple lemmas, which follow from the definitions, $P$,
$Q$, and $R$ are patterns and $I$ is an instance.

\begin{lem} \label{lem:occ-sup}
If $P$ occurs in $Q$ and $P$ does not occur in $I$, then $Q$ does not occur in
$I$, i.e. CSP$_{\overline{SP}}(P)$ $\subseteq$  CSP$_{\overline{SP}}(Q)$.
\end{lem}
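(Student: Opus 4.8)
The plan is to obtain this immediately from transitivity of occurrence (Lemma~\ref{lem:occ-transitive}) by contraposition. The one thing to notice first is that a CSP instance $I$ is, by definition, simply a pattern whose compatibility function happens to be total; hence $I$ is a legitimate third argument for Lemma~\ref{lem:occ-transitive}, with no change needed to that lemma.

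First I would argue the contrapositive of the implication. Suppose, for contradiction, that $Q$ occurs in $I$. By hypothesis $P$ occurs in $Q$, and we have just assumed that $Q$ occurs in $I$, so applying Lemma~\ref{lem:occ-transitive} with $R := I$ yields that $P$ occurs in $I$. This contradicts the assumption that $P$ does not occur in $I$. Therefore $Q$ does not occur in $I$, which is exactly the stated implication.

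For the reformulation as a set inclusion, I would just unwind Definition~\ref{def:occ}: by definition $I \in$ CSP$_{\overline{SP}}(P)$ precisely when $P$ does not occur in $I$, and similarly $I \in$ CSP$_{\overline{SP}}(Q)$ precisely when $Q$ does not occur in $I$. The implication proved above states that every instance belonging to CSP$_{\overline{SP}}(P)$ also belongs to CSP$_{\overline{SP}}(Q)$, i.e.\ CSP$_{\overline{SP}}(P) \subseteq$ CSP$_{\overline{SP}}(Q)$.

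There is essentially no obstacle here; the proof is a one-line deduction from the already-established transitivity result. The only subtlety worth flagging explicitly is the observation that an instance is a special case of a pattern, so that Lemma~\ref{lem:occ-transitive} can be invoked directly — everything else is a routine unfolding of the definitions of occurrence and of CSP$_{\overline{SP}}(\cdot)$.
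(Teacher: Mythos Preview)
Your proposal is correct and matches the paper's approach: the paper states that this lemma follows directly from the definitions (it is placed immediately after the transitivity lemma precisely so that it can be read as an immediate corollary), and your contrapositive argument via Lemma~\ref{lem:occ-transitive} is exactly the intended one-line deduction. Your explicit remark that an instance is a special case of a pattern is a helpful clarification the paper leaves implicit.
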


\begin{lem} \label{lem:occ-unordered}
For any pattern $P$, unordered($P$) occurs in $P$.
\end{lem}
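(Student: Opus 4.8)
The plan is to unwind both definitions and exhibit the identity-like map as the required homomorphism. Let $P = \tuple{X,D,A,\cpt}$ be an arbitrary pattern, so that $\mathrm{unordered}(P) = \tuple{X,D,A,\cpt}$ as well (they share the same underlying four-tuple, the only difference being that $\mathrm{unordered}(P)$ carries empty orders $<_X = <_D = \emptyset$). By Definition~\ref{def:occ}, to show $\mathrm{unordered}(P)$ occurs in $P$ we must show that for \emph{every} consistent linear extension $P^t$ of $P$ there is a homomorphism from $\mathrm{unordered}(P)$ to $P^t$.

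First I would fix such a $P^t$. By definition of consistent linear extension, $P^t$ is obtained from $P$ by merging some pairs of points that are mergeable and incomparable, and then totally ordering variables and values; crucially, this operation never changes $\cpt$ on pairs where it was already defined (merging mergeable points is exactly the operation that preserves all defined compatibilities, by Definition~\ref{def:mergeablepts}), and it never removes variables. Let $\pi : A \to A^t$ be the natural surjection sending each point of $P$ to its (possibly merged) image in $P^t$; on variables this induces the identity $X \to X$. I claim $\pi$ is a homomorphism from $\mathrm{unordered}(P)$ to $P^t$ in the sense of Definition~\ref{def:hom}. The four conditions: (i) if $\cpt(p,q)$ is defined in $\mathrm{unordered}(P)$ then $\cpt^t(\pi(p),\pi(q)) = \cpt(p,q)$ — this holds because merging mergeable points preserves defined compatibilities and extending the orders does not touch $\cpt$; (ii) the induced map on variables is the identity, hence well-defined and injective; (iii) and (iv) are the order-compatibility conditions, and they are \emph{vacuous} because $<_X$ and $<_D$ of $\mathrm{unordered}(P)$ are empty, so there are no pairs $x' <_{X'} y'$ or $a' <_{D'} b'$ to check.

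Since $P^t$ was arbitrary, Definition~\ref{def:occ} gives that $\mathrm{unordered}(P)$ occurs in $P$, completing the proof. I expect there to be essentially no obstacle here: the whole content is that erasing the orders can only make a pattern \emph{easier} to embed, and the orders of $\mathrm{unordered}(P)$ being empty makes conditions (iii)--(iv) of Definition~\ref{def:hom} trivially satisfied. The only point requiring a sentence of care is verifying that passing to a consistent linear extension preserves $\cpt$ on its (original) domain — but this is immediate from the definition of mergeable points, which is precisely designed to guarantee it. So the lemma is a routine consequence of the definitions, as the paper itself signals by grouping it with Lemma~\ref{lem:occ-sup}.
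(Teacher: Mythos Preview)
Your proposal is correct and matches the paper's treatment: the paper gives no explicit proof, simply noting that the lemma ``follow[s] from the definitions,'' and your argument is precisely the natural unpacking of Definitions~\ref{def:hom} and~\ref{def:occ} via the canonical surjection to an arbitrary consistent linear extension, with the order-preservation clauses vacuous because $\mathrm{unordered}(P)$ has empty orders. The only minor slip is notational---you write $P = \tuple{X,D,A,\cpt}$ when $P$ is an ordered pattern $\tuple{X,D,A,\cpt,<_X,<_D}$---but you immediately clarify this and it does not affect the argument.
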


\subsection{AC solvability}

Arc consistency (AC) is a fundamental concept for CSPs.

\begin{definition}
Let $I = \tuple{X,D,A,\cpt}$ be a CSP instance. A point
$\tuple{x,a} \in A$ is called \emph{arc consistent}
if, for all  variables $y \neq x$ in $X$ there is some point
$\tuple{y,b} \in A$ compatible  with \tuple{x,a}.

The CSP instance \tuple{X,D,A,\cpt} is called \emph{arc consistent}
if $A \neq \emptyset$ and every point in $A$ is arc consistent.
\end{definition}

Points that are not arc-consistent cannot be part of a solution so can
safely be removed. There are optimal $O(cd^2)$ algorithms for establishing arc
consistency which repeatedly remove such points~\cite{Bessiere:AC}, where $c$ is
the number of non-trivial constraints and $d$ the maximum domain size.
Algorithms establishing arc consistency are implemented in all constraint solvers.

AC is a \emph{decision procedure} for a class of CSP instances if for every
instance from the class, after establishing arc consistency, non-empty domains
for all variables guarantee the existence of a solution to the instance. (Note
that a solution can then be found without backtrack by maintaining AC during
search). 


\begin{definition}
A pattern $P$ is called \emph{AC-solvable} if AC is a decision procedure for
CSP$_{\overline{SP}}(P)$.
\end{definition}

The following lemma is a straightforward consequence of the definitions.

\begin{lem} \label{lem:not-ac}
A pattern  $P$ is not AC-solvable if and only if
there is an instance $I \in$ CSP$_{\overline{SP}}(P)$ that is arc consistent
and has no solution.
\end{lem}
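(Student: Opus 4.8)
The statement is essentially an unfolding of the definitions of \emph{arc consistent}, of ``AC is a decision procedure'', and of CSP$_{\overline{SP}}(P)$, so the plan is to verify the two implications directly and isolate the one place where a genuine observation is needed.

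For the easy direction (right to left), I would start from an arc-consistent instance $I \in$ CSP$_{\overline{SP}}(P)$ with no solution. Running an AC algorithm on $I$ deletes no point, since every point is already arc consistent; hence the domains stay non-empty (note that the definition of arc consistency already forces $D(x) \neq \emptyset$ for every $x$ once $A \neq \emptyset$ and every point is arc consistent). But $I$ has no solution, so the defining implication ``non-empty domains after establishing AC $\Rightarrow$ a solution exists'' fails on the instance $I$ of the class CSP$_{\overline{SP}}(P)$, and therefore $P$ is not AC-solvable.

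For the converse I would take a witness to non-AC-solvability: an instance $I_0 \in$ CSP$_{\overline{SP}}(P)$ on which establishing AC leaves all domains non-empty yet which has no solution, and let $I$ be the instance obtained from $I_0$ by establishing arc consistency, i.e. by repeatedly deleting points that are not arc consistent until a fixpoint is reached. I then check that $I$ has the three required properties. That $I$ is arc consistent is immediate: no domain of $I$ is empty by the choice of $I_0$, and no point of $I$ can still be deleted. That $I$ has no solution is the standard fact that deleting non-arc-consistent points does not change the solution set; I would record the one-line induction showing that a point used by some solution of $I_0$ is never deleted, since at the moment it would be deleted every other variable still holds the corresponding solution value, which is compatible with it, contradicting that the point is not arc consistent. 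Hence the solutions of $I$ are exactly those of $I_0$, of which there are none.

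The only step requiring an argument, and the one I would flag as the crux, is that $I \in$ CSP$_{\overline{SP}}(P)$, i.e. that establishing AC cannot create an occurrence of $P$ --- the ``AC cannot introduce forbidden patterns'' phenomenon noted in the introduction. Here I would use the definition of occurrence: since $I_0 \in$ CSP$_{\overline{SP}}(P)$, some consistent linear extension $I_0^t$ of $I_0$ admits no homomorphism from $P$; restricting its variable order and value order to the points surviving in $I$ gives a consistent linear extension $I^t$ of $I$, and since every surviving point keeps its compatibilities and its relative order, any homomorphism $P \to I^t$ would also be a homomorphism $P \to I_0^t$, which is impossible. Assembling the three properties yields an arc-consistent instance of CSP$_{\overline{SP}}(P)$ with no solution, so by definition $P$ is not AC-solvable, which completes the equivalence.
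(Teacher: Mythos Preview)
Your argument is correct. The paper itself gives no proof of this lemma, stating only that it is ``a straightforward consequence of the definitions''; you have simply written out what that unpacking looks like. The one place where something has to be checked---that establishing arc consistency cannot push an instance out of CSP$_{\overline{SP}}(P)$---is exactly the point the paper alludes to informally in the introduction (``AC cannot introduce forbidden patterns because pattern occurrence is defined by the presence of some compatibility and incompatibility pairs between values, and by definition, such pairs cannot be added when values are removed''), and your linear-extension restriction argument makes this precise.
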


The following lemma follows directly from Lemmas~\ref{lem:occ-sup} and \ref{lem:not-ac}.

\begin{lem} \label{lem:hered}
If $P$ occurs in $Q$ and $P$ is not AC-solvable, then $Q$ is not AC-solvable.
\end{lem}

As our main result we will, in Theorem~\ref{thm:order}, characterise
all simple patterns that are AC-solvable.

\subsection{Pattern symmetry and equivalence}

For an ordered pattern $P$, we denote by invDom($P$), invVar($P$)
the patterns obtained from $P$ by inversing
the domain order or the variable order, respectively.

\begin{lem} \label{lem:inv}
If $P$ is not AC-solvable, then neither is any of invDom($P$), invVar($P$) or
invDom(invVar($P$)).
\end{lem}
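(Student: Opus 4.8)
The plan is to show that each of the three operations---$\mathrm{invDom}$, $\mathrm{invVar}$, and their composition---preserves the property ``$P$ is not AC-solvable'' by transforming a witnessing instance. By Lemma~\ref{lem:not-ac}, it suffices to show: if there is an arc-consistent instance $I \in \mathrm{CSP}_{\overline{SP}}(P)$ with no solution, then there is an arc-consistent instance $I' \in \mathrm{CSP}_{\overline{SP}}(\mathrm{invDom}(P))$ (resp.\ $\mathrm{invVar}(P)$, resp.\ $\mathrm{invDom}(\mathrm{invVar}(P))$) with no solution. The natural candidate for $I'$ is simply $I$ itself but equipped with the reversed domain order (and/or reversed variable order) on its own variables and values. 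Since reversing a total order gives a total order, and since an instance has a total compatibility function and hence the same notions of solution and arc consistency regardless of any order we impose, $I'$ is still arc consistent and still has no solution---so the only thing to check is that $\mathrm{invDom}(P)$ does not occur in $I'$.

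The key step is the following observation about occurrence under order reversal. If $P^t$ ranges over the consistent linear extensions of $P$ (with total orders $<_X$, $<_D$), then the consistent linear extensions of $\mathrm{invDom}(P)$ are exactly the patterns $\mathrm{invDom}(P^t)$, because merging is unaffected by reversing the domain order (mergeability depends only on $\cpt$, and incomparability in the transitive closure of $<_D$ is the same as incomparability in its reverse), and reversing a total order again yields a total order; a symmetric statement holds for $\mathrm{invVar}$. Moreover, a mapping $f$ is a homomorphism from $P^t$ to a totally ordered instance $J$ if and only if $f$ is a homomorphism from $\mathrm{invDom}(P^t)$ to $\mathrm{invDom}(J)$: the first two clauses of Definition~\ref{def:hom} mention only $\cpt$ and $f_{\mathrm{var}}$ and are untouched, while the third and fourth clauses ($x' <_{X'} y' \Rightarrow f_{\mathrm{var}}(x') <_X f_{\mathrm{var}}(y')$ and the analogous domain clause) are preserved precisely because both the source and target orders have been reversed simultaneously. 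The same equivalence holds with $\mathrm{invVar}$ in place of $\mathrm{invDom}$, and hence for the composition.

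Putting these together: suppose for contradiction that $\mathrm{invDom}(P)$ occurs in $I'$, where $I'$ is $I$ with the reversed domain order. Then for the (unique, since $I'$ is an instance) total domain order of $I'$, there is a homomorphism $g$ from some consistent linear extension of $\mathrm{invDom}(P)$---which we may write $\mathrm{invDom}(P^t)$ for a consistent linear extension $P^t$ of $P$---into $I'$. By the equivalence above, $g$ is then a homomorphism from $P^t$ into $\mathrm{invDom}(I') = I$ (with its original domain order), so $P$ occurs in $I$, contradicting $I \in \mathrm{CSP}_{\overline{SP}}(P)$. The arguments for $\mathrm{invVar}(P)$ and $\mathrm{invDom}(\mathrm{invVar}(P))$ are identical, reversing the variable order (resp.\ both orders) of $I$ instead. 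I expect the main obstacle---really the only non-formality---to be the bookkeeping in the claim that consistent linear extensions commute with order reversal, in particular checking that the merging-of-incomparable-mergeable-points step in the definition of a consistent linear extension is invariant under reversing $<_D$; once that is nailed down, the rest is a direct unwinding of Definitions~\ref{def:hom} and~\ref{def:occ}.
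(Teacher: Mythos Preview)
Your proposal is correct and takes the same approach as the paper: reverse the relevant order(s) in the witnessing instance supplied by Lemma~\ref{lem:not-ac}, using that arc consistency and (un)solvability are independent of any variable or domain ordering. One minor point: in Definition~\ref{def:occ} the consistent linear extensions are taken of the \emph{target}, not the source, so your bookkeeping about linear extensions of $P$ and $\mathrm{invDom}(P)$ is unnecessary---the order-reversal bijection on homomorphisms (your second observation) already does all the work.
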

\begin{proof}
For an arc-consistent instance the property of having a solution is
independent of any variable or domain orderings.
The claims follow from inversing the respective orders in the
instance $I$ of Lemma~\ref{lem:not-ac} proving that $P$ is not AC-solvable.
\end{proof}

Some patterns define the same classes of CSP instances.

\begin{definition}
Patterns $P$ and $P'$ are \emph{equivalent} if
\[
\mbox{CSP}_{\overline{SP}}(P)\ =\ \mbox{CSP}_{\overline{SP}}(P').
\]
\end{definition}

\begin{lem}
If $P$ occurs in $P'$ and $P'$ occurs in $P$, then $P,P'$ are
equivalent.
\end{lem}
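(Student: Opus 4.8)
The plan is to obtain the statement as an immediate consequence of Lemma~\ref{lem:occ-sup}, applied once in each direction. First I would apply Lemma~\ref{lem:occ-sup} with the pattern $P'$ playing the role of ``$Q$'': since $P$ occurs in $P'$, the lemma gives CSP$_{\overline{SP}}(P) \subseteq$ CSP$_{\overline{SP}}(P')$, i.e.\ every instance in which $P$ does not occur is also an instance in which $P'$ does not occur. Then I would apply the same lemma in the opposite direction, with $P$ now playing the role of ``$Q$'': since $P'$ occurs in $P$, this yields CSP$_{\overline{SP}}(P') \subseteq$ CSP$_{\overline{SP}}(P)$. Combining the two inclusions gives CSP$_{\overline{SP}}(P) =$ CSP$_{\overline{SP}}(P')$, which is precisely the defining condition for $P$ and $P'$ to be equivalent.

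There is essentially no obstacle here: the whole argument is a two-line symmetric application of an already-established fact together with an unwinding of the definition of ``equivalent''. The only point worth checking is that Lemma~\ref{lem:occ-sup} is stated for an \emph{arbitrary} pattern ``$Q$'' and an \emph{arbitrary} instance ``$I$'', so that both of the above inclusions are legitimate instantiations of it; this is the case. If one preferred not to cite Lemma~\ref{lem:occ-sup}, each inclusion can equally well be derived from the transitivity of occurrence (Lemma~\ref{lem:occ-transitive}): for any instance $I$, if $P$ occurs in $P'$ and $P'$ occurs in $I$ then $P$ occurs in $I$; contrapositively, if $P$ does not occur in $I$ then $P'$ does not occur in $I$, which is the first inclusion, and the second follows by swapping the roles of $P$ and $P'$. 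Either route gives the result with no computation, so the content of the statement is just the symmetry of the two occurrence hypotheses.
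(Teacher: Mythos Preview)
Your proposal is correct and matches the paper's intent: the paper states this lemma without proof, treating it as immediate from Lemma~\ref{lem:occ-sup} and the definition of equivalence, which is exactly the two-line argument you give.
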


\begin{example}\label{ex:lx<}
Let LX$^<$ be the pattern obtained from LX (Figure~\ref{fig:lx} on
page~\pageref{fig:lx}) by adding the
partial variable order $y<z$. Due to the symmetry of LX, observe that LX and
LX$^<$ are equivalent.
\end{example}

\begin{example}
The two patterns shown in Figure~\ref{fig:btp} on
page~\pageref{fig:btp} are also equivalent: (a)
BTP$^{do}$ with structure $x,y < z$ and $c < d$, and (b) BTP$^{vo}$ with
variable order $x < y < z$. We will call these the \emph{variable-ordered} and
\emph{domain-ordered} versions of BTP, respectively, when it is necessary to make the
distinction between the two. BTP (Figure~\ref{fig:btpmc}(a) on
page~\pageref{fig:btpmc}) will refer to the same pattern with the only
structure $x,y<z$ which again, by symmetry, is equivalent to both
BTP$^{do}$ and  BTP$^{vo}$.
\end{example}

\section{New tractable classes solved by arc consistency}

Our search for a characterisation of all simple patterns decided by
arc consistency surprisingly uncovered four new tractable patterns,
which we describe in this section. The first pattern we study is
shown in Figure~\ref{fig:emc}. 
It is a proper generalisation of the
MC pattern (Figure~\ref{fig:btpmc}(b) on page~\pageref{fig:btpmc}) since it has an extra variable
and three extra edges.

\begin{figure}
\centering
\begin{picture}(180,100)(0,0)

\put(10,50){\usebox{\vartwo}} \put(50,10){\usebox{\varone}}
\put(90,50){\usebox{\vartwo}}
\dashline{5}(20,80)(100,80) \dashline{5}(60,30)(100,60)
\put(20,80){\line(4,-1){80}} \put(20,60){\line(4,1){80}}
\put(20,80){\line(4,-5){40}} \put(60,30){\line(4,5){40}}
\put(60,10){\makebox(0,0){$x$}}
  \put(20,42){\makebox(0,0){$y$}}  \put(100,42){\makebox(0,0){$z$}}
  \put(160,60){\makebox{\shortstack{$y<z$ \\ $\alpha>\beta$ \\ $\gamma>\delta$}}}
  \put(5,80){\makebox(0,0){$\alpha$}}
  \put(5,60){\makebox(0,0){$\beta$}}
  \put(115,80){\makebox(0,0){$\gamma$}}
  \put(115,60){\makebox(0,0){$\delta$}}
\put(75,30){\makebox(0,0){$\epsilon$}}

\end{picture}
\caption{The ordered pattern EMC (Extended Max-Closed)}
\label{fig:emc}
\end{figure}
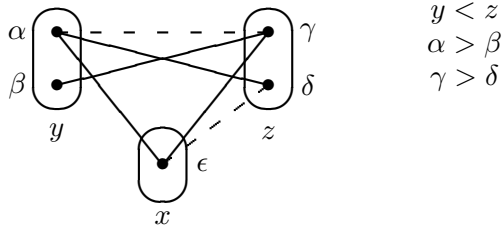

\begin{thm} \label{thm:emc}
AC is a decision procedure for CSP$_{\overline{SP}}(EMC)$ where EMC is
the pattern shown in Figure~\ref{fig:emc}. 
\end{thm}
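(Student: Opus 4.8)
The plan is to establish arc consistency on the input instance and then prove that \emph{every} arc-consistent binary CSP instance in which EMC does not occur has a solution; by Lemma~\ref{lem:not-ac} this is exactly what is required. Establishing AC only deletes points, so the instance it produces is a sub-instance of the original; equipped with the restrictions of the witnessing variable and value orders, any homomorphism of EMC into it would also be a homomorphism into the original, so EMC still does not occur (this is the observation, recorded in the introduction, that AC cannot introduce forbidden patterns). Hence it suffices to show that an arc-consistent instance $I$ in which EMC does not occur has a solution.

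We induct on $\sum_{v}|D(v)|$. Fix a total variable order $\prec_X$ and a total value order $\prec_D$ witnessing that EMC does not occur in $I$. If every domain is a singleton then arc consistency together with totality of $\cpt$ forces every pair of the chosen values to be compatible, so the singletons form a solution. Otherwise let $z$ be the $\prec_X$-greatest variable. If $D(z)=\{c\}$ is a singleton, then by arc consistency every value of every other variable is compatible with $\langle z,c\rangle$, so any solution of the sub-instance obtained by deleting $z$ (which is again arc consistent, in which EMC does not occur, and of strictly smaller measure) extends, by setting $z:=c$, to a solution of $I$; apply the induction hypothesis. If $|D(z)|\geq 2$, put $c^\ast=\max_{\prec_D}D(z)$, replace $D(z)$ by $\{c^\ast\}$, re-establish arc consistency to obtain $I'$, and apply the induction hypothesis to $I'$: it is arc consistent, EMC does not occur in it (a sub-instance of $I$ with the inherited orders), its measure is strictly smaller, and any solution of it is a solution of $I$ --- \emph{provided} $I'$ has no empty domain.

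The heart of the proof is this last proviso. Suppose it fails; choose a minimal (in $\sum_v|D(v)|$) arc-consistent instance $I$ in which EMC does not occur but for which re-establishing AC after $D(z):=\{c^\ast\}$ empties some domain, where $z$ is $\prec_X$-greatest and $c^\ast=\max_{\prec_D}D(z)$. AC propagation deletes points in waves: the first wave deletes exactly the points incompatible with $\langle z,c^\ast\rangle$ at variables constrained with $z$, and every subsequent deletion of a point $p$ is caused by the prior deletion of all of $p$'s supports at some single other variable. Since $I$ is arc consistent, $\langle z,c^\ast\rangle$ has a support in each variable constrained with $z$, and this point survives the first wave; so the first wave empties nothing, and the cascade that does the damage is genuinely multi-variable. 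Following supports backwards from an emptied domain produces a short deletion chain beginning with a first-wave deletion, and from this chain one reads off an occurrence of EMC: $z$ plays the role of the pattern's variable $z$ (permissible since $z$ is $\prec_X$-maximal, so ``$y<z$'' and ``$x\neq z$'' can be met); $c^\ast$ plays the role of $\gamma$, and since $c^\ast=\max_{\prec_D}D(z)$ the $\prec_D$-smaller $z$-support $d$ of a first-wave-deleted value $a$ plays the role of $\delta$, so the orientation $\gamma>\delta$ is automatic; the variable holding $a$, together with a support of $c^\ast$ in it, plays the role of $y$ and furnishes the points playing $\alpha$ (namely $a$) and $\beta$, giving the three $R_{yz}$-edges $\cpt(\alpha,\delta)={\tt TRUE}$, $\cpt(\beta,\gamma)={\tt TRUE}$, $\cpt(\alpha,\gamma)={\tt FALSE}$; and the variable and point that fail at the next step of the chain play the roles of $x$ and $\epsilon$, giving $\cpt(\alpha,\epsilon)={\tt TRUE}$, $\cpt(\epsilon,\gamma)={\tt TRUE}$, $\cpt(\epsilon,\delta)={\tt FALSE}$. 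This contradicts the fact that EMC does not occur in $I$, so no domain is emptied, completing the induction.

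The one genuinely delicate point is this extraction; monotonicity of occurrence under sub-instances, preservation of arc consistency, the base case and termination are all routine. What needs care is (i) arranging that the point chosen for $\beta$ lies $\prec_D$-\emph{above} the point chosen for $\alpha=a$ --- equivalently, that the first-wave-deleted value of the ``$y$''-variable sits $\prec_D$-above the surviving support of $c^\ast$ there --- and likewise that the two points of the ``$z$''-variable used for $\gamma,\delta$ occur in the required order; and (ii) locating, inside the cascade, the variable for $x$ and the point $\epsilon$ so that $\epsilon$ is compatible with $\alpha$ and with $\gamma=c^\ast$ but \emph{incompatible} with $\delta=d$, this last negative edge being the subtle one. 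Both are where the multi-variable structure of the cascade (a single point may lose support at several distinct variables) and the minimality of the counterexample have to be exploited, presumably via a short case analysis according to whether the failing variable is constrained with $z$ and on the $\prec_X$-position of the intermediate variable of the chain.
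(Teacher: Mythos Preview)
Your strategy is quite different from the paper's. The paper does not fix the last variable to its maximum and propagate; instead it processes variables from \emph{first to last} and assigns to $x_i$ the value
\[
a_i=\min_{j<i}\,\max\{a\in D(x_i)\mid (a_j,a)\in R_{ji}\},
\]
with $a_1=\max D(x_1)$. The contradiction is then obtained by a chain argument: from a violated pair $(j,k)$ one walks back along a ``predecessor'' sequence $k=m_0>m_1>\cdots>m_t=1$ and proves inductively (hypothesis $H_r$) that each $a_{m_r}$ is strictly dominated by some $b_r\in D(x_{m_r})$, finally contradicting $a_1=\max D(x_1)$. Two cases (depending on whether $j>m_1$ or $j<m_1$) and a sub-case split at every step are needed; the extraction of an EMC occurrence is far from a ``short case analysis''.

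Your proposal, by contrast, leaves exactly this extraction unproved, and the sketch you give is not right as stated. Two concrete problems:
\begin{itemize}
\item You set $\alpha=a$ (the first-wave-deleted point at some $y'$) and $\beta$ to be an arc-consistency support of $c^\ast$ in $D(y')$. EMC requires $\alpha>\beta$, i.e.\ the deleted value must sit $\prec_D$-above that support. Nothing in the cascade forces this; the support of $c^\ast$ in $D(y')$ may well be $\prec_D$-above every deleted value. (Your own text is inconsistent here: you write that $\beta$ must lie $\prec_D$-above $\alpha$, then immediately restate it as the first-wave-deleted value sitting above the surviving support; only the second is what EMC demands.)
\item You take $\epsilon$ to be a point deleted at the next step of the cascade because its supports in $D(y')$ were wiped out. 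That gives $\cpt(\epsilon,\alpha)={\tt TRUE}$ and, since $\epsilon$ survived the first wave, $\cpt(\epsilon,c^\ast)={\tt TRUE}$. But the negative edge EMC requires at $\epsilon$ is $\cpt(\epsilon,\delta)={\tt FALSE}$ with $\delta\in D(z)$, i.e.\ incompatibility of $\epsilon$ with a \emph{$z$-value}, not with a $y'$-value. The cascade step you invoke records incompatibilities of $\epsilon$ only with (the surviving) values of $y'$; it says nothing about $\epsilon$'s compatibility with $\delta=d\in D(z)$. So the crucial negative edge is simply not supplied by your chain.
\end{itemize}
Saying this can ``presumably'' be repaired by a short case analysis is not a proof. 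The paper's argument shows that locating the right triple $(x,y,z)$ and the right points so that \emph{all six} EMC edges and \emph{both} domain orderings are simultaneously realised is the substantive content of the theorem; it requires choosing $\delta$ and $\epsilon$ with real care (not just ``the next step of the chain'') and a genuinely nested induction. If your greedy-on-the-last-variable approach can be completed, it would be a nice alternative proof, but as written the central step is missing.
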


\begin{proof}
Since establishing arc consistency only eliminates domain elements,
and hence cannot introduce the pattern, it suffices to show that
every arc-consistent instance $I=\tuple{X, D, A, \cpt}$ $\in$
CSP$_{\overline{SP}}(EMC)$ has a solution. We give a constructive
proof. For a given domain ordering, let $x_1 < \ldots < x_n$ be an ordering of $X$
such that EMC does not occur in $I$. 
Define an assignment $\tuple{a_1,\ldots,a_n}$ to the variables
$\tuple{x_1,\ldots,x_n}$ recursively as follows: $a_1 = \max(D(x_1))$
and, for $i>1$,
\begin{equation} \label{eq:defai}
a_i \ = \ \min \{ a_i^j \mid 1 \leq j < i \}, \ \
\ {\rm where} \ \ \ a_i^j  = \max \{a \in D(x_i) \mid (a_j,a) \in R_{ji}\}
\end{equation}
In other words, given an assignment $\tuple{a_1,\ldots,a_{i-1}}$ to variables
$\tuple{x_1,\ldots,x_{i-1}}$, for $1 \leq j < i$, $a_i^j$ is the maximum value
in $D(x_i)$ compatible with $a_j$. Then $a_i$ is the minimum of these values 
$a_i^1,\ldots,a_i^{i-1}$.
We prove by strong induction on $n$ the following claim. 
\begin{description}
\item[$C(n)$] $\tuple{a_1,\ldots,a_n}$
defined by Equation (\ref{eq:defai}) is a solution to arc-consistent instances $I \in$ 
CSP$_{\overline{SP}}(EMC)$ of size $n$.
\end{description}
The claim trivially holds for $n=1$ since $a_1 \in D(x_1)$. It remains to show that
if the claim holds for instances of size less than $n$ then it holds for instances of size $n$.

We suppose that $C(1) \wedge \ldots \wedge C(n-1)$ holds and we will
show that this implies $C(n)$, i.e. that $\tuple{a_1,\ldots,a_n}$ is a solution to 
instance $I \in$  CSP$_{\overline{SP}}(EMC)$ of size $n$.
Suppose, for a contradiction, that $(a_j,a_k) \notin R_{jk}$ for some
$1 \leq j < k \leq n$. If there is more than one such pair $(j,k)$,
then choose $k$ to be minimal and then for this value of $k$ choose
$j$ to be minimal.

For $i>1$, we denote by $pred(i)$ a 
value of $j<i$ such that $a_i = a_i^j$. Arc consistency guarantees
that $a_i^j$ exists and hence that $a_i$ and $pred(i)$ are well
defined. 
Let $m_0=k$ and $m_r = pred(m_{r-1})$ for $r \geq 1$ if $m_{r-1} > 1$.
Let $t$ be such that $m_t=1$. By definition of $pred$, we have
$$1 = m_t  \ < \ m_{t-1} \ < \ \ldots \ < \ m_1 \ < \ m_0 = k$$
which implies that this series is finite and hence that $t$ is well-defined.
We will show that for $r=1,\ldots,t$, $\exists b_r \in D(x_{m_r})$ such that $b_r > a_{m_r}$,
which for $r=t$ will provides us with the contradiction we are looking for since $a_{m_t} = a_1 = \max(D(x_1))$.

We distinguish two cases: (1) $j > m_1$, and (2) $j < m_1$. Since
$(a_j,a_k) \notin R_{jk}$ and $(a_{m_1},a_k) \in R_{jk}$ we know that $j \neq m_1$.

\paragraph{Case (1) $j > m_1$:} Define $b_0 = a_k^{j}$. By definition of $a_k$,
we know that $a_k \leq a_k^{j}$. Since $(a_j,a_k) \notin R_{jk}$
and $(a_j,a_k^{j}) \in R_{jk}$, we have $b_0 = a_k^{j} > a_k$.

By our choice of $j$ to be minimal, and since $j > m_1$ we know that $(a_{m_r},a_k) \in R_{m_r k}$
for $r=1,\ldots,t$. Indeed, by minimality of $k$, we already had
$(a_{m_r},a_{m_s}) \in R_{m_r m_s}$ for $1 \leq s \leq r \leq t$. Thus, since $k=m_0$,
we have
\begin{equation}
(a_{m_r},a_{m_s}) \in R_{m_r m_s} \ \ {\rm for} \ \ 0 \leq s \leq r \leq t.  \label{eq:posedges}
\end{equation}

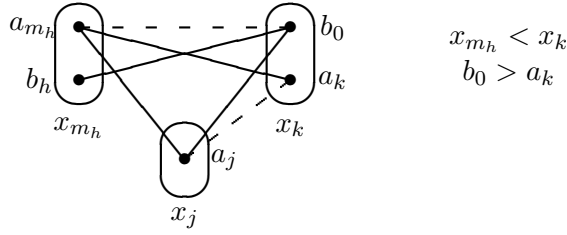
\begin{figure}
\centering
\begin{picture}(180,100)(0,0)

\put(10,50){\usebox{\vartwo}} \put(50,10){\usebox{\varone}}
\put(90,50){\usebox{\vartwo}}
\dashline{5}(20,80)(100,80) \dashline{5}(60,30)(100,60)
\put(20,80){\line(4,-1){80}} \put(20,60){\line(4,1){80}}
\put(20,80){\line(4,-5){40}} \put(60,30){\line(4,5){40}}
\put(60,8){\makebox(0,0){$x_j$}}
  \put(20,42){\makebox(0,0){$x_{m_h}$}}  \put(100,42){\makebox(0,0){$x_k$}}
  \put(160,60){\makebox{\shortstack{$x_{m_h}<x_k$ \\ $b_0 > a_k$}}}
  \put(3,80){\makebox(0,0){$a_{m_h}$}}
 \put(5,60){\makebox(0,0){$b_h$}}
  \put(116,80){\makebox(0,0){$b_0$}}
  \put(116,60){\makebox(0,0){$a_k$}}
\put(75,30){\makebox(0,0){$a_j$}}

\end{picture}
\caption{To avoid the pattern EMC, we must have $b_h > a_{m_h}$.}
\label{fig:almostemc}
\end{figure}

By arc consistency, $\exists b_1 \in D(x_{m_1})$ such that $(b_1,b_0) \in R_{m_1 k}$.
We have $(a_{m_1},a_j) \in R_{m_1 j}$ by minimality of $k$ and since $m_1,j < k$.
Since $m_1 = pred(k)$ and hence $a_k = a_k^{m_1}$, we have $(a_{m_1},a_k) \in R_{m_1 k}$
and $(a_{m_1}, b_0) \notin R_{m_1 k}$ by the maximality of $a_k^{m_1}$ in Equation~(\ref{eq:defai}) 
and since $b_0 > a_k = a_k^{m_1}$.
We thus have the situation illustrated in Figure~\ref{fig:almostemc} 
for $h=1$. Since the pattern EMC
does not occur in $I$, we must have $b_1 > a_{m_1}$.

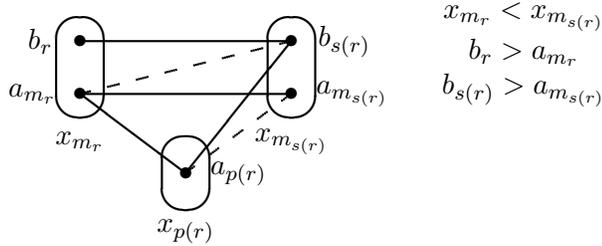
\begin{figure}
\centering
\begin{picture}(180,100)(0,0)

\put(10,50){\usebox{\vartwo}} \put(50,10){\usebox{\varone}}
\put(90,50){\usebox{\vartwo}}
\dashline{5}(20,60)(100,80) \dashline{5}(60,30)(100,60)
\put(20,60){\line(1,0){80}} \put(20,80){\line(1,0){80}}
\put(20,60){\line(4,-3){40}} \put(60,30){\line(4,5){40}}
\put(60,8){\makebox(0,0){$x_{p(r)}$}}
  \put(20,42){\makebox(0,0){$x_{m_r}$}}  \put(100,42){\makebox(0,0){$x_{m_{s(r)}}$}}
  \put(157,60){\makebox{\shortstack{$x_{m_r}<x_{m_{s(r)}}$ \\ $b_r > a_{m_r}$ \\ $b_{s(r)} > a_{m_{s(r)}}$}}}
  \put(5,80){\makebox(0,0){$b_r$}}
  \put(2,60){\makebox(0,0){$a_{m_r}$}}
  \put(120,80){\makebox(0,0){$b_{s(r)}$}}
  \put(123,60){\makebox(0,0){$a_{m_{s(r)}}$}}
\put(80,30){\makebox(0,0){$a_{p(r)}$}}

\end{picture}
\caption{The situation corresponding to hypothesis $H_r$.}
\label{fig:Hremc}
\end{figure}

For $1 \leq r \leq t$, let $H_r$ be the following hypothesis.
\begin{description}
\item[$H_r$]  $\exists s(r) \in\{0,\ldots,r-1\}$,
$\exists p(r) < k$, $\exists b_r \in D(x_{m_r})$, with $b_r >
a_{m_r}$, such that we have the situation shown in
Figure~\ref{fig:Hremc}. 
\end{description}
We have just shown that $H_1$ holds (with $s(1)=0$ and $p(1)=j$).
We now show, for $1 \leq r < t$, that $(H_1 \wedge \ldots \wedge H_r) \Rightarrow H_{r+1}$.

We know that $(a_{m_{r+1}},a_{m_r}) \in R_{m_{r+1} m_r}$ and
$(a_{m_{r+1}},b_{r}) \notin R_{m_{r+1} m_r}$, since $m_{r+1} =
pred(m_r)$  and by maximality of $a_{m_r} = a_{m_r}^{m_{r+1}}$ in
Equation~(\ref{eq:defai}). Let $q \in \{0,\ldots,r\}$ be minimal such
that $(a_{m_{r+1}},b_q) \notin R_{m_{r+1} m_q}$. We distinguish two
cases: (a) $q=0$, and (b) $q>0$.

If $q=0$, then we have $(a_{m_{r+1}},a_k) \in R_{m_{r+1} k}$ (from
Equation~(\ref{eq:posedges}), since $k=m_0$), $(a_{m_{r+1}},b_0)
\notin R_{m_{r+1} k}$ (since $q=0$), $(a_{m_{r+1}},a_j) \in
R_{m_{r+1} j}$ (by minimality of $k$, since $m_{r+1},j < k$). By arc
consistency, $\exists b_{r+1} \in D(x_{m_{r+1}})$ such that
$(b_{r+1},b_0) \in R_{m_{r+1} k}$. We then have the situation
illustrated in Figure~\ref{fig:almostemc} 
for $h=r+1$. As above, from
the absence of pattern EMC, we can deduce that $b_{r+1} >
a_{m_{r+1}}$. We thus have $H_{r+1}$ (with $s(r+1)=0$ and
$p(r+1)=j$).

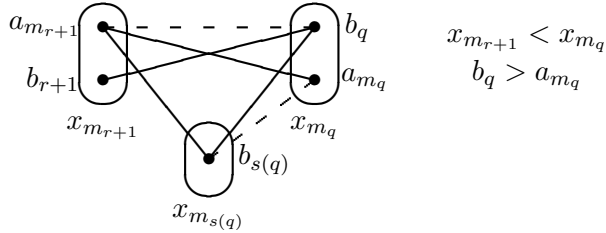
\begin{figure}
\centering
\begin{picture}(190,100)(-10,0)

\put(10,50){\usebox{\vartwo}} \put(50,10){\usebox{\varone}}
\put(90,50){\usebox{\vartwo}}
\dashline{5}(20,80)(100,80) \dashline{5}(60,30)(100,60)
\put(20,80){\line(4,-1){80}} \put(20,60){\line(4,1){80}}
\put(20,80){\line(4,-5){40}} \put(60,30){\line(4,5){40}}
\put(60,8){\makebox(0,0){$x_{m_{s(q)}}$}}
  \put(20,42){\makebox(0,0){$x_{m_{r+1}}$}}  \put(100,42){\makebox(0,0){$x_{m_q}$}}
  \put(150,60){\makebox{\shortstack{$x_{m_{r+1}}<x_{m_q}$ \\ $b_q > a_{m_q}$}}}
  \put(-2,80){\makebox(0,0){$a_{m_{r+1}}$}}
 \put(1,60){\makebox(0,0){$b_{r+1}$}}
  \put(116,80){\makebox(0,0){$b_q$}}
  \put(119,60){\makebox(0,0){$a_{m_q}$}}
\put(81,30){\makebox(0,0){$b_{s(q)}$}}

\end{picture}
\caption{To avoid the pattern EMC, we must have $b_{r+1} > a_{m_{r+1}}$.}
\label{fig:almostemc2}
\end{figure}

If $q>0$, then $H_1 \wedge \ldots \wedge H_r$ implies that $H_q$
holds. By minimality of $q$, we know that $(a_{m_{r+1}}, b_{s(q)})
\in R_{m_{r+1} m_{s(q)}}$ since $s(q) < q$. We know that
$(a_{m_{r+1}}, a_{m_q}) \in R_{m_{r+1} m_q}$ from
Equation~(\ref{eq:posedges}), and that $(a_{m_{r+1}},b_q) \notin
R_{m_{r+1} m_q}$ by definition of $q$. We know that $(b_q,b_{s(q)})
\in R_{m_q m_{s(q)}}$ and $(a_{m_q},b_{s(q)}) \notin R_{m_q
m_{s(q)}}$ from $H_q$. By arc consistency,  $\exists b_{r+1} \in
D(x_{m_{r+1}})$ such that $(b_{r+1},b_q) \in R_{m_{r+1} m_q}$. We
then have the situation illustrated in Figure~\ref{fig:almostemc2}. 
As above, from the absence of pattern EMC, we can deduce that
$b_{r+1} > a_{m_{r+1}}$. We thus have $H_{r+1}$ (with $s(r+1)=q$ and
$p(r+1)=s(q)$).

\paragraph{Case (2) $j < m_1$:} Consider the subproblem $I'$ of $I$ on 
variables $\{x_1,x_2,\ldots,x_{m_1-1}\} \cup \{x_k\}$.
Since $x_{m_1}$ does not belong to the set of variables of $I'$, this
instance has size strictly less than $n$, and hence has a solution by our inductive
hypothesis $C(1) \wedge \ldots \wedge C(n-1)$ (i.e. 
that a solution is given by Equation~(\ref{eq:defai} for arc-consistent instances
in CSP$_{\overline{SP}}(EMC)$ of size less than $n$). The values of $a_i$ may differ between $I$
and $I'$. However, we can see from its definition given in
Equation~(\ref{eq:defai}), that the value of $a_i$ depends uniquely
on the subproblem on previous variables $\{x_1,\ldots,x_{i-1}\}$.
Showing the dependence on the instance by a superscript, we thus have
$a_i^{I'} = a_i^I$ ($i=1,\ldots,m_1-1$) although $a_k^{I'}$ may (and,
in fact, does) differ from $a_k^{I}$. By our inductive hypothesis $C(1) \wedge \ldots \wedge C(n-1)$,
$\tuple{a_1,\ldots,a_{m_1-1},a_k^{I'}}$ is a solution to $I'$.
Setting $b_0 = a_k^{I'}$, it follows that $(a_i,b_0) \in R_{ik}$ for
$1 \leq i < m_1$. In particular, since $j < m_1$, we have $(a_j,b_0)
\in R_{jk}$. Now $a_k^{I}  \leq a_k^{I'} = b_0$, since $I'$ is a
subinstance of $I$ (and so, from Equation~(\ref{eq:defai}), $a_k^{I}$
is the minimum of a superset over which $a_k^{I'}$ is a minimum).
Thus $a_k = a_k^{I} < b_0$, since $(a_j,b_0) \in R_{jk}$ and
$(a_j,a_k) \notin R_{jk}$.

By arc consistency, $\exists b_1 \in D(x_{m_1})$ such that $(b_1,b_0)
\in R_{m_1 k}$. As in case (1), we have the situation illustrated in
Figure~\ref{fig:almostemc} on
page~\pageref{fig:almostemc} for $h=1$. Since the pattern EMC does not
occur in $I$, we must have $b_1 > a_{m_1}$.

Consider the hypothesis $H_r$ stated in case (1) and illustrated in
Figure~\ref{fig:Hremc} on
page~\pageref{fig:Hremc}. We have just shown that $H_1$ holds (with
$s(1)=0$ and $p(1)=j$). We now show, for $1 \leq r < t$, that $(H_1
\wedge \ldots \wedge H_r) \Rightarrow H_{r+1}$.

As in case (1), we know that $(a_{m_{r+1}},a_{m_r}) \in R_{m_{r+1}
m_r}$ and $(a_{m_{r+1}},b_{r}) \notin R_{m_{r+1} m_r}$. Let $q \in
\{0,\ldots,r\}$ be minimal such that $(a_{m_{r+1}},b_q) \notin
R_{m_{r+1} m_q}$. We have seen above that $(a_{m_{r+1}},b_0) \in
R_{m_{r+1} k}$ (since $x_{m_{r+1}}$, $x_{m_0}$ are assigned,
respectively, the values $a_{m_{r+1}}$, $b_0$ in a solution to $I'$).
Therefore, we can deduce that $q>0$. Therefore $H_1 \wedge \ldots
\wedge H_r$ implies that $H_q$ holds. By minimality of $k$, and since
$m_q < m_0 = k$, we know that $(a_{m_{r+1}},a_{m_q}) \in R_{m_{r+1}
m_q}$. As in case (1), by minimality of $q$, we know that
$(a_{m_{r+1}}, b_{s(q)}) \in R_{m_{r+1} m_{s(q)}}$ . By arc
consistency,  $\exists b_{r+1} \in D(x_{m_{r+1}})$ such that
$(b_{r+1},b_q) \in R_{m_{r+1} m_q}$. We thus have the situation
illustrated in Figure~\ref{fig:almostemc2} on
page~\pageref{fig:almostemc2}. Again, from the absence
of pattern EMC, we can deduce that $b_{r+1} > a_{m_{r+1}}$. We thus
again have $H_{r+1}$ with $s(r+1)=q$ and $p(r+1)=s(q)$.

Thus, by induction on $r$, we have shown in both cases that $H_t$
holds. But recall that $m_t=1$ and that $a_1$ was chosen to be the
maximal element of $D(x_1)$ and hence $\nexists b_t \in D(x_1)$ such
that $b_t > a_1$. This contradiction shows that
$\tuple{a_1,\ldots,a_n}$ is a solution, as claimed.
\end{proof}

The next two patterns we study in this section, shown in Figure~\ref{fig:btx} 
and Figure~\ref{fig:bti}, 
are similar to EMC but the three patterns are incomparable (in the sense that none occurs in another)
due to the different orders on the three variables.

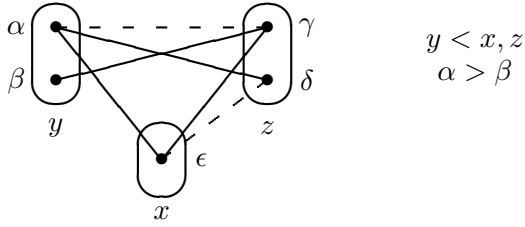
\begin{figure}
\centering
\begin{picture}(180,100)(0,0)

\put(10,50){\usebox{\vartwo}} \put(50,10){\usebox{\varone}}
\put(90,50){\usebox{\vartwo}}
\dashline{5}(20,80)(100,80) \dashline{5}(60,30)(100,60)
\put(20,80){\line(4,-1){80}} \put(20,60){\line(4,1){80}}
\put(20,80){\line(4,-5){40}} \put(60,30){\line(4,5){40}}
\put(60,10){\makebox(0,0){$x$}}
  \put(20,42){\makebox(0,0){$y$}}  \put(100,42){\makebox(0,0){$z$}}
  \put(160,60){\makebox{\shortstack{$y<x,z$ \\ $\alpha>\beta$}}}
  \put(5,80){\makebox(0,0){$\alpha$}}
  \put(5,60){\makebox(0,0){$\beta$}}
  \put(115,80){\makebox(0,0){$\gamma$}}
  \put(115,60){\makebox(0,0){$\delta$}}
\put(75,30){\makebox(0,0){$\epsilon$}}

\end{picture}
\caption{The ordered pattern BTX.}
\label{fig:btx}
\end{figure}

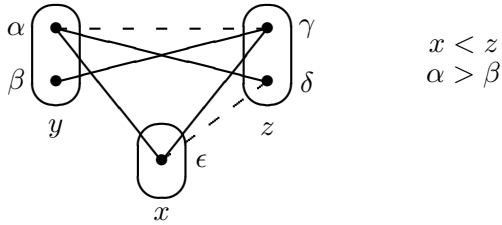
\begin{figure}
\centering
\begin{picture}(180,100)(0,0)

\put(10,50){\usebox{\vartwo}} \put(50,10){\usebox{\varone}}
\put(90,50){\usebox{\vartwo}}
\dashline{5}(20,80)(100,80) \dashline{5}(60,30)(100,60)
\put(20,80){\line(4,-1){80}} \put(20,60){\line(4,1){80}}
\put(20,80){\line(4,-5){40}} \put(60,30){\line(4,5){40}}
\put(60,10){\makebox(0,0){$x$}}
  \put(20,42){\makebox(0,0){$y$}}  \put(100,42){\makebox(0,0){$z$}}
  \put(160,60){\makebox{\shortstack{$x<z$ \\ $\alpha>\beta$}}}
  \put(5,80){\makebox(0,0){$\alpha$}}
  \put(5,60){\makebox(0,0){$\beta$}}
  \put(115,80){\makebox(0,0){$\gamma$}}
  \put(115,60){\makebox(0,0){$\delta$}}
\put(75,30){\makebox(0,0){$\epsilon$}}

\end{picture}
\caption{The ordered pattern BTI.}
\label{fig:bti}
\end{figure}

\begin{thm} \label{thm:btx}
AC is a decision procedure for CSP$_{\overline{SP}}(BTX)$ where BTX is
the pattern shown in Figure~\ref{fig:btx}. 
\end{thm}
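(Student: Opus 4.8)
My plan is to follow the constructive argument used to prove Theorem~\ref{thm:emc}. As there, establishing AC only removes points and hence cannot create the pattern, so it suffices to show that every arc-consistent instance $I=\langle X,D,A,\cpt\rangle\in$ CSP$_{\overline{SP}}$(BTX) has a solution. Fix variable and value orders witnessing that BTX does not occur in $I$, write $x_1<\cdots<x_n$, and define the same assignment as in the proof of Theorem~\ref{thm:emc}: $a_1=\max(D(x_1))$ and, for $i>1$, $a_i=\min\{a_i^j\mid 1\le j<i\}$ where $a_i^j=\max\{a\in D(x_i)\mid (a_j,a)\in R_{ji}\}$ (well defined by arc consistency, which also guarantees that $pred(i)$, a $j<i$ with $a_i=a_i^j$, exists). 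The claim to be proved by strong induction on $n$ is that $\langle a_1,\dots,a_n\rangle$ is a solution, the case $n=1$ being immediate.

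For the inductive step, assume $\langle a_1,\dots,a_n\rangle$ is not a solution and pick a violated pair $(a_j,a_k)\notin R_{jk}$ with $k$, and then $j$, minimal; form the chain $k=m_0>m_1>\cdots>m_t=1$ with $m_{r+1}=pred(m_r)$. As in Theorem~\ref{thm:emc} the target is to produce $b_r\in D(x_{m_r})$ with $b_r>a_{m_r}$ for all $r$, contradicting $a_1=\max(D(x_1))$ when $r=t$; and since $(a_{m_1},a_k)\in R_{m_1k}$ we have $j\ne m_1$, so either $j>m_1$ or $j<m_1$. \textbf{If $j>m_1$}, I claim that Case~(1) of the proof of Theorem~\ref{thm:emc} applies verbatim. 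The only point to verify is that each three-variable configuration occurring there (those of Figures~\ref{fig:almostemc}, \ref{fig:Hremc}, \ref{fig:almostemc2}) is actually an occurrence of BTX, not merely of the weaker EMC. This holds because the two-point ``pivot'' variable (the image of $y$) is always one of $x_{m_1},\dots,x_{m_t}$ and so has index at most $m_1<j<k$; hence it precedes every other variable of the configuration, in particular the one-point variable (image of $x$) and the other two-point variable (image of $z$), which is exactly what the orders $y<x$ and $y<z$ of BTX require. Conversely, BTX imposes no order on the two points of the image of $z$, so the value orders $\gamma>\delta$ that the EMC argument invoked are not needed. Thus the absence of BTX again forces $b_r>a_{m_r}$ at each step, and $b_t>a_1$ is the contradiction.

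\textbf{If $j<m_1$}, the argument genuinely departs from EMC, whose Case~(2) used the value order $\gamma>\delta$ (absent here) and produced a configuration in which the image $x_j$ of $x$ precedes the pivot $x_{m_1}$ --- not an occurrence of BTX. I would instead corner the problem with the induction hypothesis. First, applying the construction to the induced subinstance on $\{x_1,\dots,x_k\}$ (arc consistent and BTX-free, of size $k$) reproduces $a_1,\dots,a_k$, so if $k<n$ the induction hypothesis makes $\langle a_1,\dots,a_k\rangle$ a solution, contradicting $(a_j,a_k)\notin R_{jk}$; hence $k=n$. Second, applying the construction to the subinstance on $\{x_1,\dots,x_{m_1},x_n\}$ leaves $a_n$ unchanged (the minimum defining $a_n$ over $i\le m_1$ is attained at $i=m_1=pred(n)$), so unless this subinstance is all of $I$ the induction hypothesis makes $\langle a_1,\dots,a_{m_1},a_n\rangle$ a solution of it, which is false since $j\le m_1$; hence $m_1=n-1$. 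We are left with $j<m_1=pred(n)=n-1$ and $k=n$. Here I set $b_0=a_n^{\,j}$, which exceeds $a_n$ (so $(a_j,b_0)\in R_{jn}$, and $(a_{n-1},b_0)\notin R_{n-1,n}$ by maximality of $a_n=a_n^{\,n-1}$), and use arc consistency to pick $\beta'\in D(x_j)$ with $(\beta',a_n)\in R_{jn}$. Together with $(a_j,a_{n-1})\in R_{j,n-1}$ (from minimality of $k$) and $(a_{n-1},a_n)\in R_{n-1,n}$ (since $a_n=a_n^{\,n-1}$) and the non-edges $(a_j,a_n)\notin R_{jn}$, $(a_{n-1},b_0)\notin R_{n-1,n}$, this yields an occurrence of BTX on $(x_j,x_{n-1},x_n)$ with $x_j$ the pivot ($a_j\mapsto\alpha$, $\beta'\mapsto\beta$), $x_{n-1}$ the image of $x$ ($a_{n-1}\mapsto\epsilon$) and $x_n$ the image of $z$ ($a_n\mapsto\gamma$, $b_0\mapsto\delta$) --- provided the value order puts $\beta'$ below $a_j$.

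I expect that last proviso to be the crux. All the compatibility edges above are routine, but because BTX fixes its pivot to be the \emph{least} of the three variables, one needs $\beta'<_D a_j$ in $D(x_j)$; this is automatic when $a_j=\max(D(x_j))$ but not in general, so the real work is to show --- using arc consistency and the minimality of $j$ and $k$, possibly via one more appeal to the induction hypothesis on a subinstance --- that $D(x_j)$ does contain a value below $a_j$ compatible with $a_n$, or else to relabel which of $x_j,x_{n-1},x_n$ plays the pivot. Making this bookkeeping work, so that BTX's three variable-order constraints and its one value-order constraint hold simultaneously, is the main obstacle; the rest is a transcription of the proof of Theorem~\ref{thm:emc}.
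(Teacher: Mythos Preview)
Your Case~(1) analysis is correct: in every three-variable configuration of the EMC argument the two-point ``pivot'' variable is some $x_{m_h}$ with $h\ge 1$, and in Case~(1) one has $m_h\le m_1<j<k$ (and similarly $m_{r+1}<m_q<m_{s(q)}$ in Figure~\ref{fig:almostemc2}), so the variable-order requirements $y<x$ and $y<z$ of BTX are met and the conclusion $b_r>a_{m_r}$ follows exactly as for EMC. Your reductions at the start of Case~(2) (forcing $k=n$ and then $m_1=n-1$ via the inductive hypothesis on proper subinstances) are also sound.

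The gap is exactly where you put it, and it is genuine. After the reductions you need some $\beta'\in D(x_j)$ with $(\beta',a_n)\in R_{jn}$ \emph{and} $\beta'<a_j$. Arc consistency supplies a $\beta'$ compatible with $a_n$, but nothing in your setup forces $\beta'<a_j$: unlike Case~(1), here the value you want to play the role of $\alpha$ is $a_j$, and $a_j$ is a minimum of maxima, not a maximum of $D(x_j)$. Your suggested fallbacks do not obviously close this. Relabelling the pivot does not help, since $x_j$ is the only one of $x_j,x_{n-1},x_n$ that precedes the other two, and BTX insists the ordered-value variable be least. And a further induction would need to manufacture a value below $a_j$ compatible with $a_n$, which is not a consequence of arc consistency or of the minimality of $j,k$.

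The paper's proof sidesteps all of this with a much shorter argument that does not mimic the EMC proof at all. It shows by induction on $n$ that every arc-consistent $I\in\mbox{CSP}_{\overline{SP}}(\mathrm{BTX})$ has a solution assigning $a=\max(D(x_1))$ to $x_1$. For the inductive step, one restricts each $D(x_i)$ ($i\ge 2$) to the values compatible with $a$ and checks that the resulting $(n-1)$-variable instance $I'$ is still arc consistent; then the inductive hypothesis finishes. The arc-consistency check is a single direct computation: if some $b\in D'(x_i)$ had no support in $D'(x_j)$, one picks $c\in D(x_j)$ supporting $b$ in $I$, $d\in D'(x_j)$, and $a'\in D(x_1)$ supporting $c$, and reads off BTX on $(x_1,x_i,x_j)$ with $x_1$ as pivot. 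The value order $a>a'$ is automatic precisely because $a$ was chosen to be $\max(D(x_1))$. In other words, the paper always pivots at $x_1$, where maximality is built in, rather than at some $x_j$ in the middle of the order where it is not. This is the key idea your approach is missing, and it makes the whole proof a few lines.
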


\begin{proof}
Since establishing arc consistency only eliminates domain elements,
and hence cannot introduce the pattern, we only need to show that
every arc-consistent instance $I=\tuple{X,D,A,\cpt} \in$
CSP$_{\overline{SP}}(BTX)$ has a solution. For a given domain ordering, let $x_1 < \ldots< x_n$ be
an ordering of $X$ 
such that BTX does not occur in $I$. In fact we will show a stronger
result by proving that the hypothesis $H_n$, below, holds for all $n
\geq 1$.
\begin{description}
\item[$H_n$]   for all arc-consistent instances
$I=\tuple{X,D,A,\cpt}$ from CSP$_{\overline{SP}}(BTX)$ with $|X| =
n$, if $a = \max(D(x_1))$, then $I$ has a solution $s$ with $s(x_1)
= a$.
\end{description}
Trivially, $H_1$ holds. Suppose that $H_{n-1}$ holds where $n>1$. We
will show that this implies $H_n$, which will complete the proof by
induction. Let $D'(x_i) =\{ b \in D(x_i) \mid (a,b) \in R_{1i}\}$
($i=2,\ldots,n$). Denote by $I'$ the subproblem of $I$ on variables
$2,\ldots,n$ and domains $D'(x_i)$ ($i=2,\ldots,n$). To complete
the inductive proof, it is sufficient to show that $I'$ is arc
consistent: since $I' \in$ CSP$_{\overline{SP}}(BTX)$ and has $n-1$
variables, by $H_{n-1}$, if $I'$ is arc consistent it has a solution
which can clearly be extended to a solution to $I$ by adding the
assignment $(x_1,a)$.

To show arc consistency of $I'$, consider any variable $x_i$ with $i>1$ and any
$b\in D'(X_i)$. (We know by arc consistency of $I$ that $D'(x_i)$ is non-empty;
i.e. $(a,b)\in R_{1i}$.) Now let $x_j$ be any other variable with $j>1$. To
complete the proof, it suffices to show that $b$ has a support in $D'(x_j)$. 

By arc consistency of $I$, we can deduce the existence of $c \in D(x_j)$
such that $(b,c) \in R_{ij}$, and then $a' \in D(x_1)$ such that
$(a',c) \in R_{1j}$, as well as $d \in D(x_j)$ such that $(a,d) \in
R_{1j}$  (i.e. $d \in D'(x_j)$). If $b$ has no support in $D'(x_j)$,
then we must have $c \notin D'(x_j)$ (i.e. $(a,c) \notin R_{1j}$) and
$(b,d) \notin R_{ij}$. Since $a$ is the maximum element of $D(x_1)$,
we know that $a' \leq a$. Indeed, since $(a,c) \notin R_{1j}$ and
$(a',c) \in R_{1j}$, we have that $a' < a$. But then the pattern BTX
occurs in $I$, as shown in Figure~\ref{fig:occurrencebtx}. 
%
%
This contradiction
shows that $b$ does have a support in $D'(x_j)$ and hence that $I'$
is arc consistent, as required.
\end{proof}

\begin{figure}
\centering
\begin{picture}(180,100)(0,0)

\put(10,50){\usebox{\vartwo}} \put(50,10){\usebox{\varone}}
\put(90,50){\usebox{\vartwo}}
\dashline{5}(20,80)(100,80) \dashline{5}(60,30)(100,60)
\put(20,80){\line(4,-1){80}} \put(20,60){\line(4,1){80}}
\put(20,80){\line(4,-5){40}} \put(60,30){\line(4,5){40}}
\put(60,10){\makebox(0,0){$x_i$}}
  \put(20,42){\makebox(0,0){$x_1$}}  \put(100,42){\makebox(0,0){$x_j$}}
  \put(160,60){\makebox{\shortstack{$x_1<x_i,x_j$ \\ $a>a'$}}}
  \put(5,80){\makebox(0,0){$a$}}
  \put(5,60){\makebox(0,0){$a'$}}
  \put(115,80){\makebox(0,0){$c$}}
  \put(115,60){\makebox(0,0){$d$}}
\put(75,30){\makebox(0,0){$b$}}

\end{picture}
\caption{An occurrence of the pattern BTX.}
\label{fig:occurrencebtx}
\end{figure}
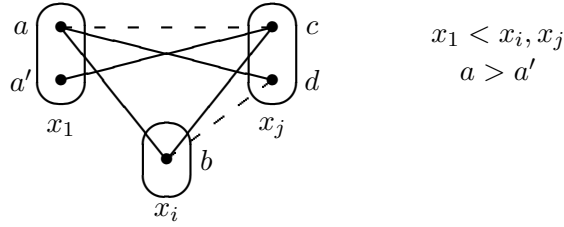

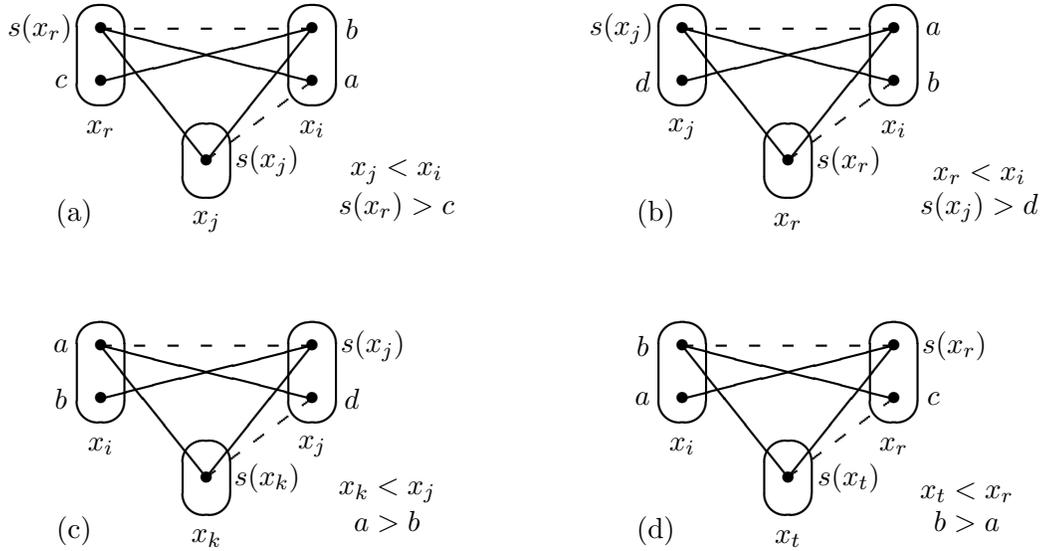
\begin{figure}
\centering
\begin{picture}(360,220)(0,0)

\put(0,120){
\begin{picture}(180,100)(0,0)
\put(10,50){\usebox{\vartwo}} \put(50,10){\usebox{\varone}}
\put(90,50){\usebox{\vartwo}}
\dashline{5}(20,80)(100,80) \dashline{5}(60,30)(100,60)
\put(20,80){\line(4,-1){80}} \put(20,60){\line(4,1){80}}
\put(20,80){\line(4,-5){40}} \put(60,30){\line(4,5){40}}
\put(60,7){\makebox(0,0){$x_j$}}
  \put(20,42){\makebox(0,0){$x_r$}}  \put(100,42){\makebox(0,0){$x_i$}}
  \put(110,10){\makebox{\shortstack{$x_j<x_i$ \\ $s(x_r)>c$}}}
  \put(-3,80){\makebox(0,0){$s(x_r)$}}
  \put(5,60){\makebox(0,0){$c$}}
  \put(115,80){\makebox(0,0){$b$}}
  \put(115,60){\makebox(0,0){$a$}}
\put(83,30){\makebox(0,0){$s(x_j)$}}
\put(10,10){\makebox(0,0){(a)}}
\end{picture}
}

\put(220,120){
\begin{picture}(180,100)(0,0)
\put(10,50){\usebox{\vartwo}} \put(50,10){\usebox{\varone}}
\put(90,50){\usebox{\vartwo}}
\dashline{5}(20,80)(100,80) \dashline{5}(60,30)(100,60)
\put(20,80){\line(4,-1){80}} \put(20,60){\line(4,1){80}}
\put(20,80){\line(4,-5){40}} \put(60,30){\line(4,5){40}}
\put(60,7){\makebox(0,0){$x_r$}}
  \put(20,42){\makebox(0,0){$x_j$}}  \put(100,42){\makebox(0,0){$x_i$}}
  \put(110,10){\makebox{\shortstack{$x_r<x_i$ \\ $s(x_j)>d$}}}
  \put(-3,80){\makebox(0,0){$s(x_j)$}}
  \put(5,60){\makebox(0,0){$d$}}
  \put(115,80){\makebox(0,0){$a$}}
  \put(115,60){\makebox(0,0){$b$}}
\put(83,30){\makebox(0,0){$s(x_r)$}}
\put(10,10){\makebox(0,0){(b)}}
\end{picture}
}

\put(0,0){
\begin{picture}(180,100)(0,0)
\put(10,50){\usebox{\vartwo}} \put(50,10){\usebox{\varone}}
\put(90,50){\usebox{\vartwo}}
\dashline{5}(20,80)(100,80) \dashline{5}(60,30)(100,60)
\put(20,80){\line(4,-1){80}} \put(20,60){\line(4,1){80}}
\put(20,80){\line(4,-5){40}} \put(60,30){\line(4,5){40}}
\put(60,7){\makebox(0,0){$x_k$}}
  \put(20,42){\makebox(0,0){$x_i$}}  \put(100,42){\makebox(0,0){$x_j$}}
  \put(110,10){\makebox{\shortstack{$x_k<x_j$ \\ $a>b$}}}
  \put(5,80){\makebox(0,0){$a$}}
  \put(5,60){\makebox(0,0){$b$}}
  \put(123,80){\makebox(0,0){$s(x_j)$}}
  \put(115,60){\makebox(0,0){$d$}}
\put(83,30){\makebox(0,0){$s(x_k)$}}
\put(10,10){\makebox(0,0){(c)}}
\end{picture}
}

\put(220,0){
\begin{picture}(180,100)(0,0)
\put(10,50){\usebox{\vartwo}} \put(50,10){\usebox{\varone}}
\put(90,50){\usebox{\vartwo}}
\dashline{5}(20,80)(100,80) \dashline{5}(60,30)(100,60)
\put(20,80){\line(4,-1){80}} \put(20,60){\line(4,1){80}}
\put(20,80){\line(4,-5){40}} \put(60,30){\line(4,5){40}}
\put(60,7){\makebox(0,0){$x_t$}}
  \put(20,42){\makebox(0,0){$x_i$}}  \put(100,42){\makebox(0,0){$x_r$}}
  \put(110,10){\makebox{\shortstack{$x_t<x_r$ \\ $b>a$}}}
  \put(5,80){\makebox(0,0){$b$}}
  \put(5,60){\makebox(0,0){$a$}}
  \put(123,80){\makebox(0,0){$s(x_r)$}}
  \put(115,60){\makebox(0,0){$c$}}
\put(83,30){\makebox(0,0){$s(x_t)$}}
\put(10,10){\makebox(0,0){(d)}}
\end{picture}
}

\end{picture}
\caption{Occurrences of the pattern BTI.}
\label{fig:occurrencesbti}
\end{figure}

\begin{thm} \label{thm:bti}
AC is a decision procedure for CSP$_{\overline{SP}}(BTI)$ where BTI is
the pattern shown in Figure~\ref{fig:bti}. 
\end{thm}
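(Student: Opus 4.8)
The plan is to follow the inductive template of Theorems~\ref{thm:emc} and~\ref{thm:btx}. Since establishing arc consistency only deletes points and hence cannot create the pattern, it suffices to show that every arc-consistent instance $I=\tuple{X,D,A,\cpt}\in$ CSP$_{\overline{SP}}(BTI)$ has a solution. Fix a domain ordering and a variable ordering $x_1<\cdots<x_n$ for which BTI does not occur in $I$, and argue by strong induction on $n=|X|$; for $n=1$ arc consistency makes $D(x_1)$ non-empty and the claim is immediate.

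For the inductive step I expect a hybrid of the two preceding proofs. One assembles a candidate assignment $\tuple{a_1,\ldots,a_n}$ by a greedy left-to-right sweep in the style of Equation~(\ref{eq:defai}), with $a_1=\max(D(x_1))$ and, for $i>1$, $a_i=\min_{1\le j<i}\max\{a\in D(x_i)\mid (a_j,a)\in R_{ji}\}$ (arc consistency guaranteeing the inner maxima exist); if this plain rule does not yield a solution outright, the fall-back, as in Theorem~\ref{thm:btx}, is to fix $x_1=\max(D(x_1))$, restrict every other domain to the values compatible with it, and show the restricted instance is again arc consistent so the induction hypothesis applies. In either case one argues by contradiction: assuming a constraint is violated, say $(a_j,a_k)\notin R_{jk}$ with $k$ minimal and then $j$ minimal, one forms the strictly decreasing chain $k=m_0>m_1=pred(k)>\cdots>m_t=1$ of predecessor indices (where $pred(i)<i$ attains the minimum defining $a_i$) and propagates along it a value $b_r\in D(x_{m_r})$ lying strictly above $a_{m_r}$ in the domain order, so that at $r=t$ one obtains $b_t\in D(x_1)$ with $b_t>a_1=\max(D(x_1))$, the desired contradiction. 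Each step $b_r>a_{m_r}$ is secured by showing that otherwise BTI would be homomorphic to $I$: once the six compatibility/incompatibility edges are in place (via arc consistency and the minimality choices), what remains to check is BTI's variable order $x<z$ together with its value order $\alpha>\beta$ inside the third variable (the one unordered with respect to $x$ and $z$). Because BTI stipulates only the variable order $x<z$ — not a domain order on the broken-triangle variable, as EMC does — the orientation of the chain cannot be forced uniformly, and the verification splits according to the relative positions in $x_1<\cdots<x_n$ of the three variables currently at play; these splits are exactly the four configurations of Figure~\ref{fig:occurrencesbti}. Some branches do not produce a picture directly but first invoke the induction hypothesis on a proper subinstance obtained by deleting one variable (legitimate because, from the form of the greedy rule, $a_i$ depends only on the subinstance induced by $x_1,\ldots,x_{i-1}$), and the auxiliary partial solution thus produced is the $s$ appearing in those pictures.

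The main obstacle is this case analysis. For EMC the domain order on the broken-triangle variable $z$ made the chain argument essentially one-directional, every step looking like Figure~\ref{fig:almostemc} or Figure~\ref{fig:almostemc2}; for BTI that order is unavailable, so at each step one must argue separately about where the third variable of the current triangle sits relative to the current pair and then verify, in each of the four resulting pictures, that all six edges of BTI, the order $x<z$, and the order $\alpha>\beta$ are genuinely realised — in particular that the extra value on the $y$-variable is forced to lie strictly below the other one, which is the negation of what avoiding BTI would need. The remaining ingredients — finiteness of the chain (the $m_r$ strictly decrease) and the fact that deleting a later variable leaves already-computed greedy values untouched — are routine, exactly as in Theorem~\ref{thm:emc}.
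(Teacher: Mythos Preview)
Your plan does not match the paper's proof, and more importantly the EMC-style chain argument you sketch breaks for BTI.

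First, the paper uses a \emph{different} greedy rule: $s(x_i)$ is the maximum element of $D(x_i)$ compatible with \emph{all} previous assignments $s(x_1),\ldots,s(x_{i-1})$, not the EMC rule $\min_j \max\{a\mid (a_j,a)\in R_{ji}\}$. With this rule the proof is not a chain descending to $x_1$ at all: assuming $T_i=\emptyset$, the paper picks the minimal $j$ with $\bigcap_{h\le j} R_{hi}(s(x_h))=\emptyset$, then a minimal $r<j$, and then two further indices $t<r$, $k<j$; at that point one of the four configurations of Figure~\ref{fig:occurrencesbti} is forced. The argument has \emph{bounded depth}; the $s$ in those pictures is the greedy solution under construction, not a partial solution coming from an induction hypothesis on a subinstance.

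Second, and this is the concrete failure of your plan: in the EMC chain, the inductive step corresponding to Figure~\ref{fig:almostemc2} maps $x_{m_{s(q)}}$ to the one-point pattern variable $x$ and $x_{m_q}$ to the two-point variable $z$. BTI requires $x<z$, i.e.\ $m_{s(q)}<m_q$. But $s(q)<q$ and the sequence $m_0>m_1>\cdots$ is strictly decreasing, so $m_{s(q)}>m_q$; the order you need is exactly reversed. The same obstruction hits the BTX-style fallback: in that argument the one-point variable is $x_i$ and the two-point variable is $x_j$ for arbitrary $i,j>1$, so there is no control over which is smaller, whereas BTI demands $x<z$. The reason EMC and BTX go through is that their variable order concerns the variable carrying the domain order (EMC: $y<z$; BTX: $y<x,z$), which in those chains is always an \emph{earlier} variable; BTI instead orders the single-point variable $x$ against $z$, and that order is not guaranteed along your chain.

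So the four pictures you cite are not ``the four orientations of a chain step'' but the terminal cases of a short, fixed-depth argument tailored to the max-compatible-with-all-previous rule. You should abandon the predecessor chain and instead prove directly that the set $T_i$ of values compatible with all earlier $s(x_h)$ is non-empty, by the two-stage minimality argument the paper uses.
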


\begin{proof}
Since establishing arc consistency only eliminates domain elements,
and hence cannot introduce the pattern, we only need to show that
every arc-consistent instance $I=\tuple{X,D,A,\cpt} \in$
CSP$_{\overline{SP}}(BTI)$ has a solution. Let $x_1 < \ldots < x_n$
be an ordering of $X$ such that BTI does not occur in $I$. In fact,
we will show a stronger result by proving that, for all
arc-consistent instances $I=\tuple{X,D,A,\cpt} \in$
CSP$_{\overline{SP}}(BTI)$, $I$ has a solution $s$ defined
recursively by: $s(x_i)$ is the maximum value in $D(x_i)$ compatible
with all previous assignments $s(x_1),\ldots,s(x_{i-1})$.

Let $T_i$ denote the set of values in $D(x_i)$ compatible with the
assignments $s(x_1),\ldots,s(x_{i-1})$ (defined recursively as
above). If each $T_i$ ($i=1,\ldots,n$) is non-empty, then
$s(x_1),\ldots,s(x_{n})$ is a solution to $I$. Let $I_i$ denote the
subinstance of $I$ on the first $i$ variables $x_1,\ldots,x_i$. $T_1
= D(x_1)$ and is non-empty since $I$ is arc consistent, and hence
$s(x_1)=\max(T_1)$ is a solution to $I_1$. Suppose that
$s(x_1),\ldots,s(x_{i-1})$, as defined above, is a solution to
$I_{i-1}$. We will show that $s(x_i)$ exists and hence that
$s(x_1),\ldots,s(x_{i})$ is a solution to $I_{i}$, which by a simple
induction will complete the proof. Suppose for a contradiction that
$s(x_i)$ does not exist, i.e. that $T_i = \emptyset$.

For all $u$, let $R_{ji}(u)$ denote the subset of $D(x_i)$ which is compatible
with the assignment of $u$ to $x_j$, i.e. $R_{ji}(u) \ = \ \{v \in
D(x_i) \mid (u,v) \in R_{ji}\}$. For $j \in \{1,\ldots,i-1\}$, let
$T_i^j$ be the intersection of the sets $R_{hi}(s(x_h))$
($h=1,\ldots,j$). By our hypothesis that $T_i = \emptyset$, we know
that $T_i^{i-1} = T_i = \emptyset$. By arc consistency, $T^{1}_i \neq
\emptyset$. Let $j \in \{2,\ldots,i-1\}$ be minimal such that $T_i^j
= \emptyset$. Thus
\begin{equation}
R_{ji}(s(x_j)) \cap T_i^{j-1} \ = \ \emptyset   \label{eq:RTj}
\end{equation}
By arc consistency, $\exists b \in D(x_i)$ such that $(s(x_j),b) \in
R_{ji}$ (i.e. $b \in R_{ji}(s(x_j))$). By Equation (\ref{eq:RTj}) and
definition of $T_i^{j-1}$, there is some $r < j$ such that
$(s(x_r),b) \notin R_{ri}$. Choose $r$ to be minimal. Consider $a \in
T_i^{j-1}$ (which is non-empty since $j$ was chosen to be minimal).
Then $(s(x_r),a) \in R_{ri}$, by definition of $T_i^{j-1}$, since $r
\leq j-1$. But $(s(x_j),a) \notin R_{ji}$ by (\ref{eq:RTj}). By arc
consistency, $\exists c \in D_r$ such that $(c,b) \in R_{ri}$ and
$\exists d \in D(x_j)$ such that $(d,a) \in R_{ji}$. By our inductive
hypothesis that $s(x_1),\ldots,s(x_{i-1})$ is a solution to
$I_{i-1}$, we know that $(s(x_r),s(x_j)) \in R_{rj}$ since $r,j < i$.
Because of their different compatibilities with, respectively, $b$
and $a$, we know that $c \neq s(x_r)$ and $d \neq s(x_j)$. If $c <
s(x_r)$, then, since $r<i$, the pattern BTI
occurs in $I$, as shown in Figure~\ref{fig:occurrencesbti}(a) on
page~\pageref{fig:occurrencesbti};
%
%
so we can deduce that $c > s(x_r)$. Similarly, if $d
< s(x_j)$, then, since $j<i$, the pattern BTI occurs in $I$, as shown in
Figure~\ref{fig:occurrencesbti}(b) on
page~\pageref{fig:occurrencesbti};
%
so we can deduce that $d > s(x_j)$.

But, by definition of $s$, $s(x_r)$ is the maximal element of $T_r$.
So, since $c > s(x_r)$, there must be some $t < r$ such that
$(s(x_t),c) \notin R_{tr}$. By minimality of $r$, we know that
$(s(x_t),b) \in R_{ti}$. Similarly, there must be some $k < j$ such
that $(s(x_k),d) \notin R_{kj}$ since $d > s(x_j)$ and $s(x_j)$ is
the maximal element of $T_j$. Since $a \in T_i^{j-1}$ and $k < j$, we
know that $(s(x_k),a) \in R_{ki}$.
 By our inductive hypothesis that $s(x_1),\ldots,s(x_{i-1})$ is
a solution to $I_{i-1}$, we know that $(s(x_k),s(x_j)) \in R_{kj}$
and $(s(x_t),s(x_r)) \in R_{tr}$ since $k,j,t,r < i$. We know that $a
\neq b$ because they have different compatibilities with $s(x_j)$. If
$a > b$, then, since $k < j$, the pattern BTI occurs in $I$, as
shown in Figure~\ref{fig:occurrencesbti}(c) on
page~\pageref{fig:occurrencesbti}.
%
And, if $b > a$, then the
pattern BTI occurs in $I$, as shown in Figure~\ref{fig:occurrencesbti}(d) on
page~\pageref{fig:occurrencesbti}.
%
%
This contradiction shows that $T_i \neq \emptyset$ (for each
$i=1,\ldots,n$) and hence that $I$ has a solution $s$.
\end{proof}

We conclude this section with a pattern which is essentially different from
the patterns EMC, BTX, and BTI, since it includes two negative edges that meet
but has no domain or variable order. The tractability of this pattern was
previously unknown~\cite{Escamocher14:thesis}.

\begin{figure}
\centering
\begin{picture}(120,100)(0,0)

\put(10,50){\usebox{\vartwo}} \put(50,10){\usebox{\varone}}
\put(90,50){\usebox{\vartwo}}
\dashline{5}(20,60)(60,30) \dashline{5}(60,30)(100,60)
\put(20,80){\line(4,-1){80}} \put(20,60){\line(4,1){80}}
\put(20,80){\line(4,-5){40}} \put(60,30){\line(4,5){40}}
\put(60,8){\makebox(0,0){$x$}}
\put(20,42){\makebox(0,0){$y$}}
\put(100,42){\makebox(0,0){$z$}}
\put(5,80){\makebox(0,0){$\alpha$}}
\put(5,60){\makebox(0,0){$\beta$}}
\put(115,80){\makebox(0,0){$\gamma$}}
\put(115,60){\makebox(0,0){$\delta$}}
\put(75,30){\makebox(0,0){$\epsilon$}}

\end{picture}
\caption{The pattern LX.}
\label{fig:lx}
\end{figure}
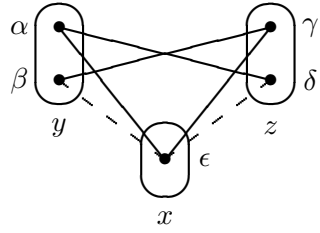

\begin{thm} \label{thm:lx}
AC is a decision procedure for CSP$_{\overline{SP}}(LX)$ where LX is
the pattern shown in Figure~\ref{fig:lx}. 
\end{thm}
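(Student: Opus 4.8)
The plan is to follow the scheme of the proofs of Theorems~\ref{thm:emc}--\ref{thm:bti}. Since establishing arc consistency only deletes domain values, and hence cannot create the (order-free) pattern LX, it suffices to show that every arc-consistent instance $I=\tuple{X,D,A,\cpt}$ in which LX does not occur has a solution; I would argue by induction on $n=|X|$, the case $n=1$ being trivial. For the inductive step the aim is to choose one variable, say $x_1$, and one value $a\in D(x_1)$, and to restrict every other domain to $D'(x_i)=R_{1i}(a)=\{b\in D(x_i)\mid (a,b)\in R_{1i}\}$, obtaining a subinstance $I|_a$ on $x_2,\dots,x_n$. Each $D'(x_i)$ is non-empty by arc consistency of $I$, and $I|_a\in$ CSP$_{\overline{SP}}(LX)$ as a subinstance of $I$; so if $I|_a$ is arc consistent then the induction hypothesis gives it a solution, which extends by $x_1\mapsto a$ to a solution of $I$. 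Everything thus reduces to the claim that $x_1$ and $a$ can be chosen so that $I|_a$ is arc consistent. Note that, because LX carries no value order, the ``maximum compatible value'' greedy used in Theorems~\ref{thm:btx} and~\ref{thm:bti} is unavailable, so $a$ must be pinned down instead by an extremal property of the instance.

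The clean first step is a local observation. Suppose $I|_a$ is not arc consistent, i.e.\ some $b\in D'(x_i)$ has no support in some $D'(x_j)$, where $1,i,j$ are distinct. Three applications of arc consistency of $I$ yield: a support $c$ of $b$ in $D(x_j)$ with $(a,c)\notin R_{1j}$ (otherwise $c\in D'(x_j)$ would support $b$); a support $d$ of $a$ in $D(x_j)$ with $(b,d)\notin R_{ij}$ (since $b$ has no support in $D'(x_j)$); and a support $a'$ of $c$ in $D(x_1)$, necessarily with $a'\ne a$. Among the points $a,a'\in D(x_1)$, $b\in D(x_i)$, $c,d\in D(x_j)$ we then have $(a,b)\in R_{1i}$, $(a,d)\in R_{1j}$, $(a,c)\notin R_{1j}$, $(b,c)\in R_{ij}$, $(b,d)\notin R_{ij}$ and $(a',c)\in R_{1j}$. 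If in addition $(a',b)\notin R_{1i}$, then the map $\epsilon\mapsto b$, $\alpha\mapsto a$, $\beta\mapsto a'$, $\gamma\mapsto c$, $\delta\mapsto d$ is a homomorphism from LX into $I$ (all six defined compatibilities of LX are matched, and the two undefined pairs $\alpha\gamma,\beta\delta$ impose nothing), contradicting LX-freeness. Hence $(a',b)\in R_{1i}$: the ``missing'' support of $b$ is repaired if one restricts by $a'$ rather than by $a$.

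To finish, I would fix an extremal choice of $(x_1,a)$ — for instance one maximising the number of variables $x_i$ with $R_{1i}(a)=D(x_i)$, or minimising $\sum_{i\neq 1}\bigl(|D(x_i)|-|D'(x_i)|\bigr)$ — and, assuming $I|_a$ still fails, iterate the local observation while chasing supports along a chain of variables, in the style of the hypotheses $H_r$ in the proof of Theorem~\ref{thm:emc} and the case analysis of Figure~\ref{fig:occurrencesbti} in the proof of Theorem~\ref{thm:bti}. At each step the absence of LX forces a new positive or negative edge between the points met so far, and the chain must eventually either close into a complete copy of LX or expose a pair strictly better than $(x_1,a)$ — a contradiction in both cases. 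I expect the main obstacle to be exactly this last part: choosing the right invariant to carry along the chain (the analogue of the condition ``$b_r>a_{m_r}$'' of Theorem~\ref{thm:emc}) and checking, through the unavoidable case split on which of the newly-forced edges are present, that every branch of the chase terminates. Once the claim is proved the induction closes at once.
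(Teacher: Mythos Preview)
Your local observation is correct and essentially the same computation that drives the BTX argument, but the globalisation step you sketch---choosing $(x_1,a)$ extremally and then chasing a chain of supports until either LX appears or the extremality is violated---is a genuine gap, and I do not see how to close it along those lines. The difficulty is that the value $a'$ your local step produces is only guaranteed to repair the single missing support $(a',b)\in R_{1i}$; nothing prevents $a'$ from losing supports at other variables, so neither of your proposed extremal quantities is obviously improved by switching from $a$ to $a'$. A chain in the style of the EMC proof needs a monotone invariant (there it is the numerical order $b_r>a_{m_r}$), and LX carries no order to supply one. You flag this as the expected obstacle, and it is a real one.

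The paper takes a different route that sidesteps the chase entirely. It strengthens the induction hypothesis to: \emph{every} value at \emph{every} variable of an arc-consistent LX-free instance extends to a solution. This stronger statement is what makes the step go through. Given $x_i$ and $a\in D(x_i)$, one looks at the full (unrestricted) subinstance $I_{n-1}$ on $X\setminus\{x_i\}$ and, for each solution $s$ of $I_{n-1}$, records the set $CV(\langle x_i,a\rangle,s)$ of variables on which $s$ is compatible with $a$. The key observation is that if two such sets were incomparable---say $x_j\in CV(s)\setminus CV(s')$ and $x_k\in CV(s')\setminus CV(s)$---then the five points $a,s(x_j),s'(x_j),s(x_k),s'(x_k)$ realise LX. Hence these sets form a chain under inclusion; pick $s_a$ with $CV(s_a)$ maximal. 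Now the strengthened hypothesis on $I_{n-1}$ gives, for each $x_j$, a solution $s$ with $s(x_j)$ equal to some AC-support of $a$, so $x_j\in CV(s)\subseteq CV(s_a)$. Thus $CV(s_a)=X\setminus\{x_i\}$ and $s_a$ extends by $x_i\mapsto a$.

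The contrast worth noting: your plan tries to prove that $I|_a$ is arc consistent for a well-chosen $a$, which is exactly what the BTX proof does using the domain order to single out $a=\max D(x_1)$. The paper instead proves something stronger (solvability from \emph{every} $a$) and never needs $I|_a$ to be arc consistent; the nested-family argument replaces both the extremal choice and the chain-chase.
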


\begin{proof}
Since establishing arc consistency only eliminates domain elements,
and hence cannot introduce the pattern, we only need to show that
every arc-consistent instance $I \in$
CSP$_{\overline{SP}}(LX)$ has a solution. In fact we will show a
stronger result by proving that the hypothesis $H_n$, below, holds
for all $n \geq 1$.
\begin{description}
\item[$H_n$]  \ for all arc-consistent instances $I=\tuple{X, D, A, \cpt}
\in$ CSP$_{\overline{SP}}(LX)$ with $|X| = n$,
$\forall x_i \in X$, $\forall a \in D(x_i)$, $I$ has a solution $s$ such that $s(x_i) = a$.
\end{description}
Trivially, $H_1$ holds. Suppose that $H_{n-1}$ holds where $n>1$. We
will show that this implies $H_n$, which will complete the proof by
induction.

Consider an arc-consistent instance $I=\tuple{X, D, A, \cpt}$ from CSP$_{\overline{SP}}(LX)$
with $X = \{x_1,\ldots,x_n\}$ and let $a
\in D(x_i)$ where $1 \leq i \leq n$. Let $I_{n-1}$ denote the
subproblem of $I$ on variables $X \setminus \{x_i\}$. For any
solution $s$ of $I_{n-1}$, we denote by $CV(\tuple{x_i,a},s)$ the set
of variables in $X \setminus \{x_i\}$ on which $s$ is compatible with
the unary assignment $\tuple{x_i,a}$, i.e.
$$CV(\tuple{x_i,a},s) \ \ = \ \ \{x_j \in X \setminus \{x_i\} \mid (a,s(x_j)) \in R_{ij} \}
$$
Consider two distinct solutions $s,s'$ to $I_{n-1}$. If we have $x_j
\in CV(\tuple{x_i,a},s) \setminus CV(\tuple{x_i,a},s')$ and $x_k \in
CV(\tuple{x_i,a},s') \setminus CV(\tuple{x_i,a},s)$, then the pattern LX
occurs in $I$ under the mapping $x\mapsto x_i$, $y\mapsto x_j$, $z\mapsto x_k$,
$\alpha\mapsto s(x_j)$,
$\beta\mapsto s'(x_j)$, $\gamma\mapsto s'(x_k)$, $\delta\mapsto s(x_k)$,
$\epsilon\mapsto a$ (see
Figure~\ref{fig:lx} on
page~\pageref{fig:lx}). Since LX does not occur in $I$, we can deduce
that the sets $CV(\tuple{x_i,a},s)$, as $s$ varies over all solutions
to $I_{n-1}$, form a nested family of sets. Let $s_a$ be a solution
to $I_{n-1}$ such that $CV(\tuple{x_i,a},s_a)$ is maximal for
inclusion.

Consider any $x_j \in X \setminus \{x_i\}$. By arc consistency,
$\exists b \in D(x_j)$ such that $(a,b) \in R_{ij}$. By our inductive
hypothesis $H_{n-1}$, there is a solution $s$ to $I_{n-1}$ such that
$s(x_j)=b$. Since $(a,s(x_j))=(a,b) \in R_{ij}$, we have $x_j \in
CV(\tuple{x_i,a},s)$. By maximality of $s_a$, this implies $x_j \in
CV(\tuple{x_i,a},s_a)$, i.e. $(a,s_a(x_j)) \in R_{ij}$. Since this is
true for any $x_j \in X \setminus \{x_i\}$, we can deduce that $s_a$
can be extended to a solution to $I$ (which assigns $a$ to $x_i$) by
simply adding the assignment $\tuple{x_i,a}$ to $s_a$.
\end{proof}

The proof of Theorem~\ref{thm:lx} shows that to find a solution to arc-consistent
instances in CSP$_{\overline{SP}}(LX)$, it suffices to maintain arc consistency during search (a 
standard algorithm in CSP solvers) making arbitrary choices of values to assign
to each variable: search will always be backtrack-free.

\section{Recognition problem for unknown orders}

One possible way of applying our new tractable classes is
the automatic recognition by a general-purpose CSP solver that an instance (or a sub-instance encountered
during search after instantiation of a subset of variables) belongs to one of our tractable classes.
In this section we study the complexity of this recognition problem. 

For an unordered pattern $P$ of size $k$, checking for (the non-occurrence of) $P$ in
a CSP instance $I$ is solvable in time $O(|I|^k)$ by simple exhaustive search.
Consequently, checking for (the non-occurrence of) unordered patterns of
constant size is solvable in polynomial time. However, the situation is less
obvious for ordered patterns since we have to test all possible orderings of $I$.

The following result was shown in~\cite{cjs10:aij-btp}.

\begin{thm}\label{thm:recog-btp}
Given a binary CSP instance $I$ with a fixed total order on the domain, there is
a polynomial-time algorithm to find a total variable ordering such that BTP does not
occur in $I$ (or to determine that no such ordering exists).
\end{thm}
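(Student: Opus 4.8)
The plan is to fix the total domain order and search for a suitable total variable ordering by reducing the problem to 2-SAT (equivalently, to a digraph-reachability/strong-connectivity computation). First I would recall the domain-ordered description of BTP: once the domain order is fixed, an occurrence of BTP involves three variables $x,y,z$ with $x,y$ before $z$ in the variable order, two compatibility edges from $x$ and from $y$ into two points of $z$, a compatibility edge between the points of $x$ and $y$, and the ``broken'' pattern of positive/negative edges into $z$ forcing the two points of $z$ to be distinct and ordered. The key observation is that, for a \emph{fixed} domain order, whether a triple of variables $\{x,y,z\}$ \emph{can} form a BTP occurrence depends only on which of $x,y,z$ plays the role of the ``top'' variable $z$; so for each unordered triple we can compute in polynomial time the set of ``bad'' designations of a top variable, i.e.\ those choices of $z\in\{x,y,z\}$ for which a broken triangle on $z$ actually exists in $I$ (checking the existence of suitable points and the required compatibilities/incompatibilities in the fixed domain order is a constant-size search over $|D|^{O(1)}$ points).

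Next I would translate ``BTP does not occur'' into a constraint on the variable order: BTP is \emph{avoided} iff for every unordered triple $\{x,y,z\}$ and every bad designation of its top variable $w$, it is \emph{not} the case that both other variables precede $w$; equivalently, at least one of the other two variables comes after $w$. This is a disjunctive ordering constraint, and the standard technique (as used for BTP recognition and for similar ``variable elimination order'' problems) is: a variable $w$ is placed \emph{last} among $\{x,y,z\}$ only if this is consistent, so instead one argues about which variable can safely be placed last overall. Concretely, I would show a variable $v$ can be the last variable in a valid ordering iff for no pair $\{x,y\}\subseteq X\setminus\{v\}$ is $v$ a bad top for the triple $\{x,y,v\}$; such a $v$ exists iff one exists, and after removing $v$ the property to be maintained on $X\setminus\{v\}$ is exactly the same kind of condition, so we can recurse. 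This greedy ``peel off a valid last variable'' argument gives the polynomial-time algorithm, and the correctness proof is by induction on $|X|$: if any valid ordering exists then in particular some variable is last in it, and one shows any \emph{eligible} last variable can be swapped to the end without creating a BTP occurrence (because being a bad top only depends on the triple, not on the positions of the lower two variables, and moving an eligible $v$ to the last position only changes precedences of the form ``something now precedes $v$'', which is harmless precisely by eligibility).

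The main obstacle I expect is the swapping/exchange step in the inductive correctness argument: one must verify that designating an eligible variable $v$ as the global last element never introduces a BTP occurrence on some triple $\{x,y,v\}$ that was previously avoided because, say, $y$ came after $v$. Handling this requires checking that eligibility (no bad top designation for $v$ on any triple containing $v$) is exactly the right invariant, and that among the remaining variables a valid ordering still exists — i.e.\ that the ``an eligible last variable exists whenever a valid ordering exists'' claim really does propagate down the recursion. Once that exchange lemma is in place, the algorithm is simply: repeatedly find an eligible variable, output it as the next-to-last-remaining, remove it, and recurse; if at some stage no eligible variable exists, report that no valid ordering exists. The running time is polynomial since each round does $O(|X|^2)$ triple-checks, each of cost polynomial in $|D|$.
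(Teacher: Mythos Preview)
Your greedy ``peel off an eligible last variable'' argument is correct: if a valid ordering exists, its last variable is necessarily eligible (it is the top $z$ of no broken triangle, since everything precedes it); conversely, for \emph{any} eligible $v$, restricting an existing valid ordering to $X\setminus\{v\}$ still avoids BTP (deleting a variable cannot create occurrences), so the recursion succeeds, and appending $v$ at the end is safe precisely by eligibility. The exchange step you flag as the main obstacle therefore goes through cleanly. (Your opening mention of a 2-SAT reduction is a detour you never actually take; the argument you end up giving is purely the greedy one.)

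Note, however, that the present paper does \emph{not} prove this theorem: it simply cites it from~\cite{cjs10:aij-btp}, and your greedy argument is essentially the proof given there. What the paper itself contributes in this section is the analogous Theorem~\ref{thm:recog-bti} for BTI, BTX and EMC, proved by a different route: encode the unknown positions as integer variables $O_1,\ldots,O_n$ and, for each triple on which the pattern would occur with a designated extremal variable, impose a ternary ordering constraint (e.g.\ $O_i>\min(O_j,O_k)$ for BTX); these constraints are shown to be min-closed and hence polynomially solvable. The same device applies to BTP via the dual constraint $O_k<\max(O_i,O_j)$ for each bad top $x_k$, which is max-closed. Your greedy method is more elementary and constructs the ordering directly; the paper's reduction is more uniform across the several patterns it treats and delegates correctness to the known tractability of max/min-closed constraints.
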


We show that the same result holds for the other three ordered patterns studied
in this paper, namely BTI, BTX, and EMC.

\begin{thm}\label{thm:recog-bti}
Given a binary CSP instance $I$ with a fixed total order on the domain
and a pattern $P \in \{$BTI, BTX, EMC$\}$, there is
a polynomial-time algorithm to find a total variable ordering such that
$P$ does not occur in $I$ (or to determine that no such ordering exists).
\end{thm}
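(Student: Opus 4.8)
The plan is to reduce each of the three patterns to a condition on pairs (or small tuples) of variables, so that finding a valid total order becomes an instance of a well-behaved ordering problem. Consider first BTX and EMC. In both of these patterns the variable order constrains the ``middle'' variable $y$ (the one carrying the double-edged point $\epsilon$) to come before the other two: in BTX we need $y < x$ and $y < z$, while in EMC we only need $y < z$. The key observation to establish is that the occurrence of BTX (resp.\ EMC) in $I$ under a given total variable order is equivalent to the existence of some triple of variables $(u,v,w)$ such that the \emph{unordered} sub-pattern appears on $(u,v,w)$ \emph{and} a certain order relation among $u,v,w$ holds (the relation being ``$u$ before $v$ and $u$ before $w$'' for BTX, ``$v$ before $w$'' for EMC, with $u$ playing the role of $y$). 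For each fixed choice of the three variables and of which compatibility/incompatibility edges realise the pattern --- of which there are only $O(|I|^{3})$ many, since the domain order is fixed --- one reads off a forbidden configuration of the variable order, i.e.\ a small set of forbidden orderings of $\{u,v,w\}$. So the question ``does there exist a total variable order avoiding $P$?'' becomes ``does there exist a total order on $X$ avoiding a given finite list of forbidden orderings of triples?''

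For EMC the forbidden configuration is simply ``$v$ before $w$'' for certain ordered pairs $(v,w)$ determined by the unordered-pattern occurrences; avoiding all of them means we must have $w$ before $v$ for each such pair, i.e.\ we have a set of mandatory precedence constraints, and a valid total order exists iff the resulting digraph is acyclic --- a topological sort. For BTX the forbidden configuration on a triple $(u,v,w)$ is ``$u < v$ and $u < w$'', whose negation is ``$v < u$ or $w < u$''; this is a $2$-clause over the precedence variables, so we get an instance of a $2$-SAT-like ordering problem. Here I would need to be slightly careful: the ``variables'' $x_a < x_b$ must be made mutually consistent (antisymmetry, transitivity, totality). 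The standard device is to introduce a Boolean variable for each ordered pair and add the transitivity clauses; but transitivity clauses are ternary, so plain $2$-SAT does not directly apply. Instead I would argue directly: the constraint ``for each bad triple, $v$ before $u$ or $w$ before $u$'' is a \emph{betweenness}/precedence constraint of a restricted shape, and I expect it can be solved by a greedy elimination argument --- repeatedly find a variable that can safely be placed last (one that is forced before nothing, or whose being-last violates no clause once earlier choices are fixed) --- or by a reduction to Horn-SAT after observing the clauses have a monotone structure. This greedy/elimination correctness argument is the main obstacle.

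For BTI the variable order required by the pattern is just $x < z$ where $x$ is the variable carrying the point $\alpha$ joined by the positive edge to $\gamma$ and by... more precisely, BTI has order $x<z$ with $y$ the middle variable, so an occurrence on a variable-triple needs one specific variable before another specific variable while the third is unconstrained. Thus, exactly as for EMC, BTI-avoidance reduces to a collection of mandatory precedence constraints $x_b$-before-$x_a$ over pairs of variables, and a valid order exists iff the induced digraph on $X$ is acyclic; the algorithm is again a topological sort, run after enumerating all $O(|I|^{3})$ unordered-pattern occurrences (the cubic factor because the pattern has three variables, times a constant for the choice of witnessing edges).

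Putting it together, the algorithm is: (1) enumerate all placements of the three pattern variables onto triples of variables of $I$ and all ways of realising the required positive/negative edges using the fixed domain order --- polynomially many; (2) from each placement extract the forbidden order-configuration it would create, obtaining either a precedence constraint (BTI, EMC) or a disjunctive precedence constraint (BTX); (3) solve the resulting constraint satisfaction over total orders --- topological sort in the BTI/EMC cases, a greedy elimination argument in the BTX case; (4) if a total order is produced, output it, otherwise report that none exists, appealing to Definition~\ref{def:occ} and Lemma~\ref{lem:occ-unordered} to argue that non-occurrence of the (basic) pattern under \emph{some} total order is equivalent to non-occurrence under the chosen order. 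The soundness of step (4) rests on the patterns being basic and simple, so that occurrence is exactly ``unordered pattern homomorphism plus order compatibility,'' with no merging subtleties; I would state this equivalence as a short lemma first. The hard part, as noted, is verifying that the BTX ordering problem (disjunctive precedence constraints of the special shape ``$v<u \lor w<u$'') is polynomially solvable and that the greedy elimination is correct.
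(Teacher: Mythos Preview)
Your reductions for BTI and EMC to a set of mandatory precedence constraints, solved by acyclicity/topological sort, are correct and in fact more direct than what the paper does. The paper instead takes a single uniform route for all three patterns: it introduces integer position variables $O_1,\dots,O_n\in\{1,\dots,n\}$ and, for every occurrence of the unordered pattern on a variable triple, writes down the constraint expressing that the forbidden order relation fails. For BTX (with $x_i$ playing the role of $y$) this constraint is the ternary $O_i>\min(O_j,O_k)$; for BTI and EMC it degenerates to a binary $O_j>O_k$. The key observation is that all these constraints are min-closed, so the auxiliary instance $\Pi_I$ is tractable by~\cite{Jeavons95:maxclosed}. This dispatches BTX with no separate argument.

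Your BTX case, by contrast, is explicitly left open: you correctly reduce to ordering constraints of the shape ``$v<u$ or $w<u$'' and propose a greedy elimination, but you do not prove it correct, and you flag this as the main obstacle. The greedy does in fact work, and the missing argument is short. Build the order left to right; call $x$ \emph{placeable} if every constraint with $x$ in the $u$-role already has at least one of its $v,w$ among the placed variables. If some valid total order $\sigma$ exists, then at every stage the $\sigma$-minimum $m$ of the remaining set $S$ is placeable: any constraint with $u=m$ is satisfied in $\sigma$ by some $v<_\sigma m$ or $w<_\sigma m$, and that witness cannot lie in $S$ (else $m$ is not the $\sigma$-minimum of $S$), so it is already placed. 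Hence the greedy never gets stuck when a solution exists, and the order it outputs is valid by construction. This is really the combinatorial shadow of the min-closure of $O_i>\min(O_j,O_k)$; either completion is fine, but the paper's has the virtue of handling all three patterns in one stroke.
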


\begin{proof}
We give a proof only for BTX as the same idea works for the other two patterns as
well. Given a binary CSP instance $I$ with $n$ variables
$x_1,\ldots,x_n$, we define an associated CSP instance $\Pi_I$ that has
a solution precisely when there exists a suitable variable ordering for $I$.
To construct $\Pi_I$, let $O_1,\ldots,O_n$ be variables taking values in
$\{1,\ldots,n\}$ representing positions in the ordering. We impose the ternary constraint
$O_i>\min(O_j,O_k)$ for all triples of variables $x_i,x_j,x_k$ in $I$ such that the
BTX pattern occurs for some $\alpha,\beta\in D(x_i)$ with $\alpha>\beta$,
$\epsilon\in D(x_j)$, and $\gamma,\delta\in D(x_k)$ when the variables are ordered $x_i<x_j,x_k$.
The instance $\Pi_I$ has a solution precisely if there is an ordering of the
variables $x_1,\ldots,x_n$ of $I$ for which BTX does not occur.
Note that if the solution obtained represents a partial order (i.e. if
$O_i$ and $O_j$ are assigned the same value for some $i\neq j$), then it can be extended
to a total order which still satisfies all the constraints
by arbitrarily choosing the order of those $O_i$'s that are assigned the same value.
This reduction is polynomial in the size of $I$.
We now show that all constraints in $\Pi_I$ are ternary min-closed and thus $\Pi_I$ can be
solved in polynomial time~\cite{Jeavons95:maxclosed}.
Let $\tuple{p_1,q_1,r_1}$
and $\tuple{p_2,q_2,r_2}$ satisfy any constraint in $\Pi_I$. Then
$p_1>\min(q_1,r_1)$ and $p_2>\min(q_2,r_2)$, and thus
$\min(p_1,p_2)>\min(\min(q_1,r_1),\min(q_2,r_2))$ $=$
$\min(\min(q_1,q_2),\min(r_1,r_2))$.
Consequently,
$\langle\min(p_1,p_2)$, $\min(q_1,q_2)$, $\min(r_1,r_2)\rangle$ also satisfies the
constraint. We can deduce that all constraints in $\Pi_I$ are min-closed.
\end{proof}

Using the same technique, we can also show the following.

\begin{thm} \label{thm:recog-dom-bti}
Given a binary CSP instance $I$ with a fixed total variable order and
a pattern $P \in \{$BTI, BTX$\}$, there is a polynomial-time
algorithm to find a total domain ordering such that $P$ does not
occur in $I$ (or determine that no such ordering exists).
\end{thm}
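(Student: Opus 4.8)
The plan is to mirror the construction used in the proof of Theorem~\ref{thm:recog-bti}, but now encoding a search for a total \emph{domain} ordering rather than a total variable ordering. Given a binary CSP instance $I$ with a fixed total variable order, I would introduce, for each domain value $a$ appearing in $I$, a position variable $P_a$ taking values in $\{1,\ldots,|D|\}$, where $|D|$ is the total number of domain values. The key observation is that in the patterns BTI and BTX the only domain order is a single relation $\alpha > \beta$ on two values in the domain of one variable (there is no $\gamma > \delta$, unlike in EMC). Hence, for every triple of variables $x_i,x_j,x_k$ and every choice of points that, under the already-fixed variable order, realizes the unordered part of the pattern together with the requirement $\alpha>\beta$, I would impose the binary constraint $P_\alpha > P_\beta$ on the corresponding position variables. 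A solution to the resulting instance $\Pi_I$ is precisely a total domain ordering (after breaking ties arbitrarily, as in the proof of Theorem~\ref{thm:recog-bti}) for which the forbidden pattern does not occur.

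Next I would check that all imposed constraints in $\Pi_I$ are min-closed (equivalently, by Lemma-type reasoning, max-closed after flipping), so that $\Pi_I$ is solvable in polynomial time by~\cite{Jeavons95:maxclosed}. Each constraint has the form $P_\alpha > P_\beta$, i.e. a simple precedence constraint, which is trivially min-closed: if $\tuple{p_1,q_1}$ and $\tuple{p_2,q_2}$ both satisfy $p>q$, then $\min(p_1,p_2) > \min(q_1,q_2)$. The reduction is clearly polynomial in $|I|$ since there are $O(|I|)$ domain values and $O(|I|^3)$ triples of variables, each contributing a constant number of candidate pattern embeddings (bounded by the constant pattern size).

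The main obstacle, and the reason this only works for BTI and BTX but not EMC, is the second domain-order relation $\gamma > \delta$ present in EMC: capturing ``$\alpha>\beta$ \emph{and} $\gamma>\delta$'' on position variables of two distinct variables would require a disjunctive/conjunctive combination that is no longer expressible as a single min-closed (precedence) constraint, so the polynomial-time machinery breaks down — this is why EMC is excluded from the statement. For BTI and BTX I should double-check one subtlety: since the variable order is now \emph{fixed}, for each triple I must enumerate which of the three variables plays the role of the pattern variable carrying the ordered pair (the role assignment is constrained by the fixed variable order of $x,y,z$ in the pattern), and only generate a constraint $P_\alpha>P_\beta$ when the fixed variable order is consistent with the pattern's variable order; embeddings inconsistent with the variable order simply cannot occur and contribute nothing. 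Once these bookkeeping points are settled, correctness of the reduction (an ordering exists iff $\Pi_I$ is satisfiable) and the min-closedness argument finish the proof exactly as in Theorem~\ref{thm:recog-bti}.
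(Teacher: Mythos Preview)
Your approach is essentially the one the paper intends: reduce to a min-closed CSP over position variables, exactly as in Theorem~\ref{thm:recog-bti}, exploiting that BTI and BTX carry only a single domain-order relation $\alpha>\beta$. The reasoning about why EMC falls outside this scheme (two domain-order relations yielding a disjunctive constraint that is not min-closed) is also correct.

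There is, however, a sign slip in the constraint you generate. For each configuration of points and edges that matches the pattern under the fixed variable order, the pattern occurs precisely when $\alpha>\beta$; to \emph{prevent} the occurrence you must therefore impose $P_\alpha < P_\beta$ (equivalently $P_\beta > P_\alpha$), not $P_\alpha > P_\beta$. As written, a solution to your $\Pi_I$ would be a domain ordering in which every such configuration \emph{does} extend to an occurrence of the pattern. Once the inequality is reversed, your min-closedness check goes through unchanged (the constraint $p>q$ is min-closed regardless of which of the two position variables is called $p$), and the tie-breaking argument from Theorem~\ref{thm:recog-bti} applies verbatim.
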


It is known that determining a domain order for which MC does not occur is
NP-hard~\cite{Green08:ai}. Not surprisingly, for EMC when the domain order is
not known, detection becomes NP-hard. For the case of BTX and BTI, if neither
the domain nor variable order is known, finding orders for which the pattern
does not occur is again NP-hard.

\begin{thm} \label{thm:hardness}
For the pattern EMC, even for a fixed total variable order of an arc-consistent
binary CSP instance $I$, it is NP-hard to find
a total domain ordering of $I$ such that the pattern does not occur in $I$.
For patterns BTX and BTI, it is NP-hard to find
total variable and domain orderings of an arc-consistent
binary CSP instance $I$ such that the pattern does not occur in $I$.
\end{thm}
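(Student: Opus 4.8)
The plan is to prove each of the three hardness statements by a polynomial-time reduction from a known NP-hard problem, the natural source being the problem of~\cite{Green08:ai} of deciding, for a binary CSP instance, whether there is a domain order under which MC does not occur. For the EMC part, where the variable order is fixed, I would transform a given instance $I$ on variables $x_1,\dots,x_n$ into an instance $I'$ on variables $x_1<\dots<x_n<w$ for a fresh variable $w$, where $w$ (together with a few fresh padding values on the original variables, if needed) is equipped with compatibilities chosen so that every occurrence of MC in $I$ --- which, by the variable-swap symmetry of MC, may be taken on a pair $x_i<x_j$ in the fixed order, thereby already realising the constraint $y<z$ of EMC --- is accompanied by a point of $w$ playing the role of $\epsilon$, i.e.\ a point compatible with the larger value of $x_i$ and of $x_j$ but incompatible with the smaller value of $x_j$, so that EMC occurs in $I'$. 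Conversely, since MC occurs (as a pattern) in EMC, a domain order avoiding EMC in $I'$ restricts to a domain order avoiding MC in $I$, provided the gadget does not carry any MC-shaped configuration of its own. A subtlety to take care of in the construction is precisely this: $w$ must not be usable as the $y$- or $z$-variable of a spurious EMC occurrence, which is arranged by keeping the relevant points of $w$ mergeable, or by giving $w$ only a single point and supplying the required $\epsilon$-incompatibilities on fresh padding values. The reduction being polynomial then yields NP-hardness.

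For BTX and BTI neither order is fixed, which gives the construction more room but also more obligations. Here I would use a reduction in which auxiliary variables and values serve two purposes at once: encoding the source instance (again a suitable NP-hard ordering problem), and acting as rigidity gadgets that force, in any variable/domain completion not respecting the intended encoding, the appearance of the relevant three-variable configuration. Since BTX imposes $y<x,z$ with $\alpha>\beta$ while BTI imposes only $x<z$ with $\alpha>\beta$, the two gadgets must be tailored to these different partial orders; and since MC occurs in neither BTX nor BTI, they cannot simply be lifted from the MC argument but must be built directly around the two negative edges and the positive triangle of each pattern. The output of the reduction is then a yes-instance of the ordering problem if and only if some variable order and some domain order jointly avoid the pattern.

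In all three cases the construction itself is routine; the main obstacle is the correctness proof, which has two recurring difficulties. First, pattern occurrence is quantified over \emph{all} consistent linear extensions of the partial orders (Definition~\ref{def:occ}), so one must argue both that the gadget \emph{forces} the pattern on every no-instance and that it \emph{never accidentally produces} the pattern on a yes-instance, no matter how the orders are completed and no matter which mergings are performed; choosing the compatibilities of the auxiliary values so that these two requirements do not conflict is the heart of the argument. Second, the statement insists on \emph{arc-consistent} instances, since AC-solvability (and hence the recognition question) is only meaningful there; this must be secured explicitly, typically by ensuring that every value of every auxiliary variable has a compatible partner in every other variable, without disturbing the gadget's behaviour.
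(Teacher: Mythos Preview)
Your proposal has genuine gaps in all three parts and diverges entirely from the paper's method.

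For EMC, reducing from the MC domain-ordering problem is a natural thought, but the gadget is not actually constructed, and the obvious attempts fail. To extend \emph{every} potential MC occurrence in $I$ to an EMC occurrence in $I'$ you need, for each variable $x_j$ and each value $d\in D(x_j)$, a point of $w$ incompatible with $d$ and compatible with everything else (since the domain order, and hence which value is $\delta$, is unknown at construction time). But then $w$, sitting last in the fixed variable order, can itself play the role of the $z$-variable of EMC: for any $\alpha>\beta$ in $D(x_i)$ and almost any ordering of $D(w)$ one finds $p_{i,\alpha}>p_{k,\epsilon}$ in $D(w)$ together with an $\epsilon\in D(x_k)$ satisfying all EMC edges, so the backward direction collapses. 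Your two suggested fixes are mutually inconsistent (a single point in $w$ cannot supply the many distinct $\epsilon$-incompatibilities required) and the ``padding values'' alternative is never described. Separately, the sentence ``since MC occurs in EMC, a domain order avoiding EMC in $I'$ restricts to one avoiding MC in $I$'' has the containment backwards: MC occurring in EMC gives $\text{CSP}_{\overline{\text{SP}}}(\text{MC})\subseteq\text{CSP}_{\overline{\text{SP}}}(\text{EMC})$, i.e.\ avoiding MC implies avoiding EMC, which is the \emph{other} direction of the reduction. For BTX and BTI you have no argument at all: you correctly note that MC occurs in neither, so the MC reduction is unavailable, but you propose no source problem and no gadget.

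The paper's proof is a direct, uniform reduction from 3SAT. Propositional variables $X_i$ are encoded by pairs of domain values $a_i,a_{i+n}$ with $X_i=\texttt{true}$ iff $a_i>a_{i+n}$. For each clause one builds a small gadget on three CSP variables (each adjacent to an equality-chain neighbour supplying the third variable of the pattern) with carefully placed positive edges so that avoiding the pattern forces a disjunction like $(a_i>a_{i+n})\vee(b>c)$; three such disjunctions with cyclically inconsistent auxiliaries---$b>c$, $c>d$, $d>b$ for EMC, and $x_p>x_q$, $x_q>x_r$, $x_r>x_p$ for BTX/BTI---then force the clause. Arc consistency is ensured by a universal top value $a_{\max}$. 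This single idea handles all three patterns with only minor per-pattern adjustments, and in particular produces for BTX and BTI the mixed domain/variable-order disjunctions that your sketch never reaches.
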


\begin{proof}
To show this, we exhibit a polynomial reduction from 3SAT. Given an
$n$-variable instance $I_{3SAT}$ of 3SAT with $m$ clauses, we create a basic domain $B$ of size
$2n$ with a value $a_i$ for each variable $X_i$ in $I_{3SAT}$ and
another $a_{n+i}$ for its negation $\overline{X_i}$. We construct a binary CSP instance $I_{CSP}$,
such that the domain of each variable contains $B$, and
such that there exists an appropriate order of $B$ if and only if $I_{3SAT}$
has a solution. For
each total ordering $>$ of $B$ there is a corresponding
assignment to the variables of $I_{3SAT}$ given by $X_i =$ {\tt true}
if and only if $a_i > a_{i+n}$. To complete the reduction we have to
show
how to impose a clause, e.g. $(a_i > a_{i+n}) \vee (a_j > a_{j+n})
\vee (a_k > a_{k+n})$. The basic construction of $I_{CSP}$ is
composed of the following elements: $N=6m$ variables linked by the equality
constraints $x_i = x_{i+1}$ ($i=1,\ldots,N-1$).
%

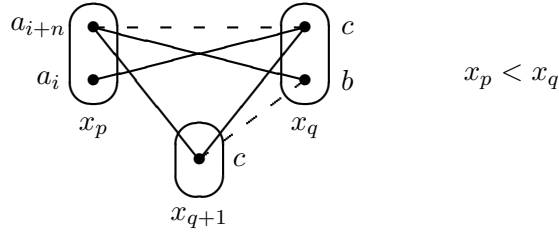
\begin{figure}
\centering
\begin{picture}(185,100)(-5,0)

\put(10,50){\usebox{\vartwo}} \put(50,10){\usebox{\varone}}
\put(90,50){\usebox{\vartwo}}
\dashline{5}(20,80)(100,80) \dashline{5}(60,30)(100,60)
\put(20,80){\line(4,-1){80}} \put(20,60){\line(4,1){80}}
\put(20,80){\line(4,-5){40}} \put(60,30){\line(4,5){40}}
\put(60,8){\makebox(0,0){$x_{q+1}$}}
  \put(20,42){\makebox(0,0){$x_{p}$}}  \put(100,42){\makebox(0,0){$x_q$}}
  \put(160,60){\makebox{$x_{p}<x_q$}}
  \put(-1,80){\makebox(0,0){$a_{i+n}$}}
 \put(3,60){\makebox(0,0){$a_i$}}
  \put(116,80){\makebox(0,0){$c$}}
  \put(116,60){\makebox(0,0){$b$}}
\put(75,30){\makebox(0,0){$c$}}

\end{picture}
\caption{To avoid the pattern EMC, we must have $(a_i > a_{i+n})$ or $ (b > c)$.}
\label{fig:avoidemc}
\end{figure}

Consider first the case of EMC. We assume the variable order given by
$x_i < x_j$ iff $i < j$. To impose a clause we construct a
gadget on three variables $x_p,x_q,x_r$ of $I_{CSP}$ which are not linked by
any other constraints in our construction (in particular, not
consecutive variables linked by equality constraints). To be specific, for the $i$th clause,
we choose $p=2i-1$, $q=4i-1$ and $r=6i-1$. We add four
extra values $a_{\max},b,c,d$ (of which $b,c,d$ depend on the clause)
to the basic domain $B$ of each of these three variables  $x_p,x_q,x_r$. The domain size
of each variable $x_i$ in $I_{CSP}$ is thus $2n+4$. The
value $a_{\max}$ is compatible with all values in the domains of the
other variables. This ensures arc consistency for all domain values,
and if $a_{\max} > a$ for all domain values $a \neq a_{\max}$, then
the pattern cannot occur on $a_{\max}$. We now specify the constraints
between the variables $x_p,x_q,x_r$. We place negative edges
between all pairs of values other than $a_{\max}$ except for three
positive edges in each constraint. In the constraint between $x_p$
and $x_q$ we add the three positive edges: $(a_i,b)$, $(a_i,c)$,
$(a_{i+n},b)$. If $a_{i+n} > a_i$ in $D(x_p)$ and $c > b$ in
$D(x_q)$, then the pattern EMC occurs in $I_{CSP}$ as shown in Figure~\ref{fig:avoidemc}. 
The third variable of the pattern is $x_{q+1}$ 
which is linked by
an equality constraint to $x_q$ and by a trivial constraint to $x_p$. Thus, to avoid the pattern occurring
in $I_{CSP}$ the domain order must respect
$$(a_i > a_{i+n}) \vee (b > c)$$
By adding similar constraints between $x_p,x_r$ and $x_q,x_r$, we can
also impose
\begin{eqnarray*}
& & (a_j > a_{j+n}) \vee (c > d) \ \ \ \ {\rm and} \\
& & (a_k > a_{k+n}) \vee (d > b)
\end{eqnarray*}
By imposing these three inequalities, we impose $(a_i > a_{i+n}) \vee
(a_j > a_{j+n}) \vee (a_k > a_{k+n})$ (since we cannot simultaneously
have $b>c>d>b$) which corresponds to the clause $X_i \vee X_j \vee
X_k$ in $I_{3SAT}$. By inversing the roles of $a_i$ and $a_{i+n}$ we
can clearly impose clauses involving negative literals. This
completes the reduction from $I_{3SAT}$ to the problem of finding a
domain ordering of a binary CSP instance so that EMC does not occur.
Since this reduction is clearly polynomial, we can conclude that the
problem of testing the existence of a domain order so that EMC does
not occur is NP-hard.

Now consider the case of the pattern BTX. We use a similar
construction to the case of EMC, above. Again, to simulate a clause
$X_i \vee X_j \vee X_k$ in $I_{3SAT}$ we need to impose $(a_i >
a_{i+n}) \vee (a_j > a_{j+n}) \vee (a_k > a_{k+n})$. This can be
achieved by imposing:
\begin{eqnarray*}
& & (a_i > a_{i+n}) \vee (x_p > x_q), \\
& & (a_j > a_{j+n}) \vee (x_q > x_r)  \ \ \ \ {\rm and} \\
& & (a_k > a_{k+n}) \vee (x_r > x_p)
\end{eqnarray*}
For example, to impose $(a_i > a_{i+n}) \vee (x_p > x_q)$ we place
the same gadget as above (i.e. positive edges $(a_i,b)$, $(a_i,c)$,
$(a_{i+n},b)$) on each of the three pairs of variables $(x_p,x_q)$,
$(x_{p+1},x_q)$ and $(x_{q+1}, x_p)$. To avoid BTX on variables
$x_p,x_q,x_{q+1}$, on variables  $x_{p+1},x_q,x_{q+1}$ and on
variables $x_{q+1},x_p,x_{p+1}$, we must have
\begin{eqnarray*}
& & (a_i > a_{i+n}) \vee (x_p > x_q) \vee (x_p > x_{q+1}), \\
& & (a_i > a_{i+n}) \vee (x_{p+1} > x_q) \vee (x_{p+1} > x_{q+1})  \ \ \ \ {\rm and} \\
& & (a_i > a_{i+n}) \vee (x_{q+1} > x_p) \vee (x_{q+1} > x_{p+1})
\end{eqnarray*}
Since there is a total strict ordering on the variables, this is
logically equivalent to imposing $(a_i > a_{i+n}) \vee (x_p > x_q)$,
as required (provided none of the variables $x_p,x_{p+1},x_q,x_{q+1}$
are used in other gadgets), given our freedom to choose the positions of
$x_{p+1}$ and $x_{q+1}$ in the order.

Finally, we consider the pattern BTI. But this is an easier case than
BTX. We just need to place the gadget on $(x_p,x_q)$ to impose $(a_i
> a_{i+n}) \vee (x_p > x_q)$ to avoid the pattern BTI.

Thus, EMC is NP-hard to detect when the domain order of the instance
is not fixed, and BTX, BTI are NP-hard to detect when neither the
domain order nor the variable order of the instance is fixed.
\end{proof}

The results from this section are summarised in Table~\ref{tab:summary}. We use
the star to denote uninteresting cases. Note that since LX is an unordered
pattern the questions of determining variable and/or domain orders are not
interesting. Similarly, since pattern BTP only orders variables the question of
determining a domain order is not interesting. An important point is that for a fixed domain order
(which is a natural assumption for numerical domain values, for example) we can effectively
exhaust over all variable orders in polynomial time for all variable-ordered patterns
studied in this paper (BTP, BTI, BTX, EMC).

\begin{table}[h]
\begin{center}
\begin{tabular}{cccccc}
& \textbf{BTP} & \textbf{BTI} & \textbf{BTX} & \textbf{EMC} & \textbf{LX} \\
\hline
domain order given & P {\small [Thm~\ref{thm:recog-btp}]} & P {\small [Thm~\ref{thm:recog-bti}]} & P {\small [Thm~\ref{thm:recog-bti}]} & P {\small [Thm~\ref{thm:recog-bti}]} & * \\
variable order given & * & P {\small [Thm~\ref{thm:recog-dom-bti}]} & P {\small [Thm~\ref{thm:recog-dom-bti}]}& NP-h {\small [Thm~\ref{thm:hardness}]} & * \\
neither order given & P {\small [Thm~\ref{thm:recog-btp}]} & NP-h {\small
[Thm~\ref{thm:hardness}]} & NP-h {\small [Thm\ref{thm:hardness}]} & NP-h {\small
[Thm~\ref{thm:hardness}]} & *
\end{tabular}
\end{center}
\caption{Summary of the complexity (P or NP-hard) of the recognition problem for each of our patterns
for the cases in which the domain order is given (but the variable order has to be determined), 
the variable order is given (but the domain order has to be determined), or neither order is given.}
\label{tab:summary}
\end{table}

\section{Characterisation of patterns solved by AC}

\subsection{Instances not solved by arc consistency} \label{sec:5.1}

We first give a set of instances, each of which is arc consistent and
has no solution. If for any of these instances $I$, we have $I \in$
CSP$_{\overline{SP}}(P)$, then this constitutes a proof, by
Lemma~\ref{lem:not-ac}, that pattern
$P$ is not solved by arc consistency. For simplicity of presentation,
in each of the following instances,
we suppose the variable order given by $x_i < x_j$ if $i < j$.

\thicklines  \setlength{\unitlength}{1.5pt}
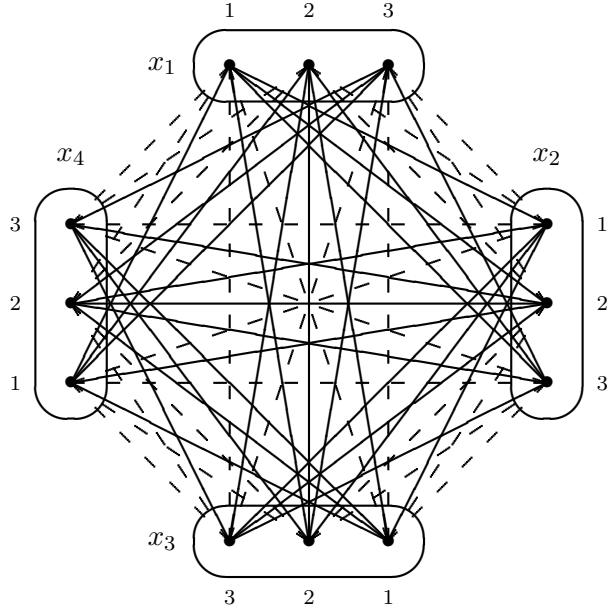
\begin{figure}
\centering
\begin{picture}(160,160)(10,10)

\put(30,90){\oval(18,58)} \put(30,70){\makebox(0,0){$\bullet$}}
\put(30,90){\makebox(0,0){$\bullet$}} \put(30,110){\makebox(0,0){$\bullet$}}
\put(150,90){\oval(18,58)} \put(150,70){\makebox(0,0){$\bullet$}}
\put(150,90){\makebox(0,0){$\bullet$}} \put(150,110){\makebox(0,0){$\bullet$}}
\put(90,30){\oval(58,18)} \put(70,30){\makebox(0,0){$\bullet$}}
\put(90,30){\makebox(0,0){$\bullet$}} \put(110,30){\makebox(0,0){$\bullet$}}
\put(90,150){\oval(58,18)} \put(70,150){\makebox(0,0){$\bullet$}}
\put(90,150){\makebox(0,0){$\bullet$}} \put(110,150){\makebox(0,0){$\bullet$}}

\put(53,150){\makebox(0,0){$x_1$}}
\put(150,127){\makebox(0,0){$x_2$}}
\put(53,30){\makebox(0,0){$x_3$}}
\put(30,127){\makebox(0,0){$x_4$}}

\dashline{4}(30,110)(70,150) \dashline{4}(30,110)(90,150)
\dashline{4}(30,90)(70,150) \dashline{4}(30,90)(90,150)
\dashline{4}(30,70)(150,70) \dashline{4}(30,70)(150,110)
\dashline{4}(30,110)(150,70) \dashline{4}(30,110)(150,110)
\dashline{4}(70,30)(70,150) \dashline{4}(70,30)(110,150)
\dashline{4}(110,30)(70,150) \dashline{4}(110,30)(110,150)
\dashline{4}(90,150)(150,90) \dashline{4}(90,150)(150,110)
\dashline{4}(110,150)(150,90) \dashline{4}(110,150)(150,110)
\dashline{4}(30,70)(70,30) \dashline{4}(30,70)(90,30)
\dashline{4}(30,90)(70,30) \dashline{4}(30,90)(90,30)
\dashline{4}(90,30)(150,90) \dashline{4}(90,30)(150,70)
\dashline{4}(110,30)(150,90) \dashline{4}(110,30)(150,70)

\put(30,90){\line(1,0){120}} \put(30,90){\line(6,1){120}} \put(30,90){\line(6,-1){120}}
\put(150,90){\line(-6,1){120}} \put(150,90){\line(-6,-1){120}}
\put(90,30){\line(0,1){120}} \put(90,30){\line(1,6){20}} \put(90,30){\line(-1,6){20}}
\put(90,150){\line(1,-6){20}} \put(90,150){\line(-1,-6){20}}
\put(30,70){\line(1,2){40}} \put(30,70){\line(3,4){60}} \put(30,70){\line(1,1){80}}
\put(110,150){\line(-2,-1){80}} \put(110,150){\line(-4,-3){80}}
\put(30,110){\line(1,-2){40}} \put(30,110){\line(3,-4){60}} \put(30,110){\line(1,-1){80}}
\put(110,30){\line(-2,1){80}} \put(110,30){\line(-4,3){80}}
\put(70,150){\line(2,-1){80}} \put(70,150){\line(4,-3){80}} \put(70,150){\line(1,-1){80}}
\put(150,70){\line(-1,2){40}} \put(150,70){\line(-3,4){60}}
\put(70,30){\line(2,1){80}} \put(70,30){\line(4,3){80}} \put(70,30){\line(1,1){80}}
\put(150,110){\line(-1,-2){40}} \put(150,110){\line(-3,-4){60}}

\put(70,164){\makebox(0,0){\scriptsize 1}} \put(90,164){\makebox(0,0){\scriptsize 2}}
 \put(110,164){\makebox(0,0){\scriptsize 3}}
\put(70,16){\makebox(0,0){\scriptsize 3}} \put(90,16){\makebox(0,0){\scriptsize 2}}
 \put(110,16){\makebox(0,0){\scriptsize 1}}
\put(16,70){\makebox(0,0){\scriptsize 1}} \put(16,90){\makebox(0,0){\scriptsize 2}}
 \put(16,110){\makebox(0,0){\scriptsize 3}}
\put(164,70){\makebox(0,0){\scriptsize 3}} \put(164,90){\makebox(0,0){\scriptsize 2}}
 \put(164,110){\makebox(0,0){\scriptsize 1}}

\end{picture}
\caption{The instance $I_{K4}$.}
\label{fig:IK4}
\end{figure}

\thicklines  \setlength{\unitlength}{1.4pt}
\begin{figure}
\centering
\begin{picture}(170,160)(10,10)

\put(30,90){\oval(18,58)} \put(30,70){\makebox(0,0){$\bullet$}}
\put(30,90){\makebox(0,0){$\bullet$}}
\put(30,110){\makebox(0,0){$\bullet$}} \put(150,100){\oval(18,38)}
\put(150,90){\makebox(0,0){$\bullet$}}
\put(150,110){\makebox(0,0){$\bullet$}} \put(80,30){\oval(38,18)}
\put(70,30){\makebox(0,0){$\bullet$}}
\put(90,30){\makebox(0,0){$\bullet$}} \put(80,150){\oval(38,18)}
\put(70,150){\makebox(0,0){$\bullet$}}
\put(90,150){\makebox(0,0){$\bullet$}}

\put(55,150){\makebox(0,0){$x_1$}}
\put(150,125){\makebox(0,0){$x_3$}} \put(55,30){\makebox(0,0){$x_2$}}
\put(30,125){\makebox(0,0){$x_0$}}

\dashline{4}(30,110)(70,150) \dashline{4}(30,90)(70,150)
\dashline{4}(30,70)(70,30) \dashline{4}(30,110)(70,30)
\dashline{4}(30,70)(150,110) \dashline{4}(30,90)(150,110)
\dashline{4}(90,150)(150,90) \dashline{4}(90,30)(150,90)
\dashline{4}(90,30)(90,150)

\put(30,90){\line(1,0){120}} \put(30,110){\line(1,0){120}}
\put(150,90){\line(-6,1){120}} \put(150,90){\line(-6,-1){120}}

\put(30,90){\line(1,1){60}} \put(30,70){\line(1,2){40}}
\put(30,110){\line(3,2){60}} \put(30,70){\line(3,4){60}}

\put(30,90){\line(1,-1){60}} \put(30,90){\line(2,-3){40}}
\put(30,110){\line(3,-4){60}} \put(30,70){\line(3,-2){60}}

\put(70,30){\line(0,1){120}} \put(70,30){\line(1,6){20}}
\put(90,30){\line(-1,6){20}} \put(70,150){\line(2,-1){80}}
\put(70,150){\line(4,-3){80}} \put(90,150){\line(3,-2){60}}
\put(70,30){\line(1,1){80}} \put(70,30){\line(4,3){80}}
\put(90,30){\line(3,4){60}}

\put(70,164){\makebox(0,0){\scriptsize 1}}
\put(90,164){\makebox(0,0){\scriptsize 0}}
\put(70,16){\makebox(0,0){\scriptsize 1}}
\put(90,16){\makebox(0,0){\scriptsize 0}}
\put(16,70){\makebox(0,0){\scriptsize 1}}
\put(16,90){\makebox(0,0){\scriptsize 2}}
\put(16,110){\makebox(0,0){\scriptsize 3}}
\put(164,90){\makebox(0,0){\scriptsize 0}}
\put(164,110){\makebox(0,0){\scriptsize 1}}


\end{picture}
\caption{The instance $I_{4}$.} \label{fig:I4}
\end{figure}
\thicklines \setlength{\unitlength}{1pt}

\thicklines  \setlength{\unitlength}{1.3pt}%
\newsavebox{\vart}
\savebox{\vart}(20,40){
\begin{picture}(20,40)(0,0)
\put(10,20){\oval(18,38)} \put(10,10){\makebox(0,0){$\bullet$}}
\put(10,30){\makebox(0,0){$\bullet$}}
\end{picture}
}
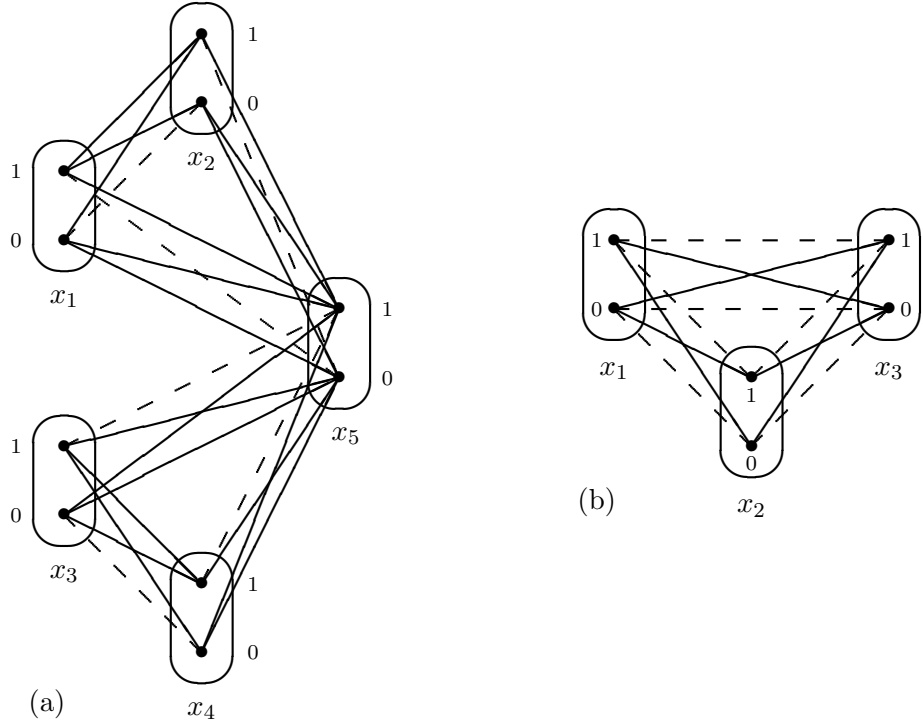
\begin{figure}
\centering
\begin{picture}(280,220)(0,0)

\put(0,0){
\begin{picture}(120,220)(0,-20)
\put(10,30){\usebox{\vart}} \put(50,-10){\usebox{\vart}}
\put(90,70){\usebox{\vart}} \put(10,110){\usebox{\vart}}
\put(50,150){\usebox{\vart}}
\dashline{5}(20,120)(60,160) \dashline{5}(20,140)(100,80) \dashline{5}(60,180)(100,80)
\dashline{5}(20,60)(100,100) \dashline{5}(20,40)(60,0) \dashline{5}(60,20)(100,100)
\put(20,140){\line(1,1){40}} \put(20,140){\line(2,1){40}} \put(20,120){\line(2,3){40}}
\put(20,140){\line(2,-1){80}} \put(20,120){\line(2,-1){80}} \put(20,120){\line(4,-1){80}}
\put(60,180){\line(1,-2){40}} \put(60,160){\line(1,-2){40}} \put(60,160){\line(2,-3){40}}
\put(20,60){\line(4,1){80}} \put(20,40){\line(4,3){80}} \put(20,40){\line(2,1){80}}
\put(20,60){\line(2,-3){40}} \put(20,60){\line(1,-1){40}} \put(20,40){\line(2,-1){40}}
\put(60,20){\line(2,3){40}} \put(60,0){\line(1,2){40}} \put(60,0){\line(2,5){40}}
\put(6,140){\makebox(0,0){\scriptsize 1}}
\put(6,120){\makebox(0,0){\scriptsize 0}}
\put(6,60){\makebox(0,0){\scriptsize 1}}
\put(6,40){\makebox(0,0){\scriptsize 0}}
\put(75,180){\makebox(0,0){\scriptsize 1}}
\put(75,160){\makebox(0,0){\scriptsize 0}}
\put(75,20){\makebox(0,0){\scriptsize 1}}
\put(75,0){\makebox(0,0){\scriptsize 0}}
\put(114,100){\makebox(0,0){\scriptsize 1}}
\put(114,80){\makebox(0,0){\scriptsize 0}}
\put(20,103){\makebox(0,0){$x_1$}} \put(60,143){\makebox(0,0){$x_2$}}  \put(20,23){\makebox(0,0){$x_3$}}
\put(60,-17){\makebox(0,0){$x_4$}} \put(102,63){\makebox(0,0){$x_5$}}

\put(15,-15){\makebox(0,0){(a)}}
\end{picture}
}

\put(160,60){
\begin{picture}(280,100)(0,0)
\put(10,50){\usebox{\vart}} \put(50,10){\usebox{\vart}} \put(90,50){\usebox{\vart}}
\dashline{5}(20,60)(100,60) \dashline{5}(20,80)(60,40) \dashline{5}(60,20)(100,60)
\dashline{5}(20,80)(100,80) \dashline{5}(20,60)(60,20) \dashline{5}(60,40)(100,80)

\put(20,80){\line(4,-1){80}} \put(20,60){\line(4,1){80}}
\put(20,80){\line(2,-3){40}} \put(20,60){\line(2,-1){40}}
\put(60,40){\line(2,1){40}} \put(60,20){\line(2,3){40}}

\put(60,2){\makebox(0,0){$x_2$}}
\put(20,42){\makebox(0,0){$x_1$}}  \put(100,42){\makebox(0,0){$x_3$}}

\put(15,80){\makebox(0,0){\scriptsize 1}}
\put(15,60){\makebox(0,0){\scriptsize 0}}
\put(60,35){\makebox(0,0){\scriptsize 1}}
\put(60,15){\makebox(0,0){\scriptsize 0}}
\put(105,80){\makebox(0,0){\scriptsize 1}}
\put(105,60){\makebox(0,0){\scriptsize 0}}

\put(15,4){\makebox(0,0){(b)}}
\end{picture}
}

\end{picture}
\caption{The instances (a) $I^{SAT}_{2\Delta}$ and (b) $I^{2COL}_{3}$.} \label{fig:ISAT2Delta}
\end{figure}
\thicklines \setlength{\unitlength}{1pt}

\thicklines \setlength{\unitlength}{1.2pt}
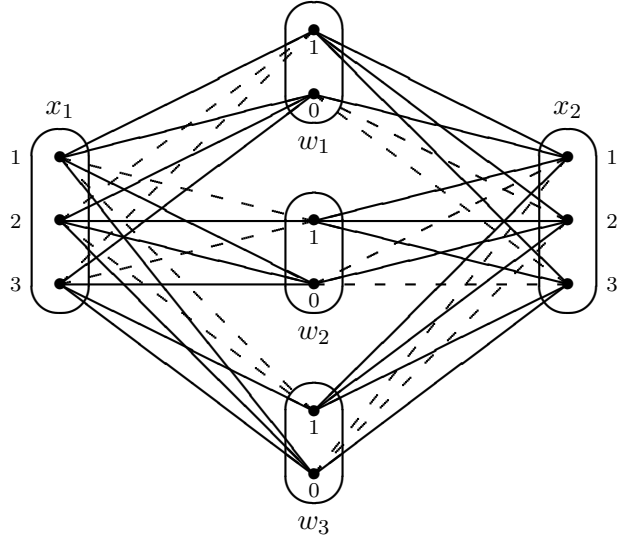
\begin{figure}
\centering
\begin{picture}(200,170)(10,10)

\put(30,110){\oval(18,58)} \put(30,90){\makebox(0,0){$\bullet$}}
\put(30,110){\makebox(0,0){$\bullet$}}
\put(30,130){\makebox(0,0){$\bullet$}} \put(110,40){\oval(18,38)}
\put(110,30){\makebox(0,0){$\bullet$}}
\put(110,50){\makebox(0,0){$\bullet$}} \put(110,100){\oval(18,38)}
\put(110,90){\makebox(0,0){$\bullet$}}
\put(110,110){\makebox(0,0){$\bullet$}} \put(110,160){\oval(18,38)}
\put(110,150){\makebox(0,0){$\bullet$}}
\put(110,170){\makebox(0,0){$\bullet$}} \put(190,110){\oval(18,58)}
\put(190,90){\makebox(0,0){$\bullet$}}
\put(190,110){\makebox(0,0){$\bullet$}}
\put(190,130){\makebox(0,0){$\bullet$}}

\put(16,90){\makebox(0,0){\scriptsize 3}}
\put(16,110){\makebox(0,0){\scriptsize 2}}
 \put(16,130){\makebox(0,0){\scriptsize 1}}
\put(204,90){\makebox(0,0){\scriptsize 3}}
\put(204,110){\makebox(0,0){\scriptsize 2}}
 \put(204,130){\makebox(0,0){\scriptsize 1}}
\put(110,165){\makebox(0,0){\scriptsize 1}}
\put(110,145){\makebox(0,0){\scriptsize 0}}
\put(110,105){\makebox(0,0){\scriptsize 1}}
\put(110,85){\makebox(0,0){\scriptsize 0}}
\put(110,45){\makebox(0,0){\scriptsize 1}}
\put(110,25){\makebox(0,0){\scriptsize 0}}

\put(30,145){\makebox(0,0){$x_1$}}
\put(190,145){\makebox(0,0){$x_2$}}
\put(110,134){\makebox(0,0){$w_1$}}
\put(110,74){\makebox(0,0){$w_2$}} \put(110,14){\makebox(0,0){$w_3$}}

\dashline{4}(30,90)(110,170) \dashline{4}(30,110)(110,170)
\dashline{4}(30,90)(110,110) \dashline{4}(30,130)(110,110)
\dashline{4}(30,110)(110,50) \dashline{4}(30,130)(110,50)
\dashline{4}(190,110)(110,150) \dashline{4}(190,90)(110,150)
\dashline{4}(190,130)(110,90) \dashline{4}(190,90)(110,90)
\dashline{4}(190,110)(110,30) \dashline{4}(190,130)(110,30)

\put(30,130){\line(2,1){80}} \put(30,130){\line(4,1){80}}
\put(30,130){\line(2,-1){80}} \put(30,130){\line(4,-5){80}}
\put(30,110){\line(2,1){80}} \put(30,110){\line(1,0){80}}
\put(30,110){\line(4,-1){80}} \put(30,110){\line(1,-1){80}}
\put(30,90){\line(4,3){80}} \put(30,90){\line(1,0){80}}
\put(30,90){\line(2,-1){80}} \put(30,90){\line(4,-3){80}}

\put(190,130){\line(-2,1){80}} \put(190,130){\line(-4,1){80}}
\put(190,130){\line(-4,-1){80}} \put(190,130){\line(-1,-1){80}}
\put(190,110){\line(-4,3){80}} \put(190,110){\line(-1,0){80}}
\put(190,110){\line(-4,-1){80}} \put(190,110){\line(-4,-3){80}}
\put(190,90){\line(-1,1){80}} \put(190,90){\line(-4,1){80}}
\put(190,90){\line(-2,-1){80}} \put(190,90){\line(-4,-3){80}}

\end{picture}
\caption{The instance $I_{5}$ (with variable order $w_1 < w_2 < w_3 < x_1 <
x_2$).} \label{fig:I5}
\end{figure}

\thicklines  \setlength{\unitlength}{1.3pt}%
\newsavebox{\vartwobig}
\savebox{\vartwobig}(20,40){
\begin{picture}(20,40)(0,0)
\put(10,20){\oval(18,38)} \put(10,10){\makebox(0,0){$\bullet$}}
\put(10,30){\makebox(0,0){$\bullet$}}
\end{picture}
}
\begin{figure}
\centering
\begin{picture}(280,100)(0,0)
\put(10,50){\usebox{\vartwobig}} \put(50,10){\usebox{\vartwobig}}
\put(90,50){\usebox{\vartwobig}} \put(170,50){\usebox{\vartwobig}}
\put(210,10){\usebox{\vartwobig}} \put(250,50){\usebox{\vartwobig}}
\dashline{5}(20,60)(100,60) \dashline{5}(20,80)(60,40) \dashline{5}(60,20)(100,60)
\dashline{5}(100,80)(180,80)
\dashline{5}(180,60)(260,60) \dashline{5}(180,60)(220,20) \dashline{5}(220,40)(260,80)
\put(20,80){\line(4,-1){80}} \put(20,60){\line(4,1){80}} \put(20,80){\line(1,0){80}}
\put(20,80){\line(2,-3){40}} \put(20,60){\line(2,-1){40}} \put(20,60){\line(1,-1){40}}
\put(60,40){\line(1,1){40}} \put(60,40){\line(2,1){40}} \put(60,20){\line(2,3){40}}
\put(180,80){\line(2,-3){40}} \put(180,80){\line(1,-1){40}} \put(180,60){\line(2,-1){40}}
\put(220,40){\line(2,1){40}} \put(220,20){\line(1,1){40}} \put(220,20){\line(2,3){40}}
\put(100,80){\line(4,-1){80}} \put(100,60){\line(4,1){80}} \put(100,60){\line(1,0){80}}
\put(180,80){\line(4,-1){80}} \put(180,60){\line(4,1){80}} \put(180,80){\line(1,0){80}}
\put(60,2){\makebox(0,0){$x_2$}}
\put(20,42){\makebox(0,0){$x_1$}}  \put(100,42){\makebox(0,0){$x_3$}}
\put(220,2){\makebox(0,0){$x_5$}}
\put(180,42){\makebox(0,0){$x_4$}}  \put(260,42){\makebox(0,0){$x_6$}}

\put(15,80){\makebox(0,0){\scriptsize 1}}
\put(15,60){\makebox(0,0){\scriptsize 0}}
\put(60,35){\makebox(0,0){\scriptsize 1}}
\put(60,15){\makebox(0,0){\scriptsize 0}}
\put(102,75){\makebox(0,0){\scriptsize 1}}
\put(102,55){\makebox(0,0){\scriptsize 0}}
\put(178,75){\makebox(0,0){\scriptsize 1}}
\put(178,55){\makebox(0,0){\scriptsize 0}}
\put(220,35){\makebox(0,0){\scriptsize 1}}
\put(220,15){\makebox(0,0){\scriptsize 0}}
\put(265,80){\makebox(0,0){\scriptsize 1}}
\put(265,60){\makebox(0,0){\scriptsize 0}}
\end{picture}
\caption{The instance $I^{SAT}_{6}$.} \label{fig:ISAT6}
\end{figure}
\thicklines \setlength{\unitlength}{1pt}

\thicklines  \setlength{\unitlength}{1.5pt}%
\begin{figure}
\centering
\begin{picture}(160,120)(0,0)   

\put(20,90){\oval(18,38)} \put(20,80){\makebox(0,0){$\bullet$}}
\put(20,100){\makebox(0,0){$\bullet$}} \put(40,30){\oval(18,38)}
\put(40,20){\makebox(0,0){$\bullet$}}
\put(40,40){\makebox(0,0){$\bullet$}} \put(120,30){\oval(18,38)}
\put(120,20){\makebox(0,0){$\bullet$}}
\put(120,40){\makebox(0,0){$\bullet$}} \put(140,90){\oval(18,38)}
\put(140,80){\makebox(0,0){$\bullet$}}
\put(140,100){\makebox(0,0){$\bullet$}}

\put(16,100){\makebox(0,0){\scriptsize 1}}
\put(16,80){\makebox(0,0){\scriptsize 0}}
\put(40,35){\makebox(0,0){\scriptsize 1}}
\put(40,15){\makebox(0,0){\scriptsize 0}}
\put(120,35){\makebox(0,0){\scriptsize 1}}
\put(120,16){\makebox(0,0){\scriptsize 0}}
\put(144,100){\makebox(0,0){\scriptsize 1}}
\put(144,80){\makebox(0,0){\scriptsize 0}}

\put(10,65){\makebox(0,0){$x_1$}} \put(40,5){\makebox(0,0){$x_2$}}
\put(120,5){\makebox(0,0){$x_3$}} \put(150,65){\makebox(0,0){$x_4$}}

\dashline{4}(20,80)(140,100) \dashline{4}(20,80)(120,40)
\dashline{4}(20,100)(40,40) \dashline{4}(40,20)(140,100)
\dashline{4}(40,20)(120,40) \dashline{4}(120,20)(140,80)

\put(20,100){\line(6,-1){120}} \put(20,100){\line(1,0){120}}
\put(20,100){\line(5,-3){100}} \put(20,100){\line(5,-4){100}}
\put(20,100){\line(1,-4){20}} \put(20,80){\line(1,0){120}}
\put(20,80){\line(5,-3){100}} \put(20,80){\line(1,-2){20}}
\put(20,80){\line(1,-3){20}} \put(40,40){\line(5,3){100}}
\put(40,40){\line(5,2){100}} \put(40,40){\line(1,0){80}}
\put(40,40){\line(4,-1){80}} \put(40,20){\line(5,3){100}}
\put(40,20){\line(1,0){80}} \put(120,40){\line(1,3){20}}
\put(120,40){\line(1,2){20}} \put(120,20){\line(1,4){20}}

\end{picture}
\caption{The instance $I^{SAT}_{K4}$.} \label{fig:ISATK4}
\end{figure}
\thicklines \setlength{\unitlength}{1pt}

\begin{itemize}
\item $I_{K4}$ (shown in Figure~\ref{fig:IK4}) 
is composed of four variables with domains $D(x_i) = \{1,2,3\}$ ($i=1,2,3,4$),
and the following constraints: $(x_i = 1) \vee (x_j = 3)$
($(i,j) = (1,2), (2,3), (3,4), (4,1)$) and
$(x_i = 2) \vee (x_j = 2)$ ($(i,j) = (1,3), (2,4)$).

\item $I_4$ (shown in Figure~\ref{fig:I4}) 
is composed of four variables with domains $D(x_0) = \{1,2,3\}$,
$D(x_i) = \{0,1\}$ ($i=1,2,3$),
and the following constraints: $x_i \vee x_j$ ($1 \leq i < j \leq 3$) and
$(x_0 = i) \vee \overline{x_i}$ ($i = 1,2,3$).

\item $I^{SAT}_{2\Delta}$ (shown in Figure~\ref{fig:ISAT2Delta}(a)) 
is composed of five Boolean variables and the following constraints:
$x_1 \vee x_2$, $x_3 \vee x_4$, $\overline{x_1} \vee x_5$, $\overline{x_2} \vee x_5$,
$\overline{x_3} \vee \overline{x_5}$, $\overline{x_4} \vee \overline{x_5}$.

\item $I_5$ (shown in Figure~\ref{fig:I5}) 
is composed of five variables with domains $D(w_i) = \{0,1\}$ ($i=1,2,3$),
$D(x_i) = \{1,2,3\}$, and the constraints: $\overline{w_i} \vee (x_1 =
i)$ ($i=1,2,3$) and $w_i \vee (x_2 = i)$ ($i=1,2,3$). In this
instance the variable order is $w_1 < w_2 < w_3 < x_1 < x_2$.

\item $I^{SAT}_6$ (shown in Figure~\ref{fig:ISAT6}) 
is composed of six Boolean variables and the following constraints:
$\overline{x_1} \vee \overline{x_2}$,  $x_1 \vee x_3$, $x_2 \vee x_3$, $\overline{x_3} \vee \overline{x_4}$,
$x_4 \vee x_5$, $x_4 \vee x_6$, $\overline{x_5} \vee \overline{x_6}$.

\item $I^{SAT}_{K4}$ (shown in Figure~\ref{fig:ISATK4}) 
is composed of four Boolean variables and the following constraints:
$\overline{x_1} \vee \overline{x_2}$, $x_3 \vee x_4$ and $x_i \vee
\overline{x_j}$ (for $(i,j) = (1,3)$, $(1,4)$, $(2,3)$, $(2,4)$).

\item $I^{2COL}_{3}$  (shown in Figure~\ref{fig:ISAT2Delta}(b)) 
is composed of three Boolean variables and the three
inequality constraints: $x_i \neq x_j$ ($1 \leq i < j \leq 3$).
\end{itemize}

\thicklines \setlength{\unitlength}{0.9pt}
\newsavebox{\varitwo}
\savebox{\varitwo}(20,40){
\begin{picture}(20,40)(0,0)
\put(10,20){\oval(18,38)} \put(10,10){\makebox(0,0){$\bullet$}}
\put(10,30){\makebox(0,0){$\bullet$}}
\end{picture}
}
\newsavebox{\varione}
\savebox{\varione}(20,40){
\begin{picture}(20,40)(0,0)
\put(10,20){\oval(18,28)} \put(10,20){\makebox(0,0){$\bullet$}}
\end{picture}
} 

\thicklines \setlength{\unitlength}{0.9pt}
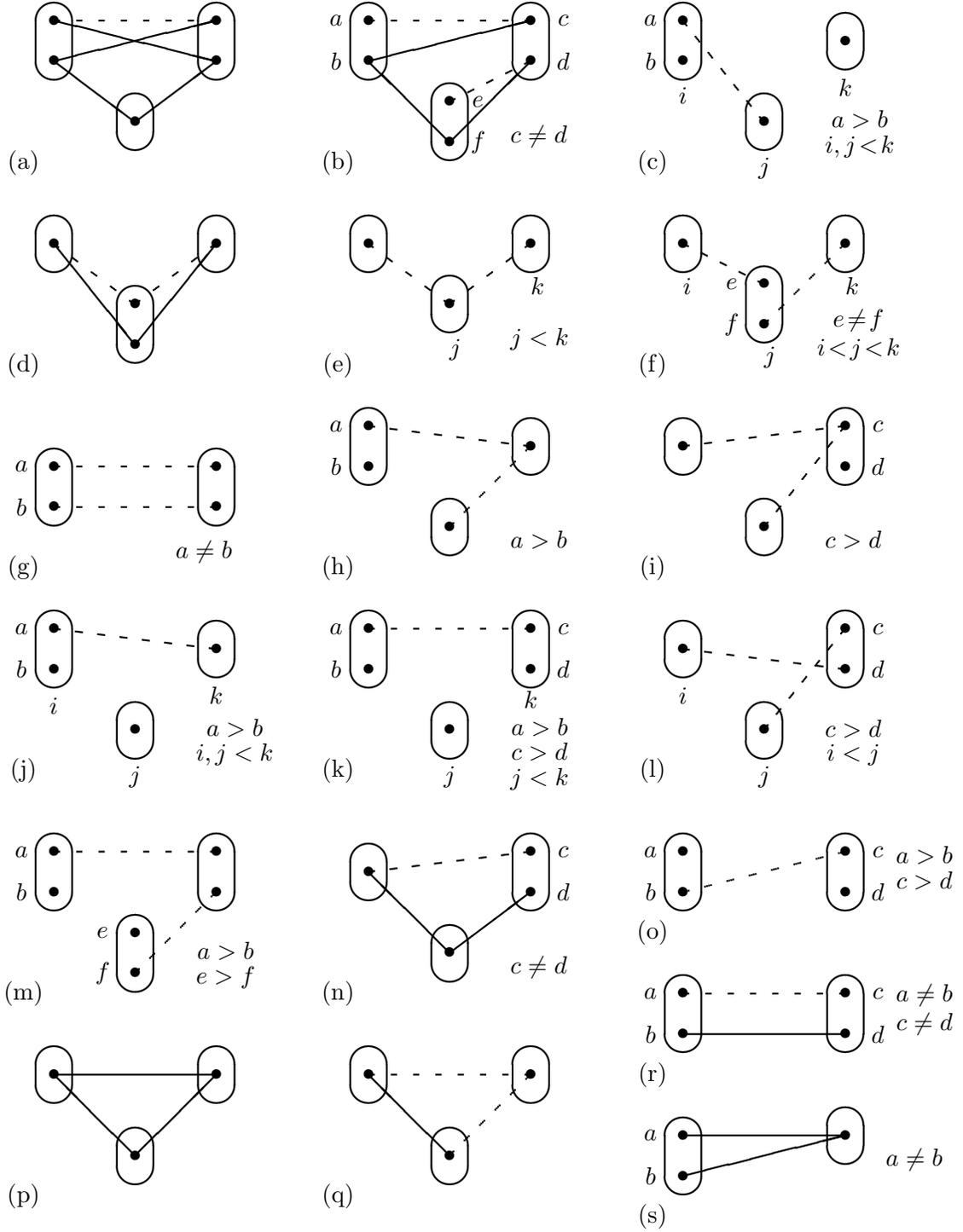
\begin{figure}       
\centering
\begin{picture}(480,620)(0,-20)

\put(0,500){
\begin{picture}(160,80)(0,0)
\put(10,50){\usebox{\varitwo}} \put(50,10){\usebox{\varione}}
\put(90,50){\usebox{\varitwo}}
\dashline{4}(20,80)(100,80)
\put(20,80){\line(4,-1){80}} \put(20,60){\line(4,1){80}}
\put(20,60){\line(4,-3){40}} \put(60,30){\line(4,3){40}}
\put(5,10){\makebox(0,0){(a)}}
\end{picture}}

\put(155,500){
\begin{picture}(160,90)(0,0)
\put(10,50){\usebox{\varitwo}} \put(50,10){\usebox{\varitwo}}
\put(90,50){\usebox{\varitwo}} \dashline{4}(20,80)(100,80)
\put(20,60){\line(4,1){80}} \put(60,20){\line(1,1){40}}
\put(60,20){\line(-1,1){40}} \dashline{4}(60,40)(100,60)
\put(4,80){\makebox(0,0){$a$}} \put(4,60){\makebox(0,0){$b$}}
\put(116,80){\makebox(0,0){$c$}} \put(116,60){\makebox(0,0){$d$}}
\put(74,40){\makebox(0,0){$e$}} \put(74,20){\makebox(0,0){$f$}}
\put(90,20){\makebox{$c \neq d$}}
\put(5,10){\makebox(0,0){(b)}}
\end{picture}}

\put(310,500){
\begin{picture}(150,90)(0,0)
\put(10,50){\usebox{\varitwo}} \put(50,10){\usebox{\varione}}
\put(90,50){\usebox{\varione}} \dashline{4}(20,80)(60,30)   
\put(4,80){\makebox(0,0){$a$}}
\put(4,60){\makebox(0,0){$b$}} \put(20,43){\makebox(0,0){$i$}}
\put(100,47){\makebox(0,0){$k$}} \put(60,7){\makebox(0,0){$j$}}
\put(90,15){\makebox{\shortstack{$a>b$ \\ $i,j \!<\! k$}}}   
\put(5,10){\makebox(0,0){(c)}}
\end{picture}}

\put(0,400){
\begin{picture}(160,80)(0,0)
\put(10,50){\usebox{\varione}} \put(50,10){\usebox{\varitwo}}
\put(90,50){\usebox{\varione}} \dashline{4}(20,70)(60,40)
\dashline{4}(60,40)(100,70) \put(20,70){\line(4,-5){40}}
\put(60,20){\line(4,5){40}} \put(5,10){\makebox(0,0){(d)}}
\end{picture}}

\put(155,400){
\begin{picture}(160,80)(0,0)
\put(10,50){\usebox{\varione}} \put(50,20){\usebox{\varione}}
\put(90,50){\usebox{\varione}} \dashline{4}(20,70)(60,40)
\put(60,15){\makebox{$j$}} \put(100,45){\makebox{$k$}}
\dashline{4}(60,40)(100,70) \put(90,20){\makebox{$j<k$}}
\put(5,10){\makebox(0,0){(e)}}
\end{picture}}

\put(310,400){
\begin{picture}(150,80)(0,0)
\put(10,50){\usebox{\varione}} \put(50,20){\usebox{\varitwo}}
\put(90,50){\usebox{\varione}} \dashline{4}(20,70)(60,50)
\put(20,45){\makebox{$i$}} \put(60,11){\makebox{$j$}}
\put(100,45){\makebox{$k$}} \dashline{4}(60,30)(100,70)
\put(44,50){\makebox(0,0){$e$}} \put(44,30){\makebox(0,0){$f$}}
\put(86,15){\makebox{\shortstack{$e\!\neq\!f$ \\ $i\!<\!j\!<\!k$}}}
\put(5,10){\makebox(0,0){(f)}}
\end{picture}}

\put(0,300){
\begin{picture}(160,70)(0,0)
\put(10,30){\usebox{\varitwo}} \put(90,30){\usebox{\varitwo}}
\dashline{4}(20,40)(100,40) \dashline{4}(20,60)(100,60)
\put(4,60){\makebox(0,0){$a$}} \put(4,40){\makebox(0,0){$b$}}
\put(80,15){\makebox{$a \neq b$}} \put(5,10){\makebox(0,0){(g)}}
\end{picture}}

\put(155,300){
\begin{picture}(160,80)(0,0)
\put(10,50){\usebox{\varitwo}} \put(50,10){\usebox{\varione}}
\put(90,50){\usebox{\varione}} \dashline{4}(20,80)(100,70)
\dashline{4}(60,30)(100,70) \put(4,80){\makebox(0,0){$a$}}
\put(4,60){\makebox(0,0){$b$}} \put(90,20){\makebox{$a>b$}}
\put(5,10){\makebox(0,0){(h)}}
\end{picture}}

\put(310,300){
\begin{picture}(160,80)(0,0)
\put(10,50){\usebox{\varione}} \put(50,10){\usebox{\varione}}
\put(90,50){\usebox{\varitwo}} \dashline{4}(20,70)(100,80)
\dashline{4}(60,30)(100,80) \put(116,80){\makebox(0,0){$c$}}
\put(116,60){\makebox(0,0){$d$}} \put(90,20){\makebox{$c>d$}}
\put(5,10){\makebox(0,0){(i)}}
\end{picture}}

\put(0,200){
\begin{picture}(150,80)(0,0)
\put(10,50){\usebox{\varitwo}} \put(50,10){\usebox{\varione}}
\put(90,50){\usebox{\varione}} \dashline{4}(20,80)(100,70)
\put(4,80){\makebox(0,0){$a$}} \put(4,60){\makebox(0,0){$b$}}
\put(20,42){\makebox(0,0){$i$}} \put(60,7){\makebox(0,0){$j$}}
\put(100,47){\makebox(0,0){$k$}}
\put(90,15){\makebox{\shortstack{$a>b$ \\ $i,j<k$}}}
\put(5,10){\makebox(0,0){(j)}}
\end{picture}}

\put(155,200){
\begin{picture}(150,80)(0,0)
\put(10,50){\usebox{\varitwo}} \put(50,10){\usebox{\varione}}
\put(90,50){\usebox{\varitwo}} \dashline{4}(20,80)(100,80)
\put(4,80){\makebox(0,0){$a$}} \put(4,60){\makebox(0,0){$b$}}
\put(116,80){\makebox(0,0){$c$}} \put(116,60){\makebox(0,0){$d$}}
\put(100,45){\makebox(0,0){$k$}} \put(60,7){\makebox(0,0){$j$}}
\put(90,3){\makebox{\shortstack{$a>b$ \\ $c>d$ \\ $j<k$}}}
\put(5,10){\makebox(0,0){(k)}}
\end{picture}}

\put(310,200){
\begin{picture}(150,80)(0,0)
\put(10,50){\usebox{\varione}} \put(50,10){\usebox{\varione}}
\put(90,50){\usebox{\varitwo}} \dashline{4}(20,70)(100,60)
\dashline{4}(60,30)(100,80) \put(116,80){\makebox(0,0){$c$}}
\put(116,60){\makebox(0,0){$d$}} \put(20,47){\makebox(0,0){$i$}}
\put(60,7){\makebox(0,0){$j$}}
\put(90,15){\makebox{\shortstack{$c>d$ \\
$i<j$}}} \put(5,10){\makebox(0,0){(l)}}
\end{picture}}

\put(0,90){
\begin{picture}(150,80)(0,0)
\put(10,50){\usebox{\varitwo}} \put(50,10){\usebox{\varitwo}}
\put(90,50){\usebox{\varitwo}} \dashline{4}(20,80)(100,80)
\dashline{4}(60,20)(100,60) \put(4,80){\makebox(0,0){$a$}}
\put(4,60){\makebox(0,0){$b$}} \put(44,40){\makebox(0,0){$e$}}
\put(44,20){\makebox(0,0){$f$}} \put(90,15){\makebox{\shortstack{$a>b$ \\
$e>f$}}} \put(5,10){\makebox(0,0){(m)}}
\end{picture}}

\put(155,90){
\begin{picture}(150,100)(0,0)
\put(10,50){\usebox{\varione}} \put(50,10){\usebox{\varione}}
\put(90,50){\usebox{\varitwo}} \dashline{4}(20,70)(100,80)
\put(60,30){\line(-1,1){40}} \put(60,30){\line(4,3){40}}
\put(116,80){\makebox(0,0){$c$}} \put(116,60){\makebox(0,0){$d$}}
\put(90,20){\makebox{$c \neq d$}} \put(5,10){\makebox(0,0){(n)}}
\end{picture}}

\put(310,120){
\begin{picture}(150,60)(0,-10)
\put(10,10){\usebox{\varitwo}} \put(90,10){\usebox{\varitwo}}
\put(4,40){\makebox(0,0){$a$}} \put(4,20){\makebox(0,0){$b$}}
\put(116,40){\makebox(0,0){$c$}} \put(116,20){\makebox(0,0){$d$}}
\dashline{4}(20,20)(100,40) \put(125,22){\makebox{\shortstack{$a>b$ \\
$c>d$}}} \put(5,0){\makebox(0,0){(o)}}
\end{picture}}

\put(0,-10){
\begin{picture}(150,80)(0,0)
\put(10,50){\usebox{\varione}} \put(50,10){\usebox{\varione}}
\put(90,50){\usebox{\varione}} \put(60,30){\line(-1,1){40}}
\put(20,70){\line(1,0){80}} \put(60,30){\line(1,1){40}}
\put(5,10){\makebox(0,0){(p)}}
\end{picture}}

\put(155,-10){
\begin{picture}(150,80)(0,0)
\put(10,50){\usebox{\varione}} \put(50,10){\usebox{\varione}}
\put(90,50){\usebox{\varione}} \put(60,30){\line(-1,1){40}}
\dashline{4}(20,70)(100,70) \dashline{4}(60,30)(100,70)
\put(5,10){\makebox(0,0){(q)}}
\end{picture}}

\put(310,50){
\begin{picture}(150,60)(0,-10)
\put(10,10){\usebox{\varitwo}} \put(90,10){\usebox{\varitwo}}
\put(20,20){\line(1,0){80}} \dashline{4}(20,40)(100,40)
\put(4,40){\makebox(0,0){$a$}} \put(4,20){\makebox(0,0){$b$}}
\put(116,40){\makebox(0,0){$c$}} \put(116,20){\makebox(0,0){$d$}}
\put(125,22){\makebox{\shortstack{$a \neq b$ \\
$c \neq d$}}} \put(5,0){\makebox(0,0){(r)}}
\end{picture}}

\put(310,-20){
\begin{picture}(150,60)(0,-10)
\put(10,10){\usebox{\varitwo}} \put(90,20){\usebox{\varione}}
\put(20,20){\line(4,1){80}} \put(20,40){\line(1,0){80}}
\put(4,40){\makebox(0,0){$a$}} \put(4,20){\makebox(0,0){$b$}}
\put(120,25){\makebox{$a \neq b$}} \put(5,0){\makebox(0,0){(s)}}
\end{picture}
}

\end{picture}
\caption{Patterns which do not occur in (a) $I_{K4}$; (b) $I_{4}$; (c) $I^{SAT}_{2\Delta}$; (d),(e),(f) $I_{5}$;
(g),(h),(i) $I^{SAT}_{6}$; (j),(k),(l),(m) $I^{SAT}_{K4}$; (n),(o),(p),(q),(r),(s) $I^{2COL}_{3}$.}
\label{fig:badpatterns}
\end{figure}    
\setlength{\unitlength}{1pt}

In figures representing CSP instances, similarly to patterns, ovals
represent variables, the set of points inside an oval the elements of
the variable's domain, a dashed (respectively, solid) line joining
two points represents the incompatibility (respectively,
compatibility) of the two points. In order not to clutter up figures
representing instances, trivial constraints containing only positive
edges  (i.e. corresponding to complete relations) are not shown.
Figure~\ref{fig:badpatterns}(a) 
is a pattern which does not occur in
the instance $I_{K4}$ (Figure~\ref{fig:IK4} on page~\pageref{fig:IK4}).  Similarly,
Figure~\ref{fig:badpatterns}(b) 
is a pattern which does not occur in
the instance $I_{4}$ (Figure~\ref{fig:I4} on page~\pageref{fig:I4}), and the pattern in
Figure~\ref{fig:badpatterns}(c) 
does not occur in instance
$I^{SAT}_{2\Delta}$. Figure~\ref{fig:badpatterns}(d), (e) and (f) 
are three patterns which do not occur in the instance $I_5$
(Figure~\ref{fig:I5} on page~\pageref{fig:I5}). The pattern (known as $T1$) shown in
Figure~\ref{fig:badpatterns}(d) 
is, in fact, a tractable pattern~\cite{Cooper15:dam}, but the fact that it does not occur in
$I_5$ (an arc-consistent instance which has no solution) shows that
arc consistency is not a decision procedure for
CSP$_{\overline{SP}}(T1)$. This instance was constructed using
certain known properties of the pattern
$T1$~\cite{Escamocher14:thesis}.

It can easily be verified that the three patterns
Figure~\ref{fig:badpatterns}(g), (h), (i) 
do not occur in $I^{SAT}_{6}$. Similarly,
the four patterns in Figure~\ref{fig:badpatterns}(j),(k),(l),(m) 
do not occur in the instance $I^{SAT}_{K4}$ (Figure~\ref{fig:ISATK4} on page~\pageref{fig:ISATK4}).

The instance $I^{2COL}_3$ is the problem of colouring a complete
graph on three vertices with only two colours. It is arc consistent
but clearly has no solution. It is easy to verify that none of the six
patterns in Figure~\ref{fig:badpatterns}(n),(o),(p),(q),(r),(s) 
occur in $I^{2COL}_3$.
Furthermore, trivially, no pattern on four or more variables occurs
in $I^{2COL}_3$ and no pattern with three or more distinct values in
the same domain occurs in $I^{2COL}_3$.

By Lemma~\ref{lem:not-ac}, we know that 
if there is one of 
the instances $I_{K4}$, $I_4$, $I^{SAT}_{2\Delta}$,
$I_{5}$, $I^{SAT}_6$, $I^{SAT}_{K4}$, $I^{2COL}_{3}$,
such that pattern $P$ does not occur in this instance
then $P$ is not AC-solvable. Let $P$ be any of the patterns
shown in Figure~\ref{fig:badpatterns}. 
By Lemma~\ref{lem:hered}, any pattern $Q$ in which $P$
occurs is not AC-solvable.

By the pattern in Figure~\ref{fig:badpatterns}(g), 
a simple AC-solvable pattern cannot contain two negative edges between the
same pair of variables. Since instance $I^{2COL}_3$ contains only
three variables and instance $I_5$ contains no triple of variables
which have a negative edge between each pair of variables, an
AC-solvable pattern can contain \emph{at most three variables and at
most two negative edges}. Thus to identify simple AC-solvable
patterns we only need to consider patterns on at most three
variables, at most two points per variable and with none, one or two
negative edges. Furthermore, in the case of two negative edges these
negative edges cannot be between the same pair of variables.

\subsection{Characterising AC-solvable unordered patterns}
\label{sec:unordered}

In this subsection, we consider only patterns $P$ that have no
associated structure (i.e. with $<_X \ =  \ <_D \ = \emptyset$).
We prove the following characterisation of unstructured AC-solvable patterns.

\begin{thm} \label{thm:noOrder}
If $P$ is a simple unordered pattern, then $P$ is AC-solvable if and
only if $P$ occurs in the pattern LX (Figure~\ref{fig:lx} on
page~\pageref{fig:lx}) or in the
pattern unordered(BTP).
\end{thm}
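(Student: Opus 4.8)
The plan is to prove the two implications separately. The ``if'' direction is immediate from what is already available, while the ``only if'' direction is a finite case analysis built on top of the reductions of Section~\ref{sec:5.1}.

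\textbf{The ``if'' direction.} Assume $P$ occurs in LX or in unordered(BTP). If $P$ occurs in LX, then Lemma~\ref{lem:occ-sup} gives CSP$_{\overline{SP}}(P)\subseteq$ CSP$_{\overline{SP}}(\mathrm{LX})$, and CSP$_{\overline{SP}}(\mathrm{LX})$ is solved by AC by Theorem~\ref{thm:lx}; since AC being a decision procedure for a class is a fortiori a decision procedure for every subclass, $P$ is AC-solvable. If $P$ occurs in unordered(BTP), then unordered(BTP) occurs in BTP by Lemma~\ref{lem:occ-unordered}, so $P$ occurs in BTP by Lemma~\ref{lem:occ-transitive}; since CSP$_{\overline{SP}}(\mathrm{BTP})$ is solved by AC~\cite{cjs10:aij-btp}, the same reasoning applies. (Alternatively, one checks directly that unordered(BTP) occurs in the pattern EMC of Figure~\ref{fig:emc} and invokes Theorem~\ref{thm:emc}.)

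\textbf{The ``only if'' direction.} I would prove the contrapositive: if the simple unordered pattern $P$ occurs neither in LX nor in unordered(BTP), then $P$ is not AC-solvable, which by Lemma~\ref{lem:not-ac} it suffices to witness by an arc-consistent instance of CSP$_{\overline{SP}}(P)$ having no solution. By Section~\ref{sec:5.1}, an AC-solvable simple pattern has at most three variables, at most two points per variable, at most two negative edges, and — if it has exactly two negative edges — they do not join the same pair of variables; so we may assume $P$ has this shape. Simplicity constrains $P$ further: having no dangling point, every point of $P$ is incident to a negative edge or to at least two edges; being unmergeable, a variable carrying two points has a negative edge at one of them (two points of the same variable incident only to positive edges are mergeable), and the second such point must be positively joined to some point at which the first point is negatively joined.

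With these restrictions the surviving patterns form a finite list, which I would traverse by the number of negative edges. If $P$ has no negative edge, every point needs two positive edges while every variable has a single point, so $P$ must be the positive triangle on three variables (pattern~(p) of Figure~\ref{fig:badpatterns}); this does not occur in $I^{2COL}_{3}$ (which has no mutually compatible triple of values), so $P$ is not AC-solvable. If $P$ has exactly one negative edge, say between $x.a$ and $y.b$, then only $x$ and $y$ can carry two points and any third variable is a single point with two positive edges; enumerating the admissible positive-edge configurations shows that each such $P$ either admits a homomorphism into LX or into unordered(BTP), contradicting the hypothesis, or contains a subpattern that fails to occur in one of the instances of Section~\ref{sec:5.1} (for instance pattern~(a) of Figure~\ref{fig:badpatterns}, absent from $I_{K4}$, or pattern~(q), absent from $I^{2COL}_{3}$), and is then not AC-solvable by Lemma~\ref{lem:hered}. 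If $P$ has exactly two negative edges it has exactly three variables and the two edges share exactly one variable $z$; the case in which the two edges leave $z$ from a single point has the shape of LX (or contains the bad pattern $T1$ = pattern~(d), absent from $I_5$, or pattern~(q)), while the case in which they leave $z$ from its two distinct points has the shape of unordered(BTP) (or again contains (a) or (d)). A check on the remaining positive edges, modulo the symmetries noted below, shows each such $P$ occurs in LX or in unordered(BTP) or is not AC-solvable.

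\textbf{The main obstacle.} The content is entirely in the case analysis of the ``only if'' direction: the bounds from Section~\ref{sec:5.1} keep it finite, but a fair number of sub-cases remain, and for each one must either display an explicit homomorphism into LX or unordered(BTP), or locate a forbidden sub-pattern together with a witnessing arc-consistent unsatisfiable instance. I would organise the enumeration by first fixing the negative edges, then adding positive edges one at a time while keeping every point non-dangling and every variable unmergeable, quotienting throughout by the symmetries of unordered patterns (permutations of the variables, and of the points within a domain). It is worth recording, as a check on the statement, that LX and unordered(BTP) are incomparable — neither occurs in the other, since LX has two negative edges meeting at a point whereas unordered(BTP) has two non-adjacent negative edges — so both patterns are genuinely needed.
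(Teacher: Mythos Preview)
Your overall strategy matches the paper's exactly: the ``if'' direction via Lemma~\ref{lem:occ-sup} together with Theorem~\ref{thm:lx} and the BTP result is correct, and the ``only if'' direction is indeed a finite case analysis on the number of negative edges, using the bad subpatterns of Figure~\ref{fig:badpatterns} and Lemma~\ref{lem:hered}.

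However, two of the specific bad patterns you invoke cannot arise in the cases where you cite them, so those branches of your analysis would not close. In the one-negative-edge case you offer pattern~(q) of Figure~\ref{fig:badpatterns} as a possible obstruction; but~(q) carries two negative edges between distinct variable pairs, and since a homomorphism is injective on variables these must land on two distinct negative edges of~$P$, impossible when $P$ has only one. The paper handles this case with patterns~(a), (p) and~(s). In the two-negative-edges-not-meeting case you offer pattern~(d); but the two negative edges of~(d) share a common point, and any homomorphic image preserves that incidence, so~(d) can only occur in a pattern whose negative edges meet. The paper uses patterns~(a), (n), (p) and~(s) here instead.

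These are not decorative choices: locating the correct obstruction for each surviving configuration is the entire content of the argument, and the patterns you named would leave genuine sub-cases open. Once you replace them by the ones above (and carry out the enumeration you outline), the proof goes through exactly as in the paper.
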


\begin{proof}
By the discussion in Section~\ref{sec:5.1}, we only need to consider
patterns $P$ with at most three variables, at most two points per
variable and at most two negative edges (with these edges not being
between the same pair of variables). We consider separately the cases
of a pattern with 0, 1 or 2 negative edges.

The only simple unordered pattern with no negative edges is the
triangle of positive edges shown in Figure~\ref{fig:badpatterns}(p) on
page~\pageref{fig:badpatterns} 
and this pattern is not AC-solvable since it does not occur in
$I^{2COL}_3$.

Let $P$ be a simple pattern with one negative edge $(a,c)$ (between
variables $y$ and $z$) and at most two points per variable. If the
domain of $y$ (respectively, $z$) contains another point $b$
(respectively, $d$), then for $a,b$ (respectively, $c,d$) to be
non-mergeable, there must be a positive edge $(b,c)$ (respectively,
$(a,d)$). Furthermore, for $b$ (respectively, $d$) not to be a
dangling point, it must belong to another positive edge. Any two
distinct points in the domain of a third variable $x$ would be
mergeable, so we can assume that $P$ has at most one point in the
domain of $x$. Since this point is not a dangling point, it must be
connected by positive edges to at least two points. By a simple
exhaustive search we can easily deduce that \emph{either} $P$ is (a
subpattern of) a triangle on three variables composed of one negative
edge and two positive edges (in which case $P$ occurs in the pattern
LX shown in Figure~\ref{fig:lx} on page~\pageref{fig:lx}), \emph{or} one of the patterns shown
in Figure~\ref{fig:badpatterns}(a), Figure~\ref{fig:badpatterns}(p)
or Figure~\ref{fig:badpatterns}(s) on page~\pageref{fig:badpatterns} occurs in $P$, in which case, by
Lemma~\ref{lem:hered}, $P$ is not AC-solvable.

Let $P$ be a simple pattern containing exactly two negative edges
$(a,b)$ (between variables $x,y$) and $(a,c)$ (between variables
$x,z$) that meet at the point $a$ of variable $x$. Suppose first that
$x$ has no other point. If $P$ does not occur in LX, then $P$ must
have a positive edge between variables $y$ and $z$ which is either
$(b,c)$ or $(d,e)$ where $d \neq b$ and $e \neq c$. In the latter
case, to avoid points $b,d$ (respectively, $c,e$) being mergeable,
$P$ must have the positive edge $(a,d)$ (respectively, $(a,e)$). We
can deduce that, if $P$ does not occur in LX, then one of the
patterns Figure~\ref{fig:badpatterns}(p) or
Figure~\ref{fig:badpatterns}(q) on page~\pageref{fig:badpatterns} occurs in $P$. Suppose now that $P$
has two points $a,f$ in the domain of variable $x$. For $a,f$ not to
be mergeable, $P$ must have either the positive edge $(b,f)$ or the
positive edge $(c,f)$. If it has both, then the pattern
Figure~\ref{fig:badpatterns}(d) on page~\pageref{fig:badpatterns} occurs in $P$. If $P$ has just one,
which without loss of generality we can suppose is the positive edge
$(b,f)$, then for $f$ not to be a dangling point, $f$ must belong to
another positive edge $(d,f)$ (where $d \neq b$) or $(e,f)$ (where $e
\neq c$). In the latter case, for $c,f$ not to be mergeable, $P$ must
also have the positive edge $(a,e)$. In both cases, the pattern
Figure~\ref{fig:badpatterns}(s) on page~\pageref{fig:badpatterns} occurs in $P$. Thus, if $P$ does not
occur in LX, then, by Lemma~\ref{lem:hered}, $P$ is not AC-solvable.

Finally, let $P$ be a simple pattern containing exactly two negative
edges $(a,b)$ (between variables $x,y$) and $(c,d)$ (between
variables $x,z$) with two distinct points $a \neq c$ in the domain of
variable $x$. To avoid $a,c$ being mergeable, $P$ must have a
positive edge $(b,c)$ or $(a,d)$. Without loss of generality, suppose
$P$ has the positive edge $(b,c)$. If $P$ does not occur in
unordered(BTP), then at least one of the variables $y,z$ must have
two distinct points. If $y$ has a point $e \neq b$, then for $b,e$
not to be mergeable, $P$ must have the positive edge $(a,e)$.
Similarly, if $z$ has a point $f \neq d$, then $P$ must have the
positive edge $(c,f)$. But then, to avoid dangling points, we have to
add other positive edges to $P$ and we find that one of the patterns
Figure~\ref{fig:badpatterns}(a), Figure~\ref{fig:badpatterns}(n),
Figure~\ref{fig:badpatterns}(p) or Figure~\ref{fig:badpatterns}(s)
on page~\pageref{fig:badpatterns}
occurs in $P$, and so, by Lemma~\ref{lem:hered}, $P$ is not
AC-solvable. By Lemma~\ref{lem:occ-unordered}, if $P$ occurs in 
unordered(BTP), then $P$ occurs in BTP, and thus is
AC-solvable~\cite{cjs10:aij-btp}.
\end{proof}

\subsection{Characterising AC-solvable variable-ordered patterns}

In this subsection we consider simple patterns $P$ which have no
domain order, (i.e. $<_D = \emptyset$), but do have a partial order
on the variables. We first require the following lemma.

\begin{lem} \label{lem:P<}
If $P^{<}$ is a pattern whose only structure is a partial order on
its variables and $P^{-}=\mbox{unordered}(P^{<})$, then
\begin{enumerate}
\item $P^{<}$ is simple if and only if $P^{-}$ is simple.
\item $P^{<}$ is AC-solvable only if $P^{-}$ is AC-solvable.
\end{enumerate}
\end{lem}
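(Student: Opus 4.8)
The plan is to prove the two parts separately, noting that part~(1) is a matter of inspecting definitions and part~(2) is essentially a corollary of the occurrence machinery already developed.

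For part~(1), I would first record that $P^{<}$ and $P^{-}=\mbox{unordered}(P^{<})$ have exactly the same variable set, value set, point set, domain structure $D(x)$ and compatibility function $\cpt$; they differ only in that $P^{<}$ carries a partial order $<_X$ on its variables whereas $P^{-}$ carries the empty variable order, and both have $<_D=\emptyset$. I would then check each of the three clauses of Definition~\ref{def:simple} in turn. Being \emph{basic} refers only to the domains $D(x),D(y)$ and to $<_D$; since $<_D=\emptyset$ in both patterns the condition on $<_D$ is vacuous, and the disjointness condition depends only on the (identical) domain structure, so $P^{<}$ is basic iff $P^{-}$ is. Mergeability of two points (Definition~\ref{def:mergeablepts}) is phrased purely in terms of $\cpt$, which is the same in both patterns, so $P^{<}$ has no mergeable points iff $P^{-}$ has none. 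Finally, a point being \emph{dangling} refers only to $<_D$ and $\cpt$; since $<_D=\emptyset$ in both patterns every point satisfies the ``not ordered by $<_D$'' clause in either pattern, and the remaining conditions involve only $\cpt$, so $P^{<}$ has no dangling points iff $P^{-}$ has none. Combining the three clauses gives the equivalence.

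For part~(2), I would argue by contrapositive. By Lemma~\ref{lem:occ-unordered}, $P^{-}=\mbox{unordered}(P^{<})$ occurs in $P^{<}$. Hence, if $P^{-}$ is not AC-solvable, then by Lemma~\ref{lem:hered} the pattern $P^{<}$ is not AC-solvable either; equivalently, $P^{<}$ being AC-solvable implies $P^{-}$ is AC-solvable, as required.

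The only place calling for care is the bookkeeping in part~(1): one must verify that none of the three notions ``basic'', ``mergeable'' and ``dangling'' ever mentions the variable order $<_X$, so that adding a partial variable order affects none of them, and that the hypothesis $<_D=\emptyset$ trivialises the $<_D$-dependent clauses of ``basic'' and ``dangling''. There is no substantial obstacle; in particular part~(2) is an immediate consequence of Lemmas~\ref{lem:occ-unordered} and~\ref{lem:hered}.
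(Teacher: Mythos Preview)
Your proposal is correct and follows essentially the same approach as the paper's proof: part~(1) is handled by observing that the definition of ``simple'' does not reference the variable order (you simply spell this out more explicitly, clause by clause), and part~(2) is derived from Lemma~\ref{lem:occ-unordered} together with Lemma~\ref{lem:hered}, exactly as in the paper.
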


\begin{proof}
The property of being simple is (Definition~\ref{def:simple})
independent of any variable order, hence $P^{<}$ is simple if and
only if $P^{-}$ is simple. By Lemma~\ref{lem:occ-unordered}, $P^{-}$
occurs in $P^<$. The fact that $P^{<}$ is AC-solvable only if $P^{-}$
is AC-solvable then follows from Lemma~\ref{lem:hered}.
\end{proof}

Recall pattern LX$^{<}$ from Example~\ref{ex:lx<} that is obtained from the
pattern LX (Figure~\ref{fig:lx} on page~\pageref{fig:lx}) by adding the partial variable order $y<z$.
Recall (from from Example~\ref{ex:lx<}) that the patterns LX and  LX$^{<}$ are, in fact, equivalent.

Lemma~\ref{lem:P<} allows us to give the following characterisation of
variable-ordered AC-solvable patterns.

\begin{thm} \label{thm:varOrder}
If $P$ is a simple pattern whose only structure is a partial order on
its variables, then $P$ is AC-solvable if and only if $P$ occurs in
the pattern LX$^{<}$ (Example~\ref{ex:lx<}), the pattern BTP$^{vo}$
(Figure~\ref{fig:btp} on page~\pageref{fig:btp}) or the pattern invVar(BTP$^{vo}$).
\end{thm}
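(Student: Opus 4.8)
The plan is to bootstrap from the unordered characterisation (Theorem~\ref{thm:noOrder}) using Lemma~\ref{lem:P<}, and then to reinsert the partial variable order using the symmetries of LX and of unordered$(\mathrm{BTP})$, with the ``bad'' variable orders excluded via the arc-consistent unsatisfiable witnesses collected in Section~\ref{sec:5.1}.

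For the ``if'' direction, I would first note that all three target patterns are AC-solvable: $\mathrm{LX}^{<}$ is equivalent to LX (Example~\ref{ex:lx<}), which is AC-solvable by Theorem~\ref{thm:lx}; $\mathrm{BTP}^{vo}$ is equivalent to BTP and hence AC-solvable; and $\mathrm{invVar}(\mathrm{BTP}^{vo})$ is AC-solvable by Lemma~\ref{lem:inv}. If $P$ occurs in one of these patterns $Q$, then Lemma~\ref{lem:occ-sup} gives $\mathrm{CSP}_{\overline{SP}}(P)\subseteq\mathrm{CSP}_{\overline{SP}}(Q)$, so AC, being a decision procedure for the larger class, is a decision procedure for $\mathrm{CSP}_{\overline{SP}}(P)$; thus $P$ is AC-solvable.

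For the ``only if'' direction, suppose $P$ is simple, has only a partial variable order, and is AC-solvable. By Lemma~\ref{lem:P<}, $P^{-}:=\mathrm{unordered}(P)$ is simple and AC-solvable, so by Theorem~\ref{thm:noOrder} it occurs in LX or in unordered$(\mathrm{BTP})$; and by Section~\ref{sec:5.1}, $P$ has at most three variables, at most two points per variable, and at most two negative edges. Assume, for contradiction, that $P$ occurs in none of $\mathrm{LX}^{<}$, $\mathrm{BTP}^{vo}$, $\mathrm{invVar}(\mathrm{BTP}^{vo})$. If $P^{-}$ occurs in unordered$(\mathrm{BTP})$: this pattern has a unique two-point variable $z$ (so any two-point variable of $P$ must be sent to it) and an automorphism swapping its two single-point variables; so a homomorphism $P^{-}\to\mathrm{unordered}(\mathrm{BTP})$ can be chosen whose induced variable map sends $<_X$ into the chain on unordered$(\mathrm{BTP})$'s variables in whichever orientation puts $P$'s two-point variable last (giving an occurrence in $\mathrm{BTP}^{vo}$) or first (giving one in $\mathrm{invVar}(\mathrm{BTP}^{vo})$) --- the only configuration of $<_X$ escaping this is one forcing $P$'s two-point variable strictly between two others, and one checks that such a $P$ then embeds one of the order-sensitive witness patterns of Section~\ref{sec:5.1} (the patterns that do not occur in $I_5$, resp. $I^{SAT}_{2\Delta}$), so by Lemma~\ref{lem:hered} $P$ is not AC-solvable, a contradiction. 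If $P^{-}$ occurs in LX: I would use the automorphism of LX interchanging its two two-point variables; composing a homomorphism $P^{-}\to\mathrm{LX}$ with it yields a second variable map, and between the two one can match $<_X$ against each of the three total variable orders on LX's variables that extend $y<_X z$ --- unless $<_X$ is incompatible with all three, in which case again $P$ embeds a Section~\ref{sec:5.1} witness, contradicting AC-solvability. The residual cases in which $P^{-}$ uses only two of LX's (resp. unordered$(\mathrm{BTP})$'s) variables are handled directly: the only simple AC-solvable pattern on two variables is a single negative edge, which occurs in $\mathrm{LX}^{<}$ under any variable order.

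The main obstacle is the finite but delicate enumeration underlying the two ``residual configuration'' subcases: one must verify that every simple pattern on at most three variables, at most two points per variable and at most two negative edges, when equipped with a partial variable order that is incompatible with all three targets, genuinely embeds one of the order-sensitive arc-consistent-but-unsatisfiable witnesses from Section~\ref{sec:5.1}. The other delicate point is the symmetry bookkeeping for LX and unordered$(\mathrm{BTP})$ --- making sure that the ``good'' partial variable orders really are absorbed by post-composing a homomorphism with the relevant automorphism, rather than only being absorbed for some orderings of the host pattern's variables.
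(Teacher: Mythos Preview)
Your approach is essentially the paper's: reduce to the unordered classification via Lemma~\ref{lem:P<} and Theorem~\ref{thm:noOrder}, then reinstate the variable order using the $y\leftrightarrow z$ symmetry of LX and the $x\leftrightarrow y$ symmetry of unordered(BTP), ruling out the recalcitrant orders with the Section~\ref{sec:5.1} witnesses. The paper organises the case split by the number of negative edges (two meeting, two non-meeting, at most one) rather than by host pattern, and names the witnesses explicitly: pattern~(e) (and Lemma~\ref{lem:inv}) kills any order between the meeting variable $x$ and either of $y,z$ in the LX case, while pattern~(f) kills the order $x<z<y$ (and its reverse) in the BTP case.

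Two points to tighten. First, your LX-case phrasing ``unless $<_X$ is incompatible with all three'' is backwards: occurrence in $\mathrm{LX}^{<}$ requires a homomorphism into \emph{every} linear extension, so the obstruction is that \emph{some} extension admits neither of the two variable maps. Concretely, since the meeting variable of $P$ is forced to $x$, any order comparing it to $i$ or $k$ fails for the extension placing $x$ at the wrong extreme; this is exactly what pattern~(e) detects. Second, your ``residual'' clause only treats two-variable patterns, but the one-negative-edge triangle on three variables (one negative, two positive edges, all single-point variables) also needs to be placed. The paper notes that this triangle, under any total variable order, occurs in $\mathrm{BTP}^{vo}$ or $\mathrm{invVar}(\mathrm{BTP}^{vo})$ --- which is true, but uses all four embeddings of the triangle into unordered(BTP), not just the two given by the $x\leftrightarrow y$ swap; your automorphism bookkeeping should account for this.
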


\begin{proof}
By Lemma~\ref{lem:P<} and Theorem~\ref{thm:noOrder}, we only need to
consider patterns $P^{-}$ occurring in LX or unordered(BTP) to which
we add a partial order on the variables to produce a pattern $P$.

We first consider the case of a pattern $P$ in which there are two negative
edges that meet. By Lemma~\ref{lem:P<} and Theorem~\ref{thm:noOrder},
for $P$ to be AC-solvable, $P^{-}$ must occur in LX. By
Theorem~\ref{thm:lx} and Example~\ref{ex:lx<}, we have that LX$^<$ is
AC-solvable. Let $L$ be the pattern composed of three variables
$x,y,z$ and two negative edges which meet at a point
$\tuple{x,\epsilon}$ (i.e. $L$ is the pattern LX without its positive
edges). Then Figure~\ref{fig:badpatterns}(e) on page~\pageref{fig:badpatterns} and Lemma~\ref{lem:inv}
tell us that placing any order between $x$ and $y$ or between $x$ and
$z$ turns $L$ into a pattern which is not AC-solvable. It follows
from Lemma~\ref{lem:hered} that an ordered pattern $P$ containing two
negative edges that meet is AC-solvable if and only if $P$ occurs in
$LX^{<}$.

Now consider simple patterns $P$ which contain two negative edges
that do not meet. By Lemma~\ref{lem:P<} and
Theorem~\ref{thm:noOrder}, for $P$ to be AC-solvable, $P^{-}$ must
occur in unordered(BTP). Adding almost any partial variable order to
unordered(BTP) produces a pattern which occurs in BTP$^{vo}$
(Figure~\ref{fig:btp}(b) on page~\pageref{fig:btp})
or invVar(BTP$^{vo}$). The only order for which this is not the case,
is the total order $x < z < y$ (or its inverse), where the variables
$x,y,z$ are as shown in Figure~\ref{fig:btp}(b) on page~\pageref{fig:btp}. Let $P$ be any
simple pattern on three variables $x,y,z$, containing two negative
edges (between $x,z$ and $y,z$) that do not meet and with the
variable order $x < z < y$. Then the pattern shown in
Figure~\ref{fig:badpatterns}(f) on page~\pageref{fig:badpatterns} occurs in $P$ and hence, by
Lemma~\ref{lem:hered}, $P$ is not AC-solvable.

If $P$ is any simple pattern such that $P^{-}$ occurs in LX or
unordered(BTP) and contains at most one negative edge, then $P$
occurs in the three-variable triangle pattern composed of one
negative and two positive edges; it is then easy to check that,
whatever the ordering of is variables, $P$ occurs in BTP$^{vo}$ or
invVar(BTP$^{vo}$).
\end{proof}

\subsection{Characterising AC-solvable domain-ordered patterns}

In this subsection we consider simple patterns $P$ with a partial
order on domains but no ordering on the variables.

Let EMC$^{-}$ be the no-variable-order version of the pattern EMC
depicted in Figure~\ref{fig:emc} on page~\pageref{fig:emc}.
We prove the following characterisation of domain-ordered AC-solvable patterns.

\begin{thm} \label{thm:domOrder}
If $P$ is a simple pattern whose only structure is a partial order on
its domains, then $P$ is AC-solvable if and only if $P$ occurs in the
pattern LX (Figure~\ref{fig:lx} on page~\pageref{fig:lx}), or the pattern EMC$^{-}$, or the
pattern invDom(EMC$^{-}$).
\end{thm}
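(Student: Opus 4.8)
The plan is to mirror the structure of Theorem~\ref{thm:varOrder} but working with the domain order instead of the variable order. The base case is again Theorem~\ref{thm:noOrder}: if $P$ is a simple pattern whose only structure is a partial domain order, then $P^{-}=\mbox{unordered}(P)$ must be AC-solvable for $P$ to be AC-solvable (by Lemma~\ref{lem:occ-unordered} and Lemma~\ref{lem:hered}), so $P^{-}$ must occur in LX or in unordered(BTP). The ``if'' direction is the easy half: LX already has no structure so it is covered by Theorem~\ref{thm:lx}, and we must check that EMC$^{-}$ is AC-solvable. For this we note that unordered(EMC) = unordered(BTP) with one extra edge, and we verify that EMC$^{-}$ occurs in the pattern EMC of Figure~\ref{fig:emc} (since EMC$^{-}$ is EMC with the variable order $y<z$ deleted, EMC$^{-}$ occurs in EMC trivially by Lemma~\ref{lem:occ-unordered}); hence by Theorem~\ref{thm:emc} and Lemma~\ref{lem:hered} EMC$^{-}$ is AC-solvable, and by Lemma~\ref{lem:inv} so is invDom(EMC$^{-}$). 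Finally any $P$ occurring in one of LX, EMC$^{-}$, invDom(EMC$^{-}$) is AC-solvable by Lemma~\ref{lem:hered}.

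\textbf{The ``only if'' direction (case analysis).} Here I would split according to the number of negative edges of $P$, exactly as in the proof of Theorem~\ref{thm:noOrder}. If $P$ has no negative edge it is the positive triangle, which is not AC-solvable. If $P$ has exactly one negative edge, then $P^{-}$ is (a subpattern of) the triangle with one negative and two positive edges; I would show that, after adding \emph{any} partial domain order, either the result still occurs in LX, or else one of the ``bad'' patterns of Figure~\ref{fig:badpatterns} (in particular (h), (i), (l), (n), (o), (r), (s), whichever the domain comparisons force) occurs in $P$ — making $P$ non-AC-solvable by Lemma~\ref{lem:hered}. The delicate sub-case is two negative edges. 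If the two negative edges meet at a point, then $P^{-}$ must occur in LX (since it does not occur in unordered(BTP)); I would argue that putting essentially any domain comparison on $P$ produces a pattern in which one of Figure~\ref{fig:badpatterns}(f),(h),(i),(n),(o),(q),(r),(s) occurs, so the only AC-solvable patterns of this shape are those occurring in LX with no domain order, i.e. occurring in LX. If the two negative edges do not meet, then $P^{-}$ occurs in unordered(BTP); here unordered(EMC) enters because EMC$^{-}$ is precisely unordered(BTP) with a domain order $\alpha>\beta$, $\gamma>\delta$ plus one extra positive edge, and I would show that adding a domain order to a simple pattern refining unordered(BTP) either yields a pattern occurring in EMC$^{-}$ or invDom(EMC$^{-}$), or forces one of the bad patterns of Figure~\ref{fig:badpatterns} (such as (b),(h),(i),(k),(l),(m),(o)) to occur.

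\textbf{Main obstacle.} The hard part will be the two-negative-edges-not-meeting case: one must carefully enumerate the ways a simple pattern can refine unordered(BTP) (extra points in the ``$z$'' variable, extra positive edges forced by unmergeability and by the no-dangling-point requirement) and then, for each resulting unordered pattern and each possible partial domain order on it, decide whether it occurs in EMC$^{-}$, in invDom(EMC$^{-}$), or contains one of the forbidden sub-patterns. This is the step where the pattern EMC genuinely earns its place in the classification, and where the largest number of cases must be checked; the bookkeeping is driven by the observation from Section~\ref{sec:5.1} that any simple AC-solvable pattern has at most three variables, at most two points per variable, and at most two negative edges (not between the same pair of variables), which keeps the enumeration finite and manageable. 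The remaining cases (zero or one negative edge, or two meeting negative edges) are routine once the relevant bad patterns of Figure~\ref{fig:badpatterns} are identified, by the same reasoning as in Theorems~\ref{thm:noOrder} and~\ref{thm:varOrder}.
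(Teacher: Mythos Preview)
Your ``if'' direction is fine, but the ``only if'' direction has a genuine gap in its very first step. You write that $P^{-}=\mbox{unordered}(P)$ must be AC-solvable and therefore, by Theorem~\ref{thm:noOrder}, must occur in LX or in unordered(BTP). The AC-solvability of $P^{-}$ is correct, but Theorem~\ref{thm:noOrder} applies only to \emph{simple} unordered patterns, and $P^{-}$ need not be simple. The analogue of Lemma~\ref{lem:P<} fails for domain order: a point that is non-dangling in $P$ solely because it is ordered by $<_D$ becomes dangling in $P^{-}$. Concretely, take $P=\mbox{EMC}^{-}$ itself: the point $\beta$ has a single positive edge $(\beta,\gamma)$ and is non-dangling only because of $\alpha>\beta$; in unordered(EMC$^{-}$) it is dangling, so unordered(EMC$^{-}$) is not simple, and in fact it does not occur in unordered(BTP) (the two non-mergeable pairs $\{\alpha,\beta\}$ and $\{\gamma,\delta\}$ cannot both be accommodated). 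The same phenomenon already bites in the one-negative-edge case: the pattern MC of Figure~\ref{fig:btpmc}(b) is a simple domain-ordered pattern with one negative edge and four points, yet your claim ``$P^{-}$ is (a subpattern of) the triangle with one negative and two positive edges'' would exclude it.

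The paper's proof confronts this head-on, remarking that ``we have more cases to consider than in Theorem~\ref{thm:noOrder} since patterns may now contain points, such as $\beta$ in the pattern EMC\ldots which would be a dangling point without the domain order $\alpha>\beta$.'' It therefore does not reduce to Theorem~\ref{thm:noOrder} but redoes the enumeration under the weaker constraints of Section~\ref{sec:5.1} (at most three variables, at most two points per variable, at most two negative edges), allowing for these extra order-protected points. Your overall case split (zero/one/two negative edges, meeting or not) is the right skeleton, and the bad patterns you list from Figure~\ref{fig:badpatterns} are largely the relevant ones, but within each case you must enumerate the larger family of simple domain-ordered patterns rather than only those whose unordered version is already simple.
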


\begin{proof}
As in the proofs of Theorem~\ref{thm:noOrder} and~\ref{thm:varOrder},
we only need to consider patterns on at most three variables, with at
most two points per variable and with either no negative edges, one
negative edge, two negative edges that meet or two negative edges
that do not meet. However, we have more cases to consider than in
Theorem~\ref{thm:noOrder} since patterns may now contain points, such
as $\beta$ in the pattern EMC shown in Figure~\ref{fig:emc} on
page~\pageref{fig:emc}, which
would be a dangling point without the domain order $\alpha > \beta$.

The only pattern with no negative edges and no mergeable points is
the triangle of positive edges shown in
Figure~\ref{fig:badpatterns}(p) on page~\pageref{fig:badpatterns} which is not AC-solvable.

Let $P$ be a simple pattern with one negative edge, at most two
points per variable and just two variables. If neither of the
patterns shown in Figure~\ref{fig:badpatterns}(o) on
page~\pageref{fig:badpatterns} and
Figure~\ref{fig:badpatterns}(r) on page~\pageref{fig:badpatterns} occur in $P$, then $P$ occurs in the
pattern EMC$^{-}$ or in invDom(EMC$^{-}$).

Let $P$ be a simple pattern on three variables $x,y,z$, with one
negative edge $(a,c)$ (between variables $y$ and $z$) and at most two
points per variable. If the domain of $y$ (respectively, $z$)
contains another point $b$ (respectively, $d$), then for $a,b$
(respectively, $c,d$) to be non-mergeable, there must be a positive
edge $(b,c)$ (respectively, $(a,d)$). Any two distinct points in the
domain of variable $x$ would be mergeable, so we can assume that $P$
has exactly one point in the domain of $x$. Since this point $e$ is
not a dangling point, it must be connected by positive edges to at
least two points. If none of the patterns in
Figure~\ref{fig:badpatterns}(a), Figure~\ref{fig:badpatterns}(p) and
Figure~\ref{fig:badpatterns}(s) on page~\pageref{fig:badpatterns} occurs in $P$, then $e$ must belong
to the two positive edges $(a,e)$ and $(c,e)$, and no others. If
neither of the patterns in Figure~\ref{fig:badpatterns}(o) or
Figure~\ref{fig:badpatterns}(r) on page~\pageref{fig:badpatterns} occurs in $P$, then we can deduce
that $P$ occurs in the pattern EMC$^{-}$ or in invDom(EMC$^{-}$).

Let $P$ be a simple pattern on three variables with at most two
points per variable and with two negative edges that meet. If $P$ has
two points $a,b$ in the domain of the same variable together with an
ordering $a<b$, then one of the patterns in
Figure~\ref{fig:badpatterns}(h) and Figure~\ref{fig:badpatterns}(i)
on page~\pageref{fig:badpatterns}
(or their domain-inversed version) occurs in $P$, and hence $P$
cannot be AC-solvable. This leaves only the case of unordered
patterns $P$. By the proof (and in particular the part that deals
with patterns containing exactly two negative edges that meet) of
Theorem~\ref{thm:noOrder}, we can deduce that if $P$ is AC-solvable
then it occurs in the pattern LX.

Let $P$ be a simple pattern on three variables $x,y,z$ with at most
two points per variable and with two negative edges $(a,c)$ (between
variables $x$ and $y$) and $(d,f)$ (between variables $y$ and $z$)
that do not meet (i.e. $c \neq d$). If $P$ contains only these four
points $a,c,d,f$, then it necessarily occurs in EMC$^{-}$. If $P$
contains exactly five points, then without loss of generality, we can
assume that there is a point $b \neq a$ in the domain of $x$. Since
$a,b$ are not mergeable, there must be a positive edge $(b,c)$ in
$P$. If $P$ has a positive edge $(b,f)$, then the pattern in
Figure~\ref{fig:badpatterns}(n) on page~\pageref{fig:badpatterns} occurs in $P$; if $P$ has a positive
edge $(b,d)$ then the pattern in Figure~\ref{fig:badpatterns}(s)
on page~\pageref{fig:badpatterns}
occurs in $P$. Now, if the pattern in Figure~\ref{fig:badpatterns}(o)
on page~\pageref{fig:badpatterns}
does not occur in $P$, then whatever ordering is placed on $a,b$ and
$c,d$, $P$ occurs in the pattern EMC$^{-}$ or in invDom(EMC$^{-}$).
If $P$ contains exactly six points, then there must be points $b \neq
a$ in the domain of $x$ and $e \neq f$ in the domain of $z$. Since
both $a,b$ and $e,f$, are not mergeable, there must be positive edges
$(b,c)$ and $(e,d)$. If $P$ has a positive edge $(b,e)$, then the
pattern in Figure~\ref{fig:badpatterns}(b) on page~\pageref{fig:badpatterns} occurs in $P$; if $P$ has
a positive edge $(b,f)$ or $(a,e)$, then the pattern in
Figure~\ref{fig:badpatterns}(n) on page~\pageref{fig:badpatterns} occurs in $P$; if $P$ has a positive
edge $(b,d)$ or $(e,c)$, then the pattern in
Figure~\ref{fig:badpatterns}(s) on page~\pageref{fig:badpatterns} occurs in $P$. But then in all other
cases one of $b$ and $e$ is a dangling point unless $P$ has an order
on both $a,b$ and $e,f$. But this then implies that at least one of
the patterns in Figure~\ref{fig:badpatterns}(m) and
Figure~\ref{fig:badpatterns}(o) on page~\pageref{fig:badpatterns} occur in $P$. In all these cases, by
Lemma~\ref{lem:hered}, $P$ is not AC-solvable.
\end{proof}

\subsection{Characterising AC-solvable ordered patterns}

In this subsection we consider the most general case of simple
patterns $P$ which have a partial domain order and a partial variable
order.
We prove the following characterisation of  AC-solvable patterns with partial
orders on domains and variables.

\begin{thm} \label{thm:order}
If $P$ is a simple pattern 
with a partial order on its
domains and/or variables, then $P$ is AC-solvable if and only if $P$ occurs in
one of the patterns LX$^{<}$, EMC (Figure~\ref{fig:emc} on page~\pageref{fig:emc}),
BTP$^{vo}$, BTP$^{do}$ (Figure~\ref{fig:btp} on page~\pageref{fig:btp}), BTX
(Figure~\ref{fig:btx} on page~\pageref{fig:btx}) or BTI
(Figure~\ref{fig:bti} on page~\pageref{fig:bti}) (or versions of these patterns with inversed domain-order
and/or variable-order).
\end{thm}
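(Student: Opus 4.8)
The plan is to combine the four subsection characterisations (Theorems~\ref{thm:noOrder}, \ref{thm:varOrder}, \ref{thm:domOrder}) together with the positive results of Section~3 (Theorems~\ref{thm:emc}, \ref{thm:btx}, \ref{thm:bti}, \ref{thm:lx}) and the list of ``bad'' arc-consistent unsatisfiable instances from Section~\ref{sec:5.1}. The positive direction is immediate: if $P$ occurs in one of LX$^{<}$, EMC, BTP$^{vo}$, BTP$^{do}$, BTX, BTI (or an inversed version), then by Lemma~\ref{lem:occ-sup} we have CSP$_{\overline{SP}}(P) \subseteq$ CSP$_{\overline{SP}}(Q)$ for the relevant AC-solvable pattern $Q$, and since AC cannot introduce a forbidden pattern it follows that AC is a decision procedure for CSP$_{\overline{SP}}(P)$ as well; the AC-solvability of each of the six named patterns is exactly Theorems~\ref{thm:emc}, \ref{thm:btx}, \ref{thm:bti}, \ref{thm:lx} (together with Lemma~\ref{lem:inv} for the inversed versions, and the equivalence of BTP$^{vo}$ with BTP$^{do}$).

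For the hard (``only if'') direction I would argue by the same case split used throughout: by the discussion in Section~\ref{sec:5.1} (via pattern Figure~\ref{fig:badpatterns}(g) and the instances $I^{2COL}_3$, $I_5$), a simple AC-solvable pattern has at most three variables, at most two points per variable, and at most two negative edges, with two negative edges never sharing a pair of variables. So let $P$ be simple and AC-solvable with a partial domain order and/or a partial variable order. First I would handle the case where $P$ has \emph{no} variable order: then $P$ is a domain-ordered pattern and Theorem~\ref{thm:domOrder} already gives that $P$ occurs in LX, EMC$^{-}$ or invDom(EMC$^{-}$), hence a fortiori in EMC or its domain-inverse. Symmetrically, if $P$ has no domain order, Theorem~\ref{thm:varOrder} gives that $P$ occurs in LX$^{<}$, BTP$^{vo}$ or invVar(BTP$^{vo}$). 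The genuinely new work is the case where $P$ has \emph{both} a nonempty variable order and a nonempty domain order.

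For that remaining case I would proceed exactly as in the earlier proofs: split on the number and configuration of negative edges (zero, one, two meeting, two non-meeting). In each configuration, Theorem~\ref{thm:noOrder} already pins down unordered$(P)$ (it must occur in LX or in unordered(BTP)), and Theorem~\ref{thm:domOrder} restricts the possible domain orders, while Theorem~\ref{thm:varOrder} restricts the possible variable orders; the task is to check that the \emph{combination} of an admissible variable order with an admissible domain order either yields a pattern occurring in one of EMC, BTX, BTI, BTP$^{vo}$, BTP$^{do}$, LX$^{<}$ (or an inversed version), or else causes one of the bad patterns of Figure~\ref{fig:badpatterns} to occur — in which case $P$ is not AC-solvable by Lemma~\ref{lem:hered}. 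Concretely: when unordered$(P)$ occurs in LX the only surviving ordered cases are those occurring in LX$^{<}$ (any order touching the meeting point of the two negatives is killed by Figure~\ref{fig:badpatterns}(e), (h), (i) and Lemma~\ref{lem:inv}); when unordered$(P)$ occurs in unordered(BTP) with a domain order added we get the BTP$^{do}$/EMC family, and adding a variable order on top produces (variants of) BTP$^{vo}$, BTX or BTI according to which of the three variables are ordered, the ``forbidden'' orderings being exactly those exhibiting patterns Figure~\ref{fig:badpatterns}(f), (j)--(m). The main obstacle is purely combinatorial bookkeeping: enumerating, up to the symmetries of Lemma~\ref{lem:inv} and the pattern-equivalences of Section~2, all admissible ways of simultaneously orienting the (at most two) domain-comparable pairs and the (at most three) variables on a three-variable, $\le 6$-point skeleton, and verifying for each that it is subsumed by one of the six target patterns or else contains one of the nineteen bad subpatterns; this is the same style of exhaustive argument as in Theorem~\ref{thm:domOrder} but with one extra orientation degree of freedom, so care is needed to ensure no admissible combination is overlooked.
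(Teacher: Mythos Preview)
Your overall plan mirrors the paper's, but there is a genuine technical gap in the ``only if'' direction. You propose to invoke Theorem~\ref{thm:noOrder} on unordered$(P)$ and Theorem~\ref{thm:varOrder} on the domain-order-stripped version of $P$. Both theorems, however, apply only to \emph{simple} patterns, and stripping the domain order from a simple $P$ can destroy simplicity: a point that was non-dangling solely because of a domain-order relation (such as $\beta$ in EMC, held in place only by $\alpha>\beta$) becomes dangling once that order is removed. So neither unordered$(P)$ nor the variable-order-only version of $P$ need be simple, and you cannot appeal to Theorems~\ref{thm:noOrder} or~\ref{thm:varOrder} as stated.

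The paper sidesteps this by stripping only the \emph{variable} order to obtain $P^{-}$. Since the definitions of dangling point and mergeability are insensitive to variable order, $P^{-}$ remains simple whenever $P$ is, and Theorem~\ref{thm:domOrder} then forces $P^{-}$ to occur in LX, EMC$^{-}$ or invDom(EMC$^{-}$). From there the paper performs exactly the negative-edge case split you describe (none, one, two meeting, two non-meeting), examining for each skeleton which variable orders can be consistently added back: the admissible ones land in LX$^{<}$, BTP$^{vo}$, BTP$^{do}$, EMC, BTX or BTI (up to order inversion), and the inadmissible ones are eliminated via the bad subpatterns of Figure~\ref{fig:badpatterns} (notably (c), (e), (f), (j), (k), (l)). Your proposal would go through after this repair; the substantive content is the combinatorial enumeration you anticipate, but it must be anchored in the single reduction through Theorem~\ref{thm:domOrder} rather than three parallel reductions.
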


\begin{proof}
Let $P^{-}$ be the same pattern as $P$ but without the partial order
on its variables. If $P$ is AC-solvable, then, by
Lemma~\ref{lem:occ-unordered}, $P^{-}$ occurs in $P$, and hence, by
Lemmas~\ref{lem:occ-sup} and~\ref{lem:not-ac}, $P^-$ is also
AC-solvable. Thus, by Theorem~\ref{thm:domOrder}, $P^-$ must occur in
either LX, EMC$^{-}$ or invDom(EMC$^{-}$). We consider the four
cases: no negative edges, one negative edge, two negative edges that
meet, two negative edges that do not meet in $P$. We have already
seen in the proofs of Theorems~\ref{thm:noOrder}
and~\ref{thm:domOrder} that there are no simple AC-solvable patterns
with only positive edges, so there remain three cases to consider.

\begin{figure}   
\centering
\begin{picture}(380,100)(0,0)

\put(0,0){\begin{picture}(180,100)(0,0) \put(10,60){\usebox{\varone}}
\put(50,10){\usebox{\varone}} \put(90,50){\usebox{\vartwo}}
\dashline{5}(20,80)(100,80) \put(20,80){\line(4,-1){80}}
\put(20,80){\line(4,-5){40}} \put(60,30){\line(4,5){40}}
\put(60,10){\makebox(0,0){$j$}} \put(20,57){\makebox(0,0){$i$}}
\put(100,42){\makebox(0,0){$k$}}
  \put(130,65){\makebox{$c>d$}}
  \put(5,80){\makebox(0,0){$a$}}
  \put(115,80){\makebox(0,0){$c$}}
  \put(115,60){\makebox(0,0){$d$}}  \put(15,5){\makebox(0,0){(a)}}
\end{picture}}

\put(200,0){\begin{picture}(180,100)(0,0)
\put(10,50){\usebox{\vartwo}} \put(50,10){\usebox{\varone}}
\put(90,50){\usebox{\vartwo}} \dashline{5}(20,80)(100,80)
\put(20,80){\line(4,-1){80}} \put(20,60){\line(4,1){80}}
\put(20,80){\line(4,-5){40}} \put(60,30){\line(4,5){40}}
\put(60,10){\makebox(0,0){$j$}} \put(20,42){\makebox(0,0){$i$}}
\put(100,42){\makebox(0,0){$k$}}
  \put(130,60){\makebox{\shortstack{$a>b$ \\ $c>d$}}}
  \put(5,80){\makebox(0,0){$a$}}
  \put(5,60){\makebox(0,0){$b$}}
  \put(115,80){\makebox(0,0){$c$}}
  \put(115,60){\makebox(0,0){$d$}}  \put(15,5){\makebox(0,0){(b)}}
\end{picture}}

\end{picture}
\caption{One-negative-edge patterns occurring in EMC$^{-}$.}
\label{fig:sub1emc}
\end{figure}
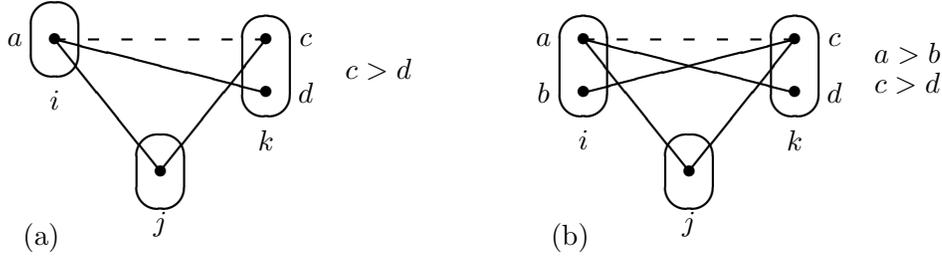  

\begin{figure}   
\centering
\begin{picture}(380,100)(0,0)

\put(0,0){\begin{picture}(180,100)(0,0) \put(10,60){\usebox{\varone}}
\put(50,10){\usebox{\varone}} \put(90,50){\usebox{\vartwo}}
\dashline{5}(20,80)(100,80)  \dashline{5}(60,30)(100,60)
\put(20,80){\line(4,-1){80}} \put(20,80){\line(4,-5){40}}
\put(60,30){\line(4,5){40}} \put(60,10){\makebox(0,0){$j$}}
\put(20,57){\makebox(0,0){$i$}}  \put(100,42){\makebox(0,0){$k$}}
  \put(130,65){\makebox{$c>d$}}
  \put(5,80){\makebox(0,0){$a$}}
  \put(115,80){\makebox(0,0){$c$}}
  \put(115,60){\makebox(0,0){$d$}}  \put(15,5){\makebox(0,0){(a)}}
\end{picture}}

\put(200,0){\begin{picture}(180,100)(0,0)
\put(10,50){\usebox{\vartwo}} \put(50,10){\usebox{\varone}}
\put(90,50){\usebox{\vartwo}} \dashline{5}(20,80)(100,80)
\dashline{5}(60,30)(100,60) \put(20,80){\line(4,-1){80}}
\put(20,60){\line(4,1){80}} \put(20,80){\line(4,-5){40}}
\put(60,30){\line(4,5){40}} \put(60,10){\makebox(0,0){$j$}}
\put(20,42){\makebox(0,0){$i$}}  \put(100,42){\makebox(0,0){$k$}}
  \put(130,60){\makebox{\shortstack{$a>b$ \\ $c>d$}}}
  \put(5,80){\makebox(0,0){$a$}}
  \put(5,60){\makebox(0,0){$b$}}
  \put(115,80){\makebox(0,0){$c$}}
  \put(115,60){\makebox(0,0){$d$}}  \put(15,5){\makebox(0,0){(b)}}
\end{picture}}

\end{picture}
\caption{Two-negative-edge patterns occurring in EMC.}
\label{fig:sub2emc}
\end{figure}
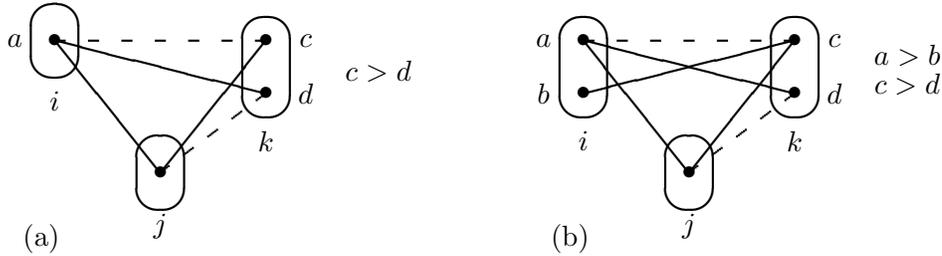  


Let $P$ be a simple pattern with exactly one negative edge. If $P^-$
(which also has one negative edge) occurs in LX and is simple (and
hence has no dangling points) then it must occur in a triangle $T$
consisting of one negative and two positive edges. This triangle
pattern $T$ occurs in BTP or in invVar(BTP) whatever ordering we
place on its three variables and hence the same is true of $P$. If
$P^-$ (which has one negative edge and is unmergeable) occurs in
EMC$^{-}$ but not in the triangle $T$, then $P^-$ occurs in one of
the two patterns shown in Figure~\ref{fig:sub1emc} on page~\pageref{fig:sub1emc} (or the
domain-inversed versions of these patterns) and includes the point
$d$ (in the case corresponding to Figure~\ref{fig:sub1emc}(a) on
page~\pageref{fig:sub1emc}) or the
points $b$ and $d$ (in the case corresponding to
Figure~\ref{fig:sub1emc}(b)) on page~\pageref{fig:sub1emc}. For $d$ (respectively, $b$) not to be a
dangling point in $P$, we must have the order $c>d$ (respectively
$a>b$) in $P$. If $P^-$ occurs in the pattern in
Figure~\ref{fig:sub1emc}(a) on page~\pageref{fig:sub1emc} and $P$ has the variable order $i,j<k$
(or $i<k$ or $j<k$), then $P$ occurs in BTP$^{do}$. If $P^-$ occurs
in the pattern in Figure~\ref{fig:sub1emc}(a) on page~\pageref{fig:sub1emc} and $P$ has the
variable order $i<j$, then $P$ occurs in invVar(BTI). If $P^-$ occurs
in the pattern in Figure~\ref{fig:sub1emc}(a) on page~\pageref{fig:sub1emc} and $P$ includes the
variable order $i<j,k$ and domain order $c>d$, then the pattern in
Figure~\ref{fig:badpatterns}(j) on page~\pageref{fig:badpatterns} occurs in invVar($P$). If $P^-$
occurs in the pattern in Figure~\ref{fig:sub1emc}(a) on
page~\pageref{fig:sub1emc} and $P$ includes
the variable order $i,k<j$ and domain order $c>d$, then the pattern
in Figure~\ref{fig:badpatterns}(c) on page~\pageref{fig:badpatterns} occurs in $P$. This covers all
variable orderings of $P$ (after taking into account the
variable-inversed versions of each case) when $P^-$ occurs in the
pattern in Figure~\ref{fig:sub1emc}(a) on page~\pageref{fig:sub1emc}. Now consider the case in
which $P^-$ occurs in the pattern in Figure~\ref{fig:sub1emc}(b) on
page~\pageref{fig:sub1emc}. If
$P$ includes the variable order $j<k$ (or $j<i$) together with the
domain order $a>b$ and $c>d$, then the pattern in
Figure~\ref{fig:badpatterns}(k) on page~\pageref{fig:badpatterns} occurs in $P$. If $P$ has the
variable order $i<k$ then $P$ occurs in EMC. Thus all
one-negative-edge AC-solvable patterns occur in BTP, BTI or EMC (or
their domain and/or variable-inversed versions).

Let $P$ be a simple pattern with two negative edges that meet at a
point. $P^-$ necessarily occurs in the pattern LX. Let $j$ be the
variable of $P$ where the two negative edges meet, and let $i,k$ be
the other two variables. If $P$ includes the variable order $j<k$
(or, by symmetry, the order $j<i$), then the pattern in
Figure~\ref{fig:badpatterns}(e) on page~\pageref{fig:badpatterns} occurs in $P$ and hence $P$ is not
AC-solvable. If $P$ has the variable order $i<k$, then $P$ occurs in
the pattern $LX^{<}$ (the version of LX shown in Figure~\ref{fig:lx}
on page~\pageref{fig:lx}
together with the variable order $y<z$). By symmetry, we have covered
all possible cases.

Finally, let $P$ be a simple pattern with two negative edges that do
not meet at a point. $P^-$ necessarily occurs in the pattern EMC. We
distinguish two distinct cases: (1) $P^-$ occurs in the pattern in
Figure~\ref{fig:sub2emc}(a) on page~\pageref{fig:sub2emc} or (2) $P^-$ occurs in the pattern
 in Figure~\ref{fig:sub2emc}(b) on page~\pageref{fig:sub2emc} and includes the point $b$ together with the order
$a>b$ (otherwise $b$ would be a dangling point). We first consider
case (1). If $P$ includes the order $c>d$ and $i<j$, then the
(inversed domain-order version of the) pattern in
Figure~\ref{fig:badpatterns}(l) on page~\pageref{fig:badpatterns} occurs in $P$ and hence $P$ is not
AC-solvable. If $P$ includes the order $i<k<j$, then the pattern in
Figure~\ref{fig:badpatterns}(f) on page~\pageref{fig:badpatterns} occurs in $P$. If $P$ has the
variable order $i<j<k$ and no domain order, then $P$ occurs in
BTP$^{vo}$. If $P$ has the variable order $i,j<k$ and the domain
order $c>d$, then $P$ occurs in BTP$^{do}$. All other patterns which
fall in case (1) are covered by symmetry. Now we consider the case
(2). First suppose that $P$ includes the domain order $c>d$ (as well
as $a>b$). If $P$ includes the variable order $i<j$, then the
(domain-inversed version of the) pattern in
Figure~\ref{fig:badpatterns}(l) on page~\pageref{fig:badpatterns} occurs in $P$. If $P$ includes the
variable order $j<k$, then the pattern in
Figure~\ref{fig:badpatterns}(k) on page~\pageref{fig:badpatterns} occurs in $P$. If $P$ has the
variable order $i<k$, then $P$ occurs in EMC. Now, suppose that $P$
does not include the domain order $c>d$. If $P$ includes the variable
order $i,j<k$, then the pattern in Figure~\ref{fig:badpatterns}(j)
on page~\pageref{fig:badpatterns}
occurs in $P$. If $P$ includes the variable order $i,k<j$, then the
pattern in Figure~\ref{fig:badpatterns}(c) on page~\pageref{fig:badpatterns} occurs in $P$. If $P$ has
the variable order $i<j,k$ (or $i<j$ or $i<k$) then $P$ occurs in
BTX. If $P$ has the variable order $j<k$ then $P$ occurs in BTI. By
symmetry we have covered all possible variable orderings of $P$ in
case (2).
\end{proof}

\section{Conclusion}

We have identified 4 new tractable classes of binary CSPs. Moreover,
we have given a characterisation of all simple partially-ordered
patterns decided by AC. We finish with open problems.

For future work, we plan to study the wider class of unmergeable
ordered patterns in which two points $a,b$ may be non-mergeable
simply because there is an order $a<b$ on them. In the present paper,
$a,b$ are mergeable unless they have different compatibilities with a
third point $c$.

Is there a way of giving a unified description of EMC, BTX and BTI, since to find a solution
after establishing arc consistency we use basically the same
algorithm? Any such generalisation will not be a simple forbidden
pattern by Theorem~\ref{thm:order}, but
there is possibly some other way of combining these patterns.

Are there interesting generalisations of these patterns to constraints of
arbitrary arity, valued constraints, infinite domains or QCSP? BTP has been
generalised to constraints of arbitrary arity~\cite{cooper14:cp-broken} as well
as to QCSPs~\cite{Gao11:aaai}. Max-closed constraints have been generalised to
VCSPs~\cite{Cohen06:complexitysoft}. Infinite domains is an interesting avenue
of future research because simple temporal constraints are binary max-closed~\cite{Dechter91:temporal}.

In this paper, we only focused on classes of CSP instances with
totally ordered domains (but defined by partially-ordered patterns).
However, the framework of forbidden patterns captures language-based
CSPs with partially-ordered domains, such as CSPs with a semi-lattice
polymorphism. In the future, we plan to investigate classes of CSP
instances with partially-ordered domains.

\paragraph{\bf Acknowledgement} We would like to thank the reviewers of both the
extended abstract of this paper~\cite{cz16:lics} and this full version for their very 
detailed comments and pertinent suggestions.



\bibliographystyle{plain}
\bibliography{ac}

\end{document}